\newcommand{\knote}[1]{{\color{red}[{\tiny Karthik: \bf #1}]\marginpar{\color{red}*}}}
\newtheorem{proposition}{Proposition}
\newtheorem{lemma}{Lemma}
\newtheorem{remark}{Remark}
\newtheorem{theorem}{Theorem}
\newtheorem{corollary}{Corollary}[lemma]
\newtheorem{thmcorollary}{Corollary}[theorem]
\newtheorem{conjecture}{Conjecture}
\newtheorem{claim}{Claim}
\newtheorem{question}{Question}
\newcommand{\nnreal}{\ensuremath{\mathbb{R}_{\geq 0}}\xspace}
\newcommand{\spanfunc}{\ensuremath{\mathsf{span}}\xspace}
\newcommand{\row}{\ensuremath{\mathsf{row}}\xspace}
\newcommand{\Rows}{\ensuremath{\mathsf{Rows}}\xspace}
\title{Polyhedral Aspects of Feedback Vertex Set \\and Pseudoforest Deletion Set\thanks{Corresponding Author: Karthekeyan Chandrasekaran, Affiliation: University of Illinois, Urbana-Champaign, E-mail Address: karthe@illinois.edu.}}
\author{Karthekeyan Chandrasekaran\thanks{University of Illinois, Urbana-Champaign, Email: \{karthe, smkulka2\}@illinois.edu. Supported in part by NSF grants CCF-1814613 and CCF-1907937.}
\and Chandra Chekuri\thanks{University of Illinois, Urbana-Champaign, Email: chekuri@illinois.edu. Supported in part by NSF grants CCF-1910149 and CCF-1907937.}
\and 
Samuel Fiorini\thanks{Universit\'{e} libre de Bruxelles, Email: sfiorini@ulb.ac.de}
\and Shubhang Kulkarni\footnotemark[2]
\and Stefan Weltge\thanks{Technische Universit\"at M\"unchen, Email: weltge@tum.de. Supported in part by  Deutsche Forschungsgemeinschaft (DFG, German Research Foundation), project number 451026932.}
}
\date{}
\newcommand{\strongdensitypolyhedron}{P_{\text{SD}}}
\newcommand{\weakdensitypolyhedron}{P_{\text{WD}}}
\newcommand{\cyclecoverpolyhedron}{P_{\text{cycle-cover}}}
\newcommand{\twopseudotreecoverpolyhedron}{P_{\text{2-PT-cover}}}
\newcommand{\orientationpolyhedron}{P_{\text{orient}}}
\newcommand{\projectedorientationpolyhedron}{Q_{\text{orient}}}
\newcommand{\distancebasedcyclecoverpolyhedron}{P_{\text{dist-cycle-cover}}}
\newcommand{\projecteddistancebasedcyclecoverpolyhedron}{Q_{\text{dist-cycle-cover}}}
\newcommand{\weakdensitypolyhedronforsubgraphs}{\ensuremath{{P}_{\text{WD-Subgraphs}}}\xspace}
\begin{document}
\maketitle
\begin{abstract}
  We consider the feedback vertex set problem in undirected graphs (FVS). The input to FVS is an undirected graph $G=(V,E)$ with non-negative vertex costs. The goal is to find a  minimum cost subset of vertices $S \subseteq V$ such that $G-S$ is acyclic. FVS is a well-known NP-hard problem and does not admit a $(2-\epsilon)$-approximation for any fixed $\epsilon > 0$ assuming the Unique Games Conjecture.   There are combinatorial $2$-approximation algorithms \cite{Bafna-Berman-Fujito95,BG96} and also primal-dual based $2$-approximations \cite{CHUDAK1998111,Fuj-matroid-FVS}. Despite the existence of these algorithms for several decades, there is no known polynomial-time solvable LP relaxation for FVS with a provable integrality gap of at most $2$. More recent work \cite{chekuri-madan16} developed a polynomial-sized LP relaxation for a more general problem, namely Subset FVS, and showed that its integrality gap is at most $13$ for Subset FVS, and hence also for FVS.

  Motivated by this gap in our knowledge, we undertake a polyhedral study of FVS and related problems. In this work, we formulate new integer linear programs (ILPs) for FVS whose LP-relaxation can be solved in polynomial time, and whose integrality gap is at most $2$. The new insights in this process also enable us to prove that the formulation in \cite{chekuri-madan16} has an integrality gap of at most $2$ for FVS. Our results for FVS are inspired by new formulations and polyhedral results for the closely-related pseudoforest deletion set problem (PFDS). Our formulations for PFDS are in turn inspired by a connection to the densest subgraph problem. We also conjecture an extreme point property for a LP-relaxation for FVS, and give evidence for the conjecture via a corresponding result for PFDS.
\end{abstract}

\newpage
\tableofcontents
\newpage

\section{Introduction}
\label{sec:intro}
We consider the feedback vertex set problem in undirected graphs. For
a graph $G=(V, E)$, a subset $U\subseteq V$ is a \emph{feedback vertex
  set} if $G-U$ is acyclic; in other words $U$ is a hitting set for
the cycles of $G$. In the feedback vertex set problem (FVS), the input
is an undirected graph $G=(V, E)$ with non-negative vertex costs
$c: V\rightarrow \R_{\ge 0}$ and the goal is to find a feedback vertex
set of minimum cost, i.e.,
$\min\left\{\sum_{u\in U} c(u): U\subseteq V\ \text{and $U$ is a feedback
  vertex set for $G$}\right\}$.  FVS is a fundamental vertex deletion
problem in the field of combinatorial optimization and appears in
Karp's list of 21 NP-Complete problems (as a decision version).  We
consider the approximability of FVS, in particular via polynomial-time
solvable polyhedral formulations. First, we mention some known lower
bounds. Via a simple reduction, the well-known weighted vertex cover
problem can be reduced to FVS in an approximation preserving
fashion. %\chandra{I wondered if we should capitalize problem names such a Vertex Cover, Feeback Vertex Set etc.} 
Hence, known results for the  vertex cover problem imply that there is no
polynomial-time $(2-\epsilon)$-approximation for every fixed constant
$\epsilon>0$ under the Unique Games Conjecture (UGC) \cite{KhotR08} and that there is no polynomial-time $(1.36-\epsilon)$-approximation for every fixed constant $\epsilon>0$ under the $P \neq NP$ assumption \cite{DinurS05}.

Approximation algorithms for the unweighted version of FVS (UFVS), i.e., when all
vertex costs are one, have been studied since 1980s.  An
$O(\log n)$-approximation for UFVS is implicit in the work of Erd\"os
and P\'osa \cite{ErdosP62}; Monien and Schulz \cite{MonienS81} seem to be the first ones
to explicitly study the problem, and obtained an
$O(\sqrt{\log n})$-approximation for UFVS. Bar-Yehuda, Geiger, Naor, and Roth \cite{BarYehudaGNR98} improved
the ratio for UFVS to $4$, and also noted that an $O(\log
n)$-approximation holds for FVS. Soon after that,
Bafna, Berman, and Fujito \cite{Bafna-Berman-Fujito95}, and independently Becker and Geiger \cite{BG96},
obtained $2$-approximation algorithms for FVS. While \cite{Bafna-Berman-Fujito95}
explicitly uses the local-ratio terminology, \cite{BG96} describes the algorithm
in a purely combinatorial fashion.
Although the algorithm in \cite{Bafna-Berman-Fujito95} is described via the
local-ratio method, the underlying LP relaxation is not obvious.
As observed in \cite{BarYehudaGNR98}, the natural hitting set LP relaxation for FVS
has an integrality gap of $\Theta(\log n)$.
Chudak, Goemans, Hochbaum, and Williamson \cite{CHUDAK1998111} described
exponential sized integer linear programming (ILP) formulations for FVS, and showed that the algorithms in \cite{Bafna-Berman-Fujito95,BG96} can be viewed as primal-dual algorithms
with respect to the LP relaxations of these formulations. This also established that
these LP relaxations have an integrality gap of at most $2$. 
Fujito \cite{Fuj-matroid-FVS} considered a unified generalization of FVS and vertex cover, namely matroidal FVS, and gave a $O(\log n)$-approximation for matroidal FVS via connections to submodular set cover. In the same work, Fujito formulated an exponential 
sized ILP for matroidal FVS, and designed a primal-dual algorithm with respect to its LP relaxation. He proved that the algorithm has an approximation
ratio of $2$ for a certain family of matroids. When specialized to FVS, we note that the ILP and the resulting primal-dual algorithm
in \cite{Fuj-matroid-FVS} is slightly different 
from the one in \cite{CHUDAK1998111} although they both yield $2$-approximations. %\chandra{Elaborated here.}

We recall the integer linear program (ILP) formulation and the
associated LP-relaxation from \cite{CHUDAK1998111}. For a graph $G=(V,
E)$ and a subset $S\subseteq V$, let $E[S]:=\left\{\left\{u,
v\right\}\in E: u, v\in S\right\}$ and $G[S]:=(S, E[S])$, and for a
vertex $v\in V$, let $d_S(v)$ denote the degree of $v$ in $G[S]$. We
will denote the following polyhedron as the strong density polyhedron
and the constraints describing the polyhedron as strong density
constraints\footnote{The use of the ``density'' terminology will be clear later when we discuss
the connection to the densest subgraph problem and its LP relaxation.}:
\begin{align}
    \strongdensitypolyhedron(G)
    &:=
    \left\{ x\in \R^{V}_{\ge 0}:\ \sum \nolimits_{u\in S}(d_S(u)-1)x_u\geq |E[S]| - |S| + 1 \ \forall S\text{ such that } E[S]\not = \emptyset
\right\}.
\end{align}
Chudak, Goemans, Hochbaum, and Williamson proved that strong density constraints are valid for FVS and that
the following is an integer linear programming formulation for FVS:
\begin{align}
\min\left\{\sum \nolimits_{u\in V}c_u x_u:\ x\in \strongdensitypolyhedron(G)\cap \Z^V\right\}. \tag{FVS-IP: SD} \label{FVS-IP: SD}
\end{align}
%$\min\{\sum_{u\in V}c_u x_u:\ x\in \strongdensitypolyhedron(G)\cap \Z^V\}$
They interpreted the local-ratio algorithm from \cite{Bafna-Berman-Fujito95} as a
primal-dual algorithm with respect to\footnote{An $\alpha$-approximation with respect to an LP $\min\{c^Tx: Ax\le b\}$ is an algorithm that returns an integral solution $z^*$ satisfying $Ax\le b$ such that $c^Tz^*\le \alpha \text{OPT}_{\text{LP}}$, where $\text{OPT}_{\text{LP}}$ is the optimum objective value of the LP.} the LP-relaxation of (\ref{FVS-IP: SD}) that we explicitly write below:
\begin{align}
    \min\left\{\sum \nolimits_{u\in V}c_u x_u:\ x\in \strongdensitypolyhedron(G)\right\}. \tag{FVS-LP: SD} \label{FVS-LP: SD}
\end{align}
%$\min\{\sum_{u\in V}c_u x_u:\ x\in \strongdensitypolyhedron(G)\}$

It is, however, not known whether (\ref{FVS-LP: SD}) can be solved in
polynomial time. Equivalently, it is open to design a polynomial-time
separation oracle for the family of strong density constraints.
Moreover, there has been no other polynomial-time solvable LP
relaxation through which one could obtain a $2$-approximation for FVS.
This status leads to the following natural question:

\begin{question}
  \label{q:intro}
  Does there exist an ILP formulation for FVS whose LP-relaxation can be solved in polynomial time and has integrality gap at most $2$?
\end{question}

\iffalse
\begin{center}
\noindent\fbox{%
    \parbox{5.5in}{%
        \textbf{Question.} Does there exist an ILP formulation for FVS whose LP-relaxation can be solved in polynomial time and has integrality gap at most $2$?
    }%
}
\end{center}
%\vspace{1mm}
\fi

\iffalse
\begin{center}
\text{\textbf{Question.} Are there integer linear programs  for FVS}\\
\text{whose LP-relaxation can be solved in polynomial time and have integrality gap at most $2$?}
\end{center}
\fi

%Given this status, it would be interesting to formulate integer linear programs for FVS whose LP-relaxations are solvable in polynomial time and also have integrality gap at most $2$.
%polynomial-time solvable LP-relaxations with integrality gap at most $2$ for both FVS and PFDS.
%Our work is motivated by a question raised by Fiorini.

The lack of solvable LP relaxations for FVS with small integrality gap has also been a stumbling
block for the design of approximation algorithms for a generalization
of FVS called the subset feedback vertex set problem (Subset-FVS) \cite{EvenNSZ00}: the input is a
vertex-weighted graph $G$ and a terminal set $T \subseteq V$, and the goal is
to remove a minimum cost subset of vertices $S$ to ensure that $G-S$ has
no cycle containing a terminal $t \in T$. There is an $8$-approximation for
Subset-FVS \cite{EvenNZ00} and this is based on a complex algorithm that combines 
combinatorial and LP-based techniques. Chekuri and Madan \cite{chekuri-madan16} formulated a polynomial-sized integer linear
program for Subset-FVS and showed that the integrality gap of its
LP-relaxation is at most $13$. They explicitly raised the question of
whether the integrality gap of their formulation is better for FVS; in fact
it is open whether their formulation's integrality gap is at most $2$ for Subset-FVS.

In this work, we provide an affirmative answer to
Question~\ref{q:intro} by undertaking a polyhedral study of FVS and a
closely related problem, namely the pseudoforest deletion set problem
that will be described shortly. In addition to formulating
polynomial-time solvable LP relaxations with small integrality gap, it
is also of interest to find algorithms that can round fractional
solutions to the LP relaxations, and to understand properties of extreme points of
the corresponding polyhedra. Based on several interrelated technical
considerations, we conjecture the following property for
the strong density polyhedron:

\begin{conjecture}\label{conj:strong-density-extreme-point}
Let $G=(V, E)$ be a graph that contains a cycle. For every extreme
point $x$ of the polyhedron $\strongdensitypolyhedron(G)$, there
exists a vertex $u\in V$ such that $x_u\ge 1/2$.
\end{conjecture}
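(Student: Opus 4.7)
The plan is a proof by contradiction: suppose $x$ is an extreme point of $\strongdensitypolyhedron(G)$ with $x_u < 1/2$ for every $u \in V$, and let $V_+ := \{u \in V : x_u > 0\}$. Since $x$ is extreme, there is a family $\mathcal{T}$ of subsets $S \subseteq V$ with $E[S] \neq \emptyset$ whose strong density constraints are tight at $x$ and whose coefficient vectors, restricted to $V_+$, form a basis of $\mathbb{R}^{V_+}$. The strategy is to uncross $\mathcal{T}$ into a laminar family (using the $x_u < 1/2$ hypothesis in a crucial way) and then to exhibit a feasible perturbation supported on $V_+$ in the common kernel of the tight rows, contradicting extremality.

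The central technical step is the uncrossing lemma. For two tight sets $S_1, S_2 \in \mathcal{T}$ with $E[S_1 \cap S_2] \neq \emptyset$, write $A := S_1 \setminus S_2$, $B := S_1 \cap S_2$, $C := S_2 \setminus S_1$, and let $E(A, C)$ denote edges of $G$ with one endpoint in $A$ and the other in $C$. Counting edges gives $\mathrm{RHS}(S_1 \cup S_2) + \mathrm{RHS}(B) = \mathrm{RHS}(S_1) + \mathrm{RHS}(S_2) + |E(A, C)|$, and a parallel degree count yields
\[
\mathrm{LHS}(S_1 \cup S_2) + \mathrm{LHS}(B) = \mathrm{LHS}(S_1) + \mathrm{LHS}(S_2) + \sum_{uv \in E(A, C)}(x_u + x_v),
\]
where the key cancellation is that for $u \in B$ the coefficient sums agree on both sides, while for $u \in A$ (respectively $u \in C$) the excess coefficient is exactly $|N(u) \cap C|$ (respectively $|N(u) \cap A|$). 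Subtracting and using the tightness of $S_1$ and $S_2$ yields the identity
\[
\mathrm{slack}(S_1 \cup S_2) + \mathrm{slack}(B) \;=\; \sum_{uv \in E(A, C)}(x_u + x_v - 1).
\]
Under the hypothesis $x_u < 1/2$, every summand on the right is strictly negative while the slacks on the left are nonnegative, so $E(A, C) = \emptyset$ and both $S_1 \cup S_2$ and $B$ are tight. The vanishing of $E(A, C)$ further gives the coefficient identity $v_{S_1} + v_{S_2} = v_{S_1 \cup S_2} + v_B$ in $\mathbb{R}^V$. An iterative uncrossing procedure, exchanging crossing tight pairs for their union and intersection while expanding $\mathcal{T}$ as needed to keep a spanning family, lets us assume $\mathcal{T}$ is a laminar family whose coefficient vectors span $\mathbb{R}^{V_+}$.

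The concluding step is to derive a contradiction from this laminar structure. Let $S^* \in \mathcal{T}$ be a minimal element. By the uncrossing conclusion, no edge of $G$ crosses between $S^*$ and any disjoint laminar sibling, and the only constraint in $\mathcal{T}$ fully supported on $S^*$ is that of $S^*$ itself. A careful counting of degrees of freedom on $S^* \cap V_+$, combined with how the single $S^*$-constraint interacts with the ancestor constraints, should exhibit a nonzero direction $y$ in the common kernel of every tight row, contradicting extremality unless some $x_u \ge 1/2$. This final local analysis is the main obstacle: converting the laminar structure into an explicit perturbation appears to require auxiliary structural properties of minimal tight sets (for instance, $2$-vertex-connectedness of $G[S^*]$ or a combinatorial description of minimal tight sets). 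A natural warm-up is to first carry out this program in the PFDS setting, whose pseudoforest-density constraints have a simpler form, and then lift the argument to the richer strong-density polyhedron.
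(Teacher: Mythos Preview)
The statement you are trying to prove is \emph{Conjecture~1} in the paper; the authors explicitly say they were unable to resolve it and instead prove an analogue for the weak density polyhedron (Theorem~3) with the weaker bound $1/3$. So there is no ``paper's own proof'' to compare against, and the gap you acknowledge in your final paragraph is exactly why the statement remains open.

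Your uncrossing step is essentially the paper's Lemma~5 (for $\weakdensitypolyhedron$): under $x_u<1/2$ the slack function is supermodular, tight sets uncross, there are no $A$--$C$ edges, and the row identity $\row(S_1)+\row(S_2)=\row(S_1\cup S_2)+\row(S_1\cap S_2)$ holds. Two remarks. First, your computation silently needs the intersection $B=S_1\cap S_2$ to carry a strong-density constraint, i.e.\ $E[B]\neq\emptyset$; when $B\neq\emptyset$ but $E[B]=\emptyset$ the identity no longer compares two valid constraints, and this case must be handled separately before you can claim a laminar (let alone chain) basis. Second, for $\strongdensitypolyhedron$ one does \emph{not} obtain the paper's key ``tight sets overlap'' property that forces a \emph{chain} basis: with the extra $+1$ on the right-hand side, two disjoint tight sets $A,B$ yield $\mathrm{slack}(A\cup B)=1-\sum_{uv\in E(A,B)}(1-x_u-x_v)$, which need not be negative, so a laminar family is the most you can expect.

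The ``warm-up'' you propose---carrying this program out for PFDS first---is precisely what the paper does in Section~5, and there the counting argument only yields $x_u\ge 1/3$ at some vertex (and the bound is tight; see Figure~2). So your plan, as written, will reproduce Theorem~3 but will not reach $1/2$. Getting from $1/3$ to $1/2$ for $\strongdensitypolyhedron$ would require exploiting the extra $+1$ on the right-hand side in a way that goes beyond the chain/laminar counting; nothing in your outline indicates how to do this, and the paper offers no hint either. In short: your uncrossing is correct and matches the paper's technique for the weak-density analogue, but the step you flag as ``the main obstacle'' is the entire content of the conjecture.
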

%Although the conjecture is not directly helpful from an algorithmic perspective, it is appealing from a structural perspective.
This conjecture would lead to an alternative proof that the
integrality gap of the LP-relaxation (\ref{FVS-LP: SD}) is at most $2$
via iterative rounding (instead of the primal-dual technique).
Although we were unable to resolve this conjecture for the strong
density polyhedron, we were able to show that a variant of the
conjecture holds for a \emph{weak density polyhedron} (to be defined
later). The weak density polyhedron closely resembles the strong
density polyhedron and is associated with an ILP formulation of the
pseudoforest deletion set problem.
%Although we were unable to resolve this conjecture, we discovered strong connections between the strong density polyhedron and a closely related \emph{weak density polyhedron} (to be defined later) that is associated with an ILP formulation for the pseudoforest deletion problem and were also able to show extreme point properties for the weak density polyhedron.
We also discuss extreme point properties for other formulations later in the paper.

\paragraph{Pseudoforest Deletion Set Problem (PFDS).} A connected graph is a \emph{pseudotree} if
it has exactly one cycle; in other words there is an edge whose removal
results in a spanning tree. A graph is a \emph{pseudoforest} if every connected component is either acyclic or a pseudotree.  For a graph $G=(V, E)$, a
subset $U\subseteq V$ is a \emph{pseudoforest deletion set} if $G-U$
is a pseudoforest. In the pseudoforest deletion set problem (PFDS),
the input is an undirected graph $G=(V, E)$ with non-negative vertex
costs $c: V\rightarrow \R_{\ge 0}$ and the goal is to find a
pseudoforest deletion set of minimum cost, i.e., $\min \left\{\sum_{u\in U}
c(u): U\subseteq V\ \text{and $U$ is a pseudoforest deletion set for
  $G$}\right\}$. Intuitively, one can see that PFDS is closely related to FVS: 
we note that a feasible solution for FVS in a given graph $G$ is a feasible
solution to the PFDS instance on $G$. Also, finding an FVS in a graph that is
a pseudoforest is easy: for each connected component that is a pseudotree, we remove
the cheapest vertex in its unique cycle. 
PFDS and FVS are special cases of the more general $\ell$-pseudoforest deletion problem that was introduced in \cite{PhilipRS18} from the perspective of parameterized algorithms (FVS corresponds to $\ell=0$ and PFDS to $\ell=1$). Lin, Feng, Fu, and Wang \cite{LFFW19} studied approximation algorithms for $\ell$-pseudoforest deletion problem. In this paper, we restrict attention to PFDS and FVS and do not discuss the more general $\ell$-pseudoforest deletion problem.
The status of PFDS is very similar to that of FVS. It has an approximation preserving reduction from the vertex cover problem and consequently, it is 
NP-hard, and does not have a polynomial time $(2-\epsilon)$-approximation
for every constant $\epsilon >0$ assuming the UGC. It admits a polynomial-time $2$-approximation based on the local-ratio technique \cite{LFFW19}. 
The authors in \cite{LFFW19} do not discuss LP relaxations for PFDS. However, the local-ratio technique of \cite{LFFW19} can be converted to an LP-based $2$-approximation for PFDS following the ideas in \cite{CHUDAK1998111}. 
%\chandra{I added some additional discussion about ell-pseudoforest, LP etc. Otherwise the context for PFDS may not be clear.} 
We describe the associated LP relaxation. A graph $G=(V, E)$ is a \emph{$2$-pseudotree} if it is connected
and has $|E| \geq |V|+1$ (i.e., the graph has at least $2$ edges in
addition to a spanning tree). We will denote the following polyhedra
as weak density polyhedron and $2$-pseudotree cover polyhedron
respectively, and the constraints describing them as weak density
constraints and $2$-pseudotree cover constraints respectively:
\begin{align}
    \weakdensitypolyhedron(G)
    &:=
    \left\{ x\in \R^{V}_{\ge 0}:\ \sum \nolimits_{u\in S}(d_S(u)-1)x_u\geq |E[S]| - |S| \ \forall S \subseteq V
\right\}, \text{ and}  \label{eqn:weak-density}\\
\twopseudotreecoverpolyhedron(G)&:=\left\{x\in \R^V_{\ge 0}: \sum \nolimits_{u\in U}x_u \ge 1 \ \forall U\subseteq V \text{ such that $G[U]$ contains a $2$-pseudotree}\right\}. \label{eqn:2PT-cover}
\end{align}
%We note that strong density constraints differ from weak density constraints by an additive $1$ and moreover, strong density constraints are imposed only for vertex subsets that contain at least one edge while weak density constraints are imposed for all subsets of vertices.
We encourage the reader to compare and contrast the weak density constraints with the strong density constraints.
Weak density constraints and $2$-pseudotree covering constraints are valid for PFDS. In particular, the following are ILP formulations for PFDS:
\begin{align}
    \min &\left\{\sum \nolimits_{u\in V}c_u x_u:\ x\in \weakdensitypolyhedron(G)\cap \Z^V\right\} \text{ and}\tag{PFDS-IP: WD} \label{PFDS-IP: WD}\\
    \min &\left\{\sum \nolimits_{u\in V}c_u x_u:\ x\in \weakdensitypolyhedron(G)\cap \twopseudotreecoverpolyhedron(G) \cap \Z^V\right\}. \tag{PFDS-IP: WD-and-2PT-cover} \label{PFDS-IP: WD-and-2PT-cover}
\end{align}
%$\min\{\sum_{u\in V}c_u x_u:\ x\in \weakdensitypolyhedron(G)\cap \twopseudotreecoverpolyhedron(G) \cap \Z^V\}$
The local-ratio technique for PFDS due to Lin, Feng, Fu, and Wang \cite{LFFW19} can be converted to a $2$-approximation for PFDS with respect to the following LP-relaxation via the primal-dual technique:
\begin{align}
    \min\left\{\sum \nolimits_{u\in V}c_u x_u:\ x\in \weakdensitypolyhedron(G)\cap \twopseudotreecoverpolyhedron(G)\right\} \tag{PFDS-LP: WD-and-2PT-cover}. \label{PFDS-LP: WD-and-2PT-cover}
\end{align}
%$\min\{\sum_{u\in V}c_u x_u:\ x\in \weakdensitypolyhedron(G)\cap \twopseudotreecoverpolyhedron(G)\}$
%It remains open to solve (\ref{PFDS-IP: WD-and-2PT-cover}) in
%polynomial time. 
The family of $2$-pseudotree cover constraints admits a polynomial-time separation oracle (see
\Cref{thm:MC2PT-polytime:main}). However, we do not know a
polynomial-time separation oracle for the family of weak density
constraints (similar to the status of strong density constraints), and for this reason we do not know how to solve (\ref{PFDS-LP: WD-and-2PT-cover}) in polynomial time\footnote{We note that Fujito's framework \cite{Fuj-matroid-FVS} encompasses PFDS and hence a $2$-approximation follows from his work.
It appears that \cite{LFFW19} were unaware of \cite{Fuj-matroid-FVS}. The authors of this paper also became aware of \cite{Fuj-matroid-FVS} after completing
an earlier version. The LP relaxation in \cite{Fuj-matroid-FVS} does not appear to have a polynomial-time separation oracle.}.
%\chandra{Added a footnote. We should check to see whether the LP in \cite{Fuj-matroid-FVS} is solvable or not.}
%\chandra{Did some editing here. Instead of open which seems like a long-standing thing I said we do not know.}
This leads to the following question (which is the counterpart of Question \ref{q:intro} for PFDS):
\begin{question}
  \label{q:intro-pfds}
  Does there exist an ILP formulation for PFDS whose LP-relaxation can be solved in polynomial time and has integrality gap at most $2$?
\end{question}

We emphasize that weak density constraints and our answer to Question \ref{q:intro-pfds} will play a crucial role in our answer to Question \ref{q:intro}. 
%our polynomial-sized integer linear program for FVS with integrality gap at most $2$. 
Furthermore, we prove an extreme point property of
the weak density polyhedron that closely resembles the extreme point
property of the strong density polyhedron mentioned in
\Cref{conj:strong-density-extreme-point}.

\subsection{Results}
We begin with our results for PFDS. These results influence our results for FVS which will be discussed subsequently. 
\paragraph{PFDS: ILP formulations and LP integrality gap.} 
We answer Question \ref{q:intro-pfds} affirmatively via a new ILP for PFDS. 
Our ILP for PFDS is based on Charikar's LP for the densest
subgraph problem (DSG) \cite{charikar_greedy_2000}. In DSG, the input is
an undirected graph $G=(V,E)$ and the goal is to find an induced subgraph $G[S]$ of
maximum density where the density of $S \subseteq V$ is defined
as ${|E(S)|}/{|S|}$. We define the density of $G$ to be $\max\{|E[S]|/|S|: \emptyset \ne S\subseteq V\}$. We note that a graph is a pseudoforest if and only if
its density is at most one. Thus, PFDS can equivalently be phrased as
the problem of finding a minimum cost subset of vertices to delete so
that the remaining graph has density at most one.  Charikar formulated
an LP to compute the density of a graph. The dual of Charikar's LP can
be interpreted as a fractional orientation problem.  Using that dual,
we obtain an ILP for PFDS. We describe the details now. Via Charikar's LP
and previous results, one can show that an unweighted graph $G$ has density at most $\lambda$
iff the edges of $G$ can be fractionally oriented such that the total fractional in-degree at
every vertex is at most $\lambda$. For an edge $e=uv$ we use variables $y_{e,u}$
and $y_{e,v}$ to denote the fractional amount of $e$ that is oriented towards $u$ and $v$
respectively. We recall that in PFDS the goal is to remove vertices such that
the residual graph has density at most $1$. Thus, we also have variables
$x_u$ for each $u \in V$ to indicate whether $u$ is deleted. An edge $e=uv$ is in
the residual graph only if $u$ and $v$ are not deleted. These observations
allow us to formulate the an ILP for PFDS based on the polyhedron below. 
We refer to this as the orientation polyhedron:
%\chandra{I added some explanation about the formulation and reworded a bit.}

\begin{align*}
\orientationpolyhedron(G):=
\left\{ (x,y): \begin{array}{l}
x_u + x_v+ y_{e,u} + y_{e,v} \ge 1\ \forall e\in E: e=uv\\
x_u + \sum_{e\in \delta(u)} y_{e,u} \le 1\ \forall u \in V\\
x_u \ge 0\ \forall u\in V\\
y_{e,u} \ge 0\ \forall e\in \delta(u), u\in V.
  \end{array}\right\}.
\end{align*}

We will denote the projection of $\orientationpolyhedron(G)$ to the $x$ variables by $\projectedorientationpolyhedron(G)$. For non-negative costs $c: V\rightarrow \R_{\ge 0}$, we consider the following formulations: 
\begin{align}
\min &\left\{\sum \nolimits_{u\in V}c_u x_u: x\in \projectedorientationpolyhedron(G)\cap \Z^V\right\} \text{ and} \tag{PFDS-IP: orient} \label{PFDS-IP:orient}\\
\min &\left\{\sum \nolimits_{u\in V}c_u x_u: x\in \projectedorientationpolyhedron(G)\cap \twopseudotreecoverpolyhedron(G)\cap \Z^V\right\}. \tag{PFDS-IP: orient-and-2PT-cover} \label{PFDS-IP:Orient-and-2PT-cover} 
\end{align}
We will be interested in the integrality gap of the following LP-relaxation of \eqref{PFDS-IP:Orient-and-2PT-cover}:
\begin{align}
    \min &\left\{\sum \nolimits_{u\in V}c_u x_u: x\in \projectedorientationpolyhedron(G)\cap \twopseudotreecoverpolyhedron(G) \right\}. \tag{PFDS-LP: orient-and-2PT-cover} \label{PFDS-LP:Orient-and-2PT-cover} 
    \end{align}

%\commentop{Chandra: I would like to keep the formulations outside the theorem statement if possible.}
\begin{restatable}{theorem}{lemmaOrientationFormulation}
\label{lemma:orientation-formulation}
For an input graph $G=(V, E)$ with non-negative costs $c: V\rightarrow \R_{\ge 0}$, 
\eqref{PFDS-IP:orient} and \eqref{PFDS-IP:Orient-and-2PT-cover}  are integer linear programming formulations for PFDS. 
Moreover, we have the following properties:
\begin{enumerate}[label=(\arabic*)]%[noitemsep]
\item $\projectedorientationpolyhedron(G)\subseteq \weakdensitypolyhedron(G)$ for every graph $G$ and there exist graphs $G$ for which $\projectedorientationpolyhedron(G)\subsetneq \weakdensitypolyhedron(G)$. 

\item %The integrality gap of the following LP-relaxation is at most $2$:
    The LP \eqref{PFDS-LP:Orient-and-2PT-cover} is solvable in polynomial time and its integrality gap is at most $2$. 
%\item There exists a polynomial-time separation oracle for the family of $2$-pseudotree cover constraints. 
\end{enumerate}
\end{restatable}

We note that a necessary step for showing \Cref{lemma:orientation-formulation}(1) is to show that there exists a polynomial-time separation oracle for the family of $2$-pseudotree cover constraints. We include a proof of this fact in \Cref{thm:MC2PT-polytime:main} in \Cref{sec:2pt-cover-constraints-separation-oracle} for the sake of completeness. We also note that the upper bound of $2$ on the integrality gap of
(\ref{PFDS-LP:Orient-and-2PT-cover}) mentioned in
\Cref{lemma:orientation-formulation} is tight as shown in Figure \ref{fig:orientation-2PT-cover-integrality-gap-instance}.
As a consequence of \Cref{lemma:orientation-formulation}, we
immediately obtain an integer linear program for PFDS
whose LP-relaxation is solvable in polynomial time and has integrality gap at most $2$ --- namely
(\ref{PFDS-IP:Orient-and-2PT-cover}). We also bound the integrality
gap of the LP-relaxation of (\ref{PFDS-IP:orient}), but this will be
based on extreme point properties and will be discussed next.

\begin{figure}[H]
    \centering
\includegraphics[width=0.4\textwidth]{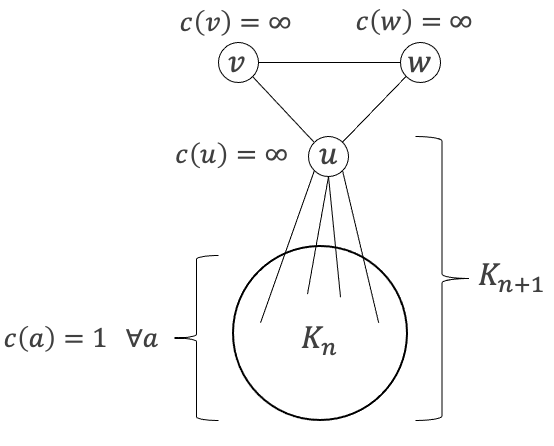}
    \caption{An example showing that the integrality gap of (\ref{PFDS-LP:Orient-and-2PT-cover}) tends to $2$ as $n$ tends to infinity. Consider the graph $G=(V, E)$ as shown above (where $K_n$ denotes the complete graph on $n$ vertices) with cost of every vertex $a\in V-\{u, v, w\}$ being $1$ and costs of vertices $u, v, w$ being infinite. The optimum value of (\ref{PFDS-IP:Orient-and-2PT-cover}) is $n-1$. The optimum value of (\ref{PFDS-LP:Orient-and-2PT-cover}) is at most $n/2$: the solution $x_a = 1/2$ for every vertex $a\in V-\{u, v, w\}$, $x_u = x_v = x_w = 0$, $y_{e, a}=0$ for all edges $e=ab$ where $a, b\in V-\{u, v, w\}$, $y_{uv, v}=y_{vw, w}=y_{wu,u}=1$, $y_{uv, u} = y_{vw, v} = y_{wu, w} = 0$, and $y_{ua,a}=1/2$, $y_{ua, u} = 0$ for every $a\in V-\{u, v,w\}$ is feasible for (\ref{PFDS-LP:Orient-and-2PT-cover}) and has cost $n/2$.}
    \label{fig:orientation-2PT-cover-integrality-gap-instance}
\end{figure}

\paragraph{PFDS: extreme point properties.}
Motivated by \Cref{conj:strong-density-extreme-point} and the goal of
designing primal rounding algorithms for PFDS (and thus, for FVS), 
%for the LP relaxations for FVS,
we investigate extreme point properties of the weak density polyhedron
and the orientation polyhedron. Although we were unable to resolve
\Cref{conj:strong-density-extreme-point} for the strong density
polyhedron, we were able to prove an extreme point property for the
weak density polyhedron.

\begin{restatable}{theorem}{thmWeakDensityExtremePoint}
\label{thm:weak-density-extreme-point}
Let $G=(V, E)$ be a graph that is not a pseudoforest. For every extreme point $x$ of the polyhedron $\weakdensitypolyhedron(G)$, there exists a vertex $u \in V$ such that $x_u\geq 1/3$.
\end{restatable}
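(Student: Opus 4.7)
The plan is to argue by contradiction: suppose $x$ is an extreme point of $\weakdensitypolyhedron(G)$ with $x_u < 1/3$ for every $u \in V$, and derive a violation of extremality. The first step is an uncrossing lemma for tight density constraints. For two tight sets $S_1,S_2$, letting $B = S_1 \setminus S_2$ and $C = S_2 \setminus S_1$, a direct computation gives the two identities
\begin{align*}
|E[S_1 \cup S_2]| + |E[S_1 \cap S_2]| &= |E[S_1]| + |E[S_2]| + |E(B,C)|,\\
L(S_1 \cup S_2) + L(S_1 \cap S_2) &= L(S_1) + L(S_2) + \sum_{uv \in E(B,C)}(x_u+x_v),
\end{align*}
where $L(S) := \sum_{u \in S}(d_S(u)-1)x_u$. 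Combining tightness of $S_1,S_2$ with validity of the density constraints at $S_1 \cup S_2$ and $S_1 \cap S_2$ yields $\sum_{uv \in E(B,C)}(x_u + x_v) \ge |E(B,C)|$. Since each $x_u + x_v < 2/3 < 1$ under the hypothesis, we must have $E(B,C) = \emptyset$, which in turn forces both $S_1 \cup S_2$ and $S_1 \cap S_2$ to also be tight and gives the characteristic-vector identity $\chi_{S_1 \cup S_2} + \chi_{S_1 \cap S_2} = \chi_{S_1} + \chi_{S_2}$, where $\chi_S(u) = d_S(u) - 1$ for $u \in S$ and $0$ otherwise. A standard uncrossing iteration then produces a laminar family $\mathcal{L}$ of tight sets whose $\{\chi_S\}$ span the space of characteristic vectors of all tight density constraints.

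Since $x$ is extreme, the vectors $\{\chi_S\}_{S \in \mathcal{L}}$ restricted to the support $V^* := \{u : x_u > 0\}$ must span $\mathbb{R}^{V^*}$. The key algebraic lever for the contradiction is the identity
\[
\sum_{u \in S}(d_S(u)-1)(1 - 3x_u) \;=\; 2|S| - |E[S]| \;\geq\; 1,
\]
valid for every tight $S$ with $|E[S]|>|S|$; positivity of the right-hand side follows from $x_u < 1/3$ which forces $|E[S]| \leq 2|S|-1$. Dually, non-extremality would be witnessed by a nonzero vector $d \in \mathbb{R}^{V^*}$ (extended by zero outside $V^*$) satisfying $\sum_{u \in S}(d_S(u)-1)d_u = 0$ for every tight $S$. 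I would apply a token argument on the laminar tree of $\mathcal{L}$, endowing each $u \in V^*$ with the strictly positive weight $1 - 3 x_u$ and each tight $S \in \mathcal{L}$ with the capacity $2|S| - |E[S]|$, to exhibit either a set $S \in \mathcal{L}$ whose $\chi_S$ lies in the linear span of its children's characteristic vectors in $\mathcal{L}$, or a vertex $u \in V^*$ at which every tight $\chi_S$ vanishes. Either outcome violates the rank requirement $\operatorname{rank}(\{\chi_S|_{V^*}\}) = |V^*|$, yielding the contradiction.

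The main obstacle will be this counting step. Unlike standard laminar token arguments in iterative rounding for network design, the characteristic vectors here are not $\{0,1\}$-indicators but carry the weights $d_S(u) - 1$, which can be zero at pendants of $G[S]$ or very large at high-degree vertices. A careful case analysis on the leaves of $\mathcal{L}$ --- each a minimal tight set satisfying $|E[S]| < 2|S|$, hence of bounded average degree in $G[S]$ --- and on the relationship between a set and its children in $\mathcal{L}$, combined with the strict slack afforded by $1 - 3 x_u > 0$, will be essential to force the rank to drop strictly below $|V^*|$ and thus complete the proof.
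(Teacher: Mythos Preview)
Your uncrossing computation is correct and matches the paper's Lemma~6: under $x_u<1/2$ the function $f_x$ is supermodular, tight sets uncross, $E(B,C)=\emptyset$, and the row identity $\chi_{S_1}+\chi_{S_2}=\chi_{S_1\cup S_2}+\chi_{S_1\cap S_2}$ holds. However, you are missing one further structural fact that the paper extracts from the same computation and that dramatically simplifies the counting step you flag as the main obstacle: any two tight sets have \emph{nonempty intersection}. Indeed, if $A,B$ are disjoint tight sets, then $g_x(A\cup B)=g_x(A)+g_x(B)+\sum_{ab\in\delta(A,B)}(x_a+x_b)$ while $b(A\cup B)=b(A)+b(B)+|\delta(A,B)|$; combining with feasibility at $A\cup B$ and $x_u<1/2$ forces a strict inequality. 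Consequently the ``laminar'' family you build is actually a \emph{chain} $C_1\subsetneq\cdots\subsetneq C_t$. The paper then finishes with a one--line counting argument rather than a laminar token scheme: it shows (i) each $C_i\setminus C_{i-1}$ contains a vertex with $x_v>0$ (else a short computation using $x_u<1/2$ derives a contradiction from tightness of $C_{i-1}$ and $C_i$), and (ii) the smallest chain set $C_1$ already contains at least two such vertices (here is where $x_u<1/3$, not just $<1/2$, is used --- essentially your identity $\sum_{u\in S}(d_S(u)-1)(1-3x_u)=2|S|-|E[S]|$). Hence $|V^*|\geq t+1>t=|\mathcal{C}|$, contradicting $\mathrm{span}(\Rows(\mathcal{C}\cup\mathcal{Z}))=\mathbb{R}^V$.

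Two smaller gaps in your outline are worth noting. First, your bound $|E[S]|\le 2|S|-1$ for tight $S$ silently uses that $G[S]$ has no isolated vertices; this is true (removing an isolated vertex from a tight set raises $f_x$ by $1-x_v>0$, violating feasibility) but should be stated. Second, you need to argue that tight sets $S$ with $|E[S]|\le |S|$ can be discarded from the basis; the paper handles this by showing that after pruning degree--one vertices (which preserves membership in the basis modulo $\Rows(\mathcal{Z})$), every remaining chain set induces a $2$-pseudotree, so $|E[S]|\ge|S|+1$. With the chain structure in hand, none of your proposed case analysis on a laminar tree is needed.
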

Our proof of \Cref{thm:weak-density-extreme-point} is based on a
conditional supermodularity property---if all coordinates are small,
then the weak density constraints have a supermodular property; we use this
supermodular property to show the existence of a
structured basis for the extreme point which is subsequently used to arrive at a contradiction. To the best of authors' knowledge, the conditional
supermodularity property based proof has not previously appeared in the
literature on iterated rounding, and might be of independent interest.
\Cref{thm:weak-density-extreme-point} implies the 
corollary below regarding the following LP-relaxation of (\ref{PFDS-IP: WD}): 
\begin{align}
    \min\left\{\sum \nolimits_{u\in V}c_u x_u: x\in \weakdensitypolyhedron(G)\right\}. \tag{PFDS-LP: WD} \label{PFDS-LP: WD}
\end{align}

\begin{thmcorollary}\label{cor:thm:integrality-gap-3-WD-PFD}
The integrality gap of \eqref{PFDS-LP: WD} is at most $3$. 
\end{thmcorollary}
\Cref{cor:thm:integrality-gap-3-WD-PFD} can be seen to follow from \Cref{thm:weak-density-extreme-point} by the iterative rounding technique, where we repeatedly apply the following two steps until the graph $G$ is a pseudoforest: 
(1) Compute an extreme point optimum solution $x$ of (\ref{PFDS-LP: WD}). 
(2) By \Cref{thm:weak-density-extreme-point}, there exists a vertex $u\in V$ such that $x_u \geq 1/3$; include the vertex $u$ in the solution and remove it from the graph $G$. The approximation factor of the solution constructed by this procedure relative to the starting extreme point optimum solution to the LP is at most $3$. The example given in \Cref{fig:butterfly-WD-integrality-gap} shows that the $(1/3)$-bound in \Cref{thm:weak-density-extreme-point} and  the integrality gap of $3$ mentioned in \Cref{cor:thm:integrality-gap-3-WD-PFD} are both tight.

\begin{figure}[]
\centering
\begin{subfigure}[b]{0.47\textwidth}
    \centering
    \begin{tikzpicture}
        \tikzstyle{vertex}=[circle, draw]
        \node[vertex, label=above:$1/3$](v) at (0, 0) {};
        \node[vertex, label=above:$0$](a) at (-1, 1) {};
        \node[vertex, label=above:$0$](b) at (1, 1) {};
        \node[vertex, label=below:$0$](c) at (-1,-1) {};
        \node[vertex, label=below:$0$](d) at (1,-1) {};
            \begin{scope}[every path/.style={-}, every node/.style={inner sep=1pt}]
                   \draw (a) -- (c);
                   \draw (b) -- (d);
                   \draw (a) -- (v);
                   \draw (b) -- (v);
                   \draw (c) -- (v);
                   \draw (d) -- (v);
            \end{scope} 
        \end{tikzpicture}
        \caption{The unique extreme point in the weak density polyhedron along the all-ones objective direction for the butterfly graph is as shown. The largest coordinate is $1/3$ and all other coordinates are $0$. Consequently, the integrality gap of the LP $\min\{\sum_{u\in V}x_u: x\in \weakdensitypolyhedron(G)\}$ is at least $3$.}
        \label{fig:butterfly-WD-integrality-gap}
     \end{subfigure}
     \hfill
     \begin{subfigure}[b]{0.47\textwidth}
    \centering
    \begin{tikzpicture}
    \small
        \tikzstyle{vertex}=[circle, draw]
        \node[vertex, label=above:$1/3$](v) at (0, 0) {};
        \node[vertex, label=above:$0$](a) at (-2.5, 2.5) {};
        \node[vertex, label=above:$0$](b) at (2.5, 2.5) {};
        \node[vertex, label=below:$0$](c) at (-2.5,-2.5) {};
        \node[vertex, label=below:$0$](d) at (2.5,-2.5) {};
            % \begin{scope}[every path/.style={-}, every node/.style={inner sep=1pt}]
                    \draw (a) -- (c);
                   \draw (b) -- (d);
                   \draw (a) -- (v);
                   \draw (b) -- (v);
                   \draw (c) -- (v);
                   \draw (d) -- (v);
            % \end{scope} 

        \draw[->, gray, dashed] (-2.2, -0.5) to (-2.2, 0.5)  node[above] {\footnotesize $2/3$};
        \draw[->, gray, dashed] (-2.8, 0.5) to (-2.8, -0.5) node[below] {\footnotesize $1/3$};

        \draw[->, gray, dashed] (2.2, -0.5) to (2.2, 0.5) node[above] {\footnotesize $2/3$};
        \draw[->, gray, dashed] (2.8, 0.5) to (2.8, -0.5) node[below] {\footnotesize $1/3$};

        \draw[->, gray, dashed] (-1.5, -1.2) to (-0.8, -0.5) node[above] {\footnotesize $0$};
        \draw[->, gray, dashed] (-0.5, -0.8) to (-1.2, -1.5) node[below] {\footnotesize $2/3$};

        \draw[->, gray, dashed] (1.5, 1.2) to (0.8, 0.5) node[below] {\footnotesize $1/3$};
        \draw[->, gray, dashed] (0.5, 0.8) to (1.2, 1.5) node[above] {\footnotesize $1/3$};

        \draw[->, gray, dashed] (-1.5, 1.2) to (-0.8, 0.5) node[below] {\footnotesize $1/3$};
        \draw[->, gray, dashed] (-0.5, 0.8) to (-1.2, 1.5) node[above] {\footnotesize $1/3$};

        \draw[->, gray, dashed] (1.5, -1.2) to (0.8, -0.5) node[above] {\footnotesize $0$};
        \draw[->, gray, dashed] (0.5, -0.8) to (1.2, -1.5) node[below] {\footnotesize $2/3$};
        
        \end{tikzpicture}
        \caption{A minimal extreme point in the orientation polyhedron along the all-ones objective direction for the butterfly graph is as shown. The largest $x$-coordinate is $1/3$ and all other $x$-coordinates are $0$. Consequently, the integrality gap of the LP $\min\{\sum_{u\in V}x_u: x\in \orientationpolyhedron(G)\}$ is at least $3$.}
        \label{fig:butterfly-orientation-integrality-gap}
\end{subfigure}
     
\end{figure}

We emphasize that the extreme point result given in \Cref{thm:weak-density-extreme-point} and the integrality gap result mentioned in Corollary \ref{cor:thm:integrality-gap-3-WD-PFD} are interesting only from a polyhedral viewpoint currently, and are not of help from the perspective of algorithm design. This is because, implementing the above-mentioned iterative rounding procedure requires us to solve the LP-relaxation (\ref{PFDS-LP: WD}) in polynomial time, but we do not know how to do this yet. 

We next show that although $\projectedorientationpolyhedron(G)$ is a subset of $\weakdensitypolyhedron(G)$ (as shown in \Cref{lemma:orientation-formulation}), the extreme point result for $\weakdensitypolyhedron(G)$ given in \Cref{thm:weak-density-extreme-point} still holds for $\projectedorientationpolyhedron(G)$. We will say that an extreme point of a polyhedron is \emph{minimal} if for each variable, reducing the value of that variable by any $\epsilon>0$ while keeping the rest of the variables unchanged results in a point that is outside the polyhedron. We note that extreme points of a polyhedron along non-negative objective directions will be minimal. 

\begin{restatable}{theorem}{thmOrientationPolyhedronExtremePoint}
%\begin{theorem}
\label{thm:orientation-polyhedron-extreme-point}
Let $G=(V, E)$ be a graph that is not a pseudoforest. For every minimal extreme point $(x,y)$ of the polyhedron $\orientationpolyhedron(G)$, there exists a vertex $u\in V$ such that $x_u\ge 1/3$. 
%\end{theorem}
\end{restatable}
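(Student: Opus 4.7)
The plan is to argue by contradiction: assume $(x,y)$ is a minimal extreme point of $\orientationpolyhedron(G)$ with $x_u<1/3$ for every $u\in V$, and derive a rank deficit in the tight-constraint system. I would work directly with $\orientationpolyhedron(G)$ rather than attempt to reduce to \Cref{thm:weak-density-extreme-point}, because the projection of a minimal extreme point $(x,y)$ of $\orientationpolyhedron(G)$ to its $x$-component is not, in general, an extreme point of $\weakdensitypolyhedron(G)$.

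The first step is to show that every edge constraint is tight. If some edge $e=uv$ satisfies $x_u+x_v+y_{e,u}+y_{e,v}>1$, then $x_u+x_v<2/3$ forces $y_{e,u}+y_{e,v}>1/3$, so at least one of $y_{e,u},y_{e,v}$ is strictly positive; minimality of $(x,y)$ then requires a tight constraint with positive coefficient on this $y$-variable, and the only such constraint is the edge constraint for $e$, a contradiction. In particular, for every edge $e=uv$ we have $y_{e,u}+y_{e,v}=1-x_u-x_v>1/3>0$, so the set $Z_y$ of half-edges with $y_{e,u}=0$ satisfies $|Z_y|\le|E|$. Setting $V^+:=\{u:x_u>0\}$, $Z_x:=V\setminus V^+$, and $V_t$ equal to the set of vertices whose vertex constraint is tight, the extreme-point rank condition $|E|+|V_t|+|Z_x|+|Z_y|\ge|V|+2|E|$ combined with $|Z_y|\le|E|$ gives $|V_t|\ge|V^+|$.

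Next, for each $u\in V_t$, substituting the bound $y_{e,u}\le 1-x_u-x_v$ (from edge tightness and $y_{e,v}\ge 0$) into the tight vertex equation yields $\sum_{v\in N(u)}x_v\le(d(u)-1)(1-x_u)$. Specializing to the ``pure orientation'' sub-case in which every edge has exactly one positive half-edge, this becomes an equation with $d(u)$ replaced by the in-degree $d^-(u)$, and the bounds $x_u,x_v<1/3$ force $d^-(u)=1$ with the unique in-neighbor lying in $Z_x$ for every $u\in V_t$. Consequently, no edge has its tail in $V^+$ and its head in $V_t$. I would then compute the left-kernel of the tight-constraint matrix: writing a generic kernel element as $(\alpha,\beta)$ with $\alpha_e$ for edge constraints and $\beta_u$ for tight vertex constraints, the conditions coming from the positive $y$-variables force $\alpha_e=-\beta_u$ whenever $u\in V_t$ is the head of $e$, and the conditions from the positive $x$-variables for $u\in V^+$ reduce to $\sum_{w\in N^+(u)\cap V_t}\beta_w=0$. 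But our structural conclusion says $N^+(u)\cap V_t=\emptyset$ for every $u\in V^+$, so these $|V^+|$ conditions are trivially satisfied and the left-kernel has dimension at least $|V_t|\ge|V^+|$. Hence the rank of the tight-constraint matrix is at most $|E|$, strictly less than the required rank $|V^+|+|E|$ unless $V^+=\emptyset$. But $V^+=\emptyset$ means $x=0$, which is feasible in $\orientationpolyhedron(G)$ only when $G$ is a pseudoforest, contradicting the hypothesis on $G$.

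The main obstacle is extending this argument to the general case where some edges have both $y_{e,u},y_{e,v}>0$. Each such ``two-sided'' edge equates the $\beta$-coordinates at its endpoints in the left-kernel, so I would pass to the quotient graph obtained by contracting each connected component of the two-sided-edge subgraph into a single super-node and carry out the analysis there; one must verify that the structural inequality $\sum_{v\in N(u)}x_v\le(d(u)-1)(1-x_u)$ survives the contraction in a form strong enough to force the analogous degeneracy of the vertex-coefficient conditions, which is where the bulk of the technical work lies.
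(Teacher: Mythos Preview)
Your argument for the ``pure orientation'' sub-case (each edge having exactly one positive half-edge variable) is correct, and the left-kernel/rank-deficit computation is clean. But as you yourself flag, the general case is the heart of the matter, and your proposal leaves it essentially open: you suggest contracting the components of the two-sided-edge subgraph and re-running the pure-case argument on the quotient, yet you do not verify that the crucial structural conclusion (for each tight vertex, in-degree one with in-neighbour in $Z_x$) transfers. It does not transfer directly: after contraction a super-node can have many one-sided in-edges, and the tight vertex equations of its constituent vertices do not aggregate into a single equation of the same shape. Without that, the $|V^+|$ column conditions in your left-kernel computation are no longer automatically vacuous, and the claimed rank deficit evaporates.

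The paper handles the general case by a different mechanism. It works in the bipartite support graph $H(y)$ (the edge--vertex incidence graph restricted to half-edges with $y_{e,v}>0$), proves that $H(y)$ is a forest, and then performs a refined tight-constraint count---subtracting one for each redundant two-vertex ``dyad'' component---to show that some component $T$ of $H(y)$ is \emph{interesting}: tight, not a dyad, and with at most one vertex $r\in V(T)\cap V(G)$ having $x_r>0$. On such a component the tight constraints combine to the exact identity
\[
(d_T(r)-1)\,x_r+\sum_{w\in N(T)}d_T(w)\,x_w \;=\; |\mathcal L(T)|-1,
\]
where $\mathcal L(T)$ is the leaf set of $T$; since the coefficients on the left sum to at most $2|\mathcal L(T)|-1$, the weighted average of the $x$-values appearing is at least $\frac{|\mathcal L(T)|-1}{2|\mathcal L(T)|-1}\ge \frac13$, contradicting $x_v<\frac13$ everywhere. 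Your contraction idea is morally heading toward the same decomposition (the vertex-classes of $H(y)$ are exactly the components of the two-sided-edge subgraph), but the step that actually closes the argument---localizing to a single interesting component and extracting this averaging identity---is what your sketch does not supply.
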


Similar to \Cref{cor:thm:integrality-gap-3-WD-PFD} that follows from \Cref{thm:weak-density-extreme-point}, \Cref{thm:orientation-polyhedron-extreme-point} implies the corollary below regarding the following LP-relaxation of \eqref{PFDS-IP:orient}: 
\begin{align}
    \min\left\{\sum \nolimits_{u\in V}c_u x_u: x\in \projectedorientationpolyhedron(G)\right\} \tag{PFDS-LP: orient} \label{PFDS-LP:orient}
\end{align}

\begin{thmcorollary}\label{cor:thm:integrality-gap-3-orientation}
The integrality gap of (\ref{PFDS-LP:orient}) is at most $3$. 
Moreover, given an extreme point optimum solution $x$ for the LP, there exists a polynomial time algorithm to obtain an integral feasible solution $x'$ for the LP-relaxation such that $\sum_{u\in V}c_u x'_u \le 3 \sum_{u\in V}c_u x_u$. 
\end{thmcorollary}

\Cref{cor:thm:integrality-gap-3-orientation} can be seen to follow from \Cref{thm:orientation-polyhedron-extreme-point} by the iterative rounding technique, where we repeatedly apply the following two steps until the graph $G$ is a pseudoforest: (1) Compute an extreme point optimum solution $x$ of $\min\left\{\sum_{u\in V}c_u x_u: x\in \projectedorientationpolyhedron(G)\right\}$. (2) By \Cref{thm:orientation-polyhedron-extreme-point}, there exists a vertex $u\in V$ such that $x_u\ge 1/3$; include the vertex $u$ in the solution and remove it from the graph $G$. The approximation factor of this procedure is $3$. In contrast to the weak density constraints based LP, namely (\ref{PFDS-LP: WD}), we can solve the orientation based LP, namely (\ref{PFDS-LP:orient}), in polynomial time. The example given in \Cref{fig:butterfly-orientation-integrality-gap} shows that the $(1/3)$-factor in \Cref{thm:orientation-polyhedron-extreme-point} and the integrality gap of $3$ mentioned in \Cref{cor:thm:integrality-gap-3-orientation} are both tight.

\paragraph{FVS: poly-sized ILP formulations and LP integrality gaps.}
Next, we answer Question \ref{q:intro} affirmatively via a new ILP for FVS. 
In order to achieve this goal, we formulate an intermediate ILP for FVS whose
LP-relaxation has integrality gap at most $2$, but it is unclear if
this LP-relaxation is polynomial-time solvable.  We will later
formulate a \emph{polynomial-sized} ILP for FVS whose LP-relaxation is
at least as strong as that of the intermediate ILP, thereby achieving
our goal. Our intermediate ILP is based on weak density constraints and the following polyhedron that will be referred to as the cycle cover polyhedron: 
\begin{align}
\cyclecoverpolyhedron(G)&:=\left\{x\in \R^V_{\ge 0}: \sum \nolimits_{u\in U}x_u \ge 1 \ \forall U\subseteq V \text{ such that $G[U]$ contains a cycle}\right\}. \label{eqn:cycle-cover}
\end{align}
We will denote the constraints describing the cycle cover polyhedron as cycle cover constraints. 
It is known that the integrality gap of this polyhedron for FVS is $\Theta(\log n)$ \cite{BarYehudaGNR98}.
%\chandra{Added this fact which may not be obvious to the reader.}

For non-negative costs $c: V\rightarrow \R_{\ge 0}$, we consider the following formulation and its LP-relaxation:
\begin{align}
\min & \left\{\sum \nolimits_{u\in V}c_u x_u: x\in \weakdensitypolyhedron(G)\cap \cyclecoverpolyhedron(G)\cap \Z^V\right\} \quad \text{and} \tag{FVS-IP: WD-and-cycle-cover} \label{FVS-IP: WD-and-cycle-cover}\\
\min & \left\{\sum \nolimits_{u\in V} c_u x_u: x\in \weakdensitypolyhedron(G)\cap \cyclecoverpolyhedron(G)\right\}. \tag{FVS-LP: WD-and-cycle-cover}
\label{FVS-LP:weak-density-cycle-cover}
\end{align}

%\commentop{Chandra: Same comment about stating formulations inside theorem statements. The LP relaxation is just dropping the integrality constraint so having it repeated, especially inside a theorem statement is a bit odd to me.}
\begin{restatable}{theorem}{thmWeakDensityCycleCover}
%\begin{theorem}
\label{thm:integrality-gap-wd+cycle-cover}
For an input graph $G=(V, E)$ with non-negative costs $c: V\rightarrow \R_{\ge 0}$, \eqref{FVS-IP: WD-and-cycle-cover} is an integer linear programming formulation for FVS.  
Moreover, the integrality gap of \eqref{FVS-LP:weak-density-cycle-cover} is at most $2$. 
%Moreover, we have the following properties: 
% \begin{enumerate}
%     \item The integrality gap of the following LP-relaxation is at most $2$: 
% \begin{align*}
% \min\left\{\sum \nolimits_{u\in V} c_u x_u: x\in \weakdensitypolyhedron(G)\cap \cyclecoverpolyhedron(G)\right\}. \tag{FVS-LP: WD-and-cycle-cover}
% \label{FVS-LP:weak-density-cycle-cover}
% \end{align*}
%     \item There exists a polynomial-sized formulation for $\cyclecoverpolyhedron(G)$.  
% \end{enumerate}
%\end{theorem}
\end{restatable}
\Cref{thm:integrality-gap-wd+cycle-cover} shows that (\ref{FVS-IP: WD-and-cycle-cover}) is an ILP for FVS whose LP-relaxation (\ref{FVS-LP:weak-density-cycle-cover}) has integrality gap at most $2$. However, we do not have a polynomial-time algorithm to solve the LP-relaxation (\ref{FVS-LP:weak-density-cycle-cover}). This is because, we do not have a polynomial-time separation oracle for the family of weak density constraints (although we do have a polynomial-time separation oracle for the family of cycle cover constraints). 

\begin{remark}
In the spirit of using iterative rounding to bound integrality gaps (e.g., \Cref{thm:weak-density-extreme-point} that leads to \Cref{cor:thm:integrality-gap-3-WD-PFD} and \Cref{thm:orientation-polyhedron-extreme-point} that leads to \Cref{cor:thm:integrality-gap-3-orientation}), it is tempting to bound the integrality gap of (\ref{FVS-LP:weak-density-cycle-cover}) by proving an extreme point result. In particular, if we could show that every extreme point optimum of (\ref{FVS-LP:weak-density-cycle-cover}) had a coordinate with value at least $1/2$, then we would have an alternative proof of the integrality gap bound mentioned in \Cref{thm:integrality-gap-wd+cycle-cover} via the  iterative rounding technique. However, the example given in \Cref{fig:K4-WD-CC-integrality-gap} shows that there exists an extreme point optimum of (\ref{FVS-LP:weak-density-cycle-cover}) all of whose coordinates have value at most $1/3$. 
\end{remark}

\begin{figure}[]

     \centering
    \begin{tikzpicture}
        \tikzstyle{vertex}=[circle, draw]
        % \node[vertex, label=above:$1/3$](v) at (0, 0) {$v$};
        \node[vertex, label=above:$1/3$](a) at (-1, 1) {};
        \node[vertex, label=above:$1/3$](b) at (1, 1) {};
        \node[vertex, label=below:$1/3$](c) at (-1,-1) {};
        \node[vertex, label=below:$1/3$](d) at (1,-1) {};
            \begin{scope}[every path/.style={-}, every node/.style={inner sep=1pt}]
                   \draw (a) -- (b);
                   \draw (a) -- (c);
                   \draw (a) -- (d);
                   \draw (b) -- (c);
                   \draw (b) -- (d);
                   \draw (c) -- (d);
            \end{scope} 
        \end{tikzpicture}
        \caption{The unique extreme point optimum for (\ref{FVS-LP:weak-density-cycle-cover}) along the all-ones objective direction on the complete graph $K_4$ is as shown. Consequently, the integrality gap of (\ref{FVS-LP:weak-density-cycle-cover}) is at least $3$.}
        \label{fig:K4-WD-CC-integrality-gap}
\end{figure}

We will use \Cref{lemma:orientation-formulation} in conjunction with \Cref{thm:integrality-gap-wd+cycle-cover} to prove the following result: 
\begin{theorem}\label{thm:poly-sized-LP-with-integrality-gap-atmost-2-for-FVS} 
There exists a polynomial-sized integer linear programming formulation for FVS whose LP-relaxation has integrality gap at most $2$. 
\end{theorem}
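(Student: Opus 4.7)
The plan is to combine the orientation-based formulation for PFDS from Lemma~\ref{lemma:orientation-formulation} with the cycle-cover constraints from Theorem~\ref{thm:integrality-gap-wd+cycle-cover}. Specifically, I would propose the candidate formulation
\begin{align*}
\min\left\{\sum\nolimits_{u\in V} c_u x_u :\ x\in \projectedorientationpolyhedron(G)\cap \cyclecoverpolyhedron(G)\cap \Z^V\right\},
\end{align*}
expressed in the extended space that includes the $y$ variables defining $\orientationpolyhedron(G)$ together with the auxiliary variables of a compact formulation of $\cyclecoverpolyhedron(G)$.

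First I would verify that this is indeed an ILP formulation for FVS. If an integral $x$ is feasible, then by Lemma~\ref{lemma:orientation-formulation} its support $U$ is a pseudoforest deletion set, so every connected component of $G-U$ contains at most one cycle; the cycle-cover constraints additionally force $U$ to meet every cycle of $G$, so in fact $G-U$ has no cycles and $U$ is a FVS. Conversely, any FVS is a PFDS (hence the indicator of its complement extends to a feasible integral point of $\orientationpolyhedron(G)$ by Lemma~\ref{lemma:orientation-formulation}) and trivially hits all cycles. Polynomial size is immediate: $\orientationpolyhedron(G)$ has $O(|V|+|E|)$ variables and constraints by definition, and $\cyclecoverpolyhedron(G)$ admits a polynomial-sized extended formulation by part~2 of Theorem~\ref{thm:integrality-gap-wd+cycle-cover}.

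The integrality gap bound is obtained by a direct comparison. By the containment $\projectedorientationpolyhedron(G)\subseteq \weakdensitypolyhedron(G)$ from Lemma~\ref{lemma:orientation-formulation}, every $x$ feasible for the LP-relaxation
\begin{align*}
\min\left\{\sum\nolimits_{u\in V} c_u x_u :\ x\in \projectedorientationpolyhedron(G)\cap \cyclecoverpolyhedron(G)\right\}
\end{align*}
is also feasible for \eqref{FVS-LP:weak-density-cycle-cover}. Hence the LP optimum of the new relaxation is at least the LP optimum of \eqref{FVS-LP:weak-density-cycle-cover}. Since both integer programs have the same optimum, namely the minimum-cost FVS, the factor-$2$ bound established in part~1 of Theorem~\ref{thm:integrality-gap-wd+cycle-cover} transfers directly to the new formulation.

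In this argument there is essentially no technical obstacle to overcome at this stage; the theorem is modular and reduces cleanly to Lemma~\ref{lemma:orientation-formulation} and Theorem~\ref{thm:integrality-gap-wd+cycle-cover}. The only minor point that warrants checking is that a $0/1$ indicator of a FVS can be lifted to a point of $\orientationpolyhedron(G)$ in the extended space, and this is precisely the content of the ILP-equivalence in Lemma~\ref{lemma:orientation-formulation}. The real work therefore lies in the two cited ingredients, not in their assembly.
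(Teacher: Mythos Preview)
Your proposal is correct and matches the paper's primary approach: the paper's first formulation in Section~\ref{sec:orientation-and-cycle-cover-formulation} is precisely (\ref{FVS-IP: orient-and-cycle-cover}), and \Cref{lemma:integrality-gap-orient-intersect-cycle-cover} establishes the integrality-gap bound via exactly the containment $\projectedorientationpolyhedron(G)\subseteq \weakdensitypolyhedron(G)$ and \Cref{thm:integrality-gap-wd+cycle-cover} that you invoke. One small slip: in the converse direction you want the indicator of the FVS itself (not of its complement) to lie in $\projectedorientationpolyhedron(G)$; the paper also notes two further polynomial-sized formulations (Chekuri--Madan's LP and a pure orientation-based one), but these are additional results rather than necessary for the theorem.
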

We prove \Cref{thm:poly-sized-LP-with-integrality-gap-atmost-2-for-FVS} by showing three different polynomial-sized integer linear programs for FVS all of whose LP-relaxations have integrality gap at most $2$:
\begin{enumerate}
\item The first formulation is 
\begin{align*}
    \min\left\{\sum \nolimits_{u\in V}c_u x_u: x\in \projectedorientationpolyhedron(G)\cap \cyclecoverpolyhedron(G)\cap \Z^V\right\}. \tag{FVS-IP: orient-and-cycle-cover} %\label{FVS-IP: orient-and-cycle-cover}
\end{align*}
%$\min\{\sum_{u\in V}c_u x_u: x\in \projectedorientationpolyhedron(G)\cap \cyclecoverpolyhedron(G)\cap \Z^V\}$. 
By \Cref{lemma:orientation-formulation}(1), we have that  $\projectedorientationpolyhedron(G)\subseteq \weakdensitypolyhedron(G)$. As a consequence of \Cref{thm:integrality-gap-wd+cycle-cover}, the integrality gap of the following LP-relaxation is also at most $2$: 
\begin{align}
    \min\left\{\sum \nolimits_{u\in V}c_u x_u: x\in \projectedorientationpolyhedron(G)\cap \cyclecoverpolyhedron(G)\right\}.  \tag{FVS-LP: orient-and-cycle-cover} \label{FVS-LP: orient-and-cycle-cover}
\end{align}
%$\min\{\sum_{u\in V}c_u x_u: x\in \projectedorientationpolyhedron(G)\cap \cyclecoverpolyhedron(G)\}$ 
Both $\projectedorientationpolyhedron(G)$ and $\cyclecoverpolyhedron(G)$ admit a polynomial-sized description (see \Cref{lemma:cycle-cover-poly-sized-lp} in \Cref{sec:cycle-cover-poly-sized-formulation} for polynomial-sized description of $\cyclecoverpolyhedron(G)$), and thus, we have a polynomial-sized ILP for FVS whose LP-relaxation has integrality gap at most $2$.

\item The second formulation is the Chekuri-Madan formulation who, as
  we remarked earlier, formulated an ILP for Subset-FVS and showed
  that the integrality gap of its LP-relaxation is at most $13$
  \cite{chekuri-madan16}. We show that
  their LP-relaxation specialized for FVS has integrality gap at most
  $2$ by proving that it is at least as strong as
%the LP-relaxation of our first formulation, i.e., 
  (\ref{FVS-LP: orient-and-cycle-cover}). Our result gives additional impetus
  to improving the integrality gap of their LP-relaxation for Subset-FVS. 

\item Our third formulation to prove \Cref{thm:poly-sized-LP-with-integrality-gap-atmost-2-for-FVS} is based on the orientation perspective, but without cycle cover constraints (as opposed to our first formulation). Here, we give an orientation based ILP formulation whose associated polyhedron is contained in the strong density polyhedron. Since the integrality gap of (\ref{FVS-LP: SD}) is at most $2$ (as shown by Chudak, Goemans, Hochbaum, and Williamson \cite{CHUDAK1998111}), the integrality gap of our third formulation is also at most $2$.   
\end{enumerate}

We emphasize that the proof of the integrality gap of the
LP-relaxation of all three of our ILPs rely on the integrality gap
upper bound of (\ref{FVS-LP:weak-density-cycle-cover}) or (\ref{FVS-LP:
  SD}). It would be interesting to have a direct proof. In particular,
it would be useful to design a primal rounding algorithm (for arbitrary
LP-optimum solutions or for extreme point LP-optimum solutions) that
yields a $2$-approximate solution.

%We present formal statements of our results in this section. 
% We recall that our main goal is to formulate an integer linear program
% (ILP) for FVS whose LP-relaxation has integrality gap at most $2$. As
% our first result, we show how to achieve this goal for PFDS by giving
% a new ILP formulation. We will subsequently use the properties of our
% new ILP formulation for PFDS to achieve the goal for FVS.

\paragraph{Organization.} 
We focus on PFDS in Section 3 and FVS in Section 4. In Section \ref{sec:orientatin-polytope}, we prove properties of the orientation polyhedron and show \Cref{lemma:orientation-formulation}. 
In Sections \ref{sec:extreme-point-weak-density} and \ref{sec:orientation-polytope-extreme-point}, we prove extreme point properties of the weak density polyhedron and the orientation polyhedron and show Theorems \ref{thm:weak-density-extreme-point} and \ref{thm:orientation-polyhedron-extreme-point} respectively. 
In Section \ref{sec:integrality-gap-of-weak-density-and-cycle-cover}, we prove properties of the weak density polyhedron and 
%, show that the cycle cover polyhedron admits a polynomial-sized formulation, and 
show \Cref{thm:integrality-gap-wd+cycle-cover}. 
In Section \ref{sec:orientation-and-cycle-cover-formulation}, we give a polynomial-sized ILP formulation for FVS with integrality gap at most $2$, thereby proving \Cref{thm:poly-sized-LP-with-integrality-gap-atmost-2-for-FVS}. For the sake of completeness, we discuss two additional polynomial-sized ILP formulations for FVS with integrality gap at most $2$ in Section \ref{sec:more-FVS-formulations}. 

\iffalse
In Section \ref{sec:orientatin-polytope}, we prove properties of the orientation polyhedron and show \Cref{lemma:orientation-formulation}. 
In Section \ref{sec:integrality-gap-of-weak-density-and-cycle-cover}, we prove properties of the weak density polyhedron and 
%, show that the cycle cover polyhedron admits a polynomial-sized formulation, and 
show \Cref{thm:integrality-gap-wd+cycle-cover}. 
In Section \ref{sec:poly-sized-lp-relaxation-with-gap-atmost-two-for-fvs}, we show three different polynomial-sized ILPs for FVS whose LP-relaxation has integrality gap at most $2$, thereby proving \Cref{thm:poly-sized-LP-with-integrality-gap-atmost-2-for-FVS}. In Sections \ref{sec:extreme-point-weak-density} and \ref{sec:orientation-polytope-extreme-point}, we prove extreme point properties of the weak density polyhedron and the orientation polyhedron and show \Cref{thm:weak-density-extreme-point} and \Cref{thm:orientation-polyhedron-extreme-point} respectively. 
\fi
%\input{results}
\section{Preliminaries}
%\knote{Define integrality gap. Note: integrality gap is the worst-case gap among all ``non-negative'' cost functions.} 
For a minimization IP $\min\{c^Tx: Ax\le b, x\in \Z^n_{\geq 0}\}$, the \emph{integrality gap} of its LP-relaxation $\min\{c^Tx: Ax\le b, x\geq 0\}$ is the maximum ratio, over all non-negative cost functions, of the minimum cost of a feasible solution to the IP and that of the LP, i.e., $\max_{c\in \R^n} \min\{c^Tx: Ax\le b, x\in \Z^n_{\geq 0}\}/\min\{c^Tx: Ax\le b, x \geq 0\}$.
%\knote{Define $\alpha$-approximation with respect to a LP.}
%\knote{Glossary of all polyhedra and all IPs and LPs?}
%\knote{State only those preliminaries which are necessary for all sections. } 
For a graph $G = (V, E)$, we will use $\delta(A, B)$ to denote the set of edges with exactly one end-vertex in $A$ and exactly one end-vertex in $B$.
We summarize the results of Lin, Feng, Fu, and Wang \cite{LFFW19} in a manner that will be useful for our polyhedral study.  
\begin{theorem}\label{thm:PFDS-WD-and-2PT-cover}
(\ref{PFDS-IP: WD}) and (\ref{PFDS-IP: WD-and-2PT-cover}) are ILP formulations for PFDS. Furthemore, the LP-relaxation (\ref{PFDS-LP: WD-and-2PT-cover}) has integrality gap at most $2$. 
\end{theorem}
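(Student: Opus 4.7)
The plan is to establish the three claims in order, with the integrality gap bound reducing to a primal--dual reinterpretation of \cite{LFFW19}.

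For the two ILP formulation claims, I would first verify that the indicator vector $\chi_U \in \{0,1\}^V$ of any pseudoforest deletion set $U$ lies in $\weakdensitypolyhedron(G) \cap \twopseudotreecoverpolyhedron(G)$. Fix $S \subseteq V$ and let $S' := S \setminus U$; since $G[S']$ is an induced subgraph of the pseudoforest $G-U$, we have $|E[S']| \le |S'|$. Using the edge-count identity $\sum_{u \in S \cap U} d_S(u) = 2|E[S \cap U]| + |\delta_{G[S]}(S \cap U, S')|$, the weak density inequality for $x = \chi_U$ reduces algebraically to $|E[S \cap U]| + |S'| \ge |E[S']|$, which is immediate. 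For any $T$ with $G[T]$ a $2$-pseudotree, $|E[T]| \ge |T|+1 > |T|$, so $T \not\subseteq V \setminus U$ and hence $\chi_U(T) \ge 1$. Conversely, given $x \in \weakdensitypolyhedron(G) \cap \mathbb{Z}^V_{\ge 0}$, set $U := \{u : x_u \ge 1\}$; if $G-U$ were not a pseudoforest, some connected component $H$ would satisfy $|E(H)| \ge |V(H)|+1$, and plugging $S := V(H)$ into the weak density constraint would force $0 \ge |E[S]|-|S| \ge 1$, a contradiction. Adding the (valid) $2$-pseudotree cover constraints then preserves the integer hull, so (\ref{PFDS-IP: WD-and-2PT-cover}) is also an ILP for PFDS.

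For the integrality gap bound of $2$, I would recast the local-ratio algorithm of \cite{LFFW19} as a primal--dual algorithm for (\ref{PFDS-LP: WD-and-2PT-cover}), paralleling how \cite{CHUDAK1998111} reinterprets \cite{Bafna-Berman-Fujito95} for FVS. Starting from $x = 0$ and dual vectors $y = z = 0$, iterate: (i) strip all vertices of degree at most one from $G \setminus F$; (ii) identify a witness in the residual graph, either a set $S$ with $|E[S]| > |S|$ (driving a weak-density raise) or a $2$-pseudotree $T$ (driving a $2$PT-cover raise); (iii) uniformly raise the corresponding dual coordinate until some $u$ becomes tight, i.e.\ $\sum_{S' \ni u}(d_{S'}(u)-1) y_{S'} + \sum_{T \ni u} z_T = c_u$; and (iv) add $u$ to $F$. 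A reverse-delete sweep enforces that $F$ is minimal with respect to the PFDS property. Summing the tight dual constraints over $u \in F$ gives
\begin{align*}
\sum_{u \in F} c_u = \sum_{S} y_S \sum_{u \in F \cap S}(d_S(u)-1) + \sum_{T} z_T \cdot |F \cap T|,
\end{align*}
to be compared against the dual objective $\sum_S (|E[S]|-|S|) y_S + \sum_T z_T$.

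The main obstacle, and the technical heart of the argument, is establishing the two local-ratio inequalities $\sum_{u \in F \cap S}(d_S(u)-1) \le 2(|E[S]|-|S|)$ whenever $y_S > 0$, and $|F \cap T| \le 2$ whenever $z_T > 0$. Both are structural statements about how the minimal PFDS $F$ sits inside the witness subgraph at the instant its dual was raised: minimality forces every $u \in F$ to reintroduce a cycle into $G \setminus (F \setminus u)$, which I expect to exploit via a degree-sum argument on the (degree $\ge 2$) core of the witness, handling the $2$PT-cover case by a short case analysis on how $F$ can meet a single $2$-pseudotree. Once these inequalities are in place, weak LP duality yields $\sum_{u \in F} c_u \le 2 \cdot \mathrm{OPT_{LP}}$, proving the integrality gap bound.
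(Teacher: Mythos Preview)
The paper does not give its own proof of this theorem: it is stated in the Preliminaries with the remark that Lin, Feng, Fu, and Wang ``did not prove [it] directly, but their local ratio based algorithm and standard techniques to convert local ratio based algorithms to LP-based primal-dual algorithms (following the ideas in \cite{CHUDAK1998111}) lead to the theorem.'' Your proposal is exactly to carry out that conversion, so the overall approach matches what the paper invokes rather than diverging from it.

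Two points in your sketch need attention. First, a minor slip: minimality of a PFDS $F$ does not give a witness \emph{cycle} for each $u \in F$; it gives a witness $2$-pseudotree (some component of $G-(F\setminus\{u\})$ has more edges than vertices), and your degree-sum argument must use this stronger witness. Second, and more substantively, you have not specified the branching rule for step~(ii)---when to raise a weak-density dual versus a $2$PT-cover dual. The two key inequalities as you stated them are not true for arbitrary choices: $|F \cap T| \le 2$ can fail for a generic $2$-pseudotree $T$, and $\sum_{u \in F \cap S}(d_S(u)-1) \le 2(|E[S]|-|S|)$ can fail for a generic $S$. As in the paper's FVS analogue (Algorithm~\ref{alg:primal-dual-fvs} and Lemma~\ref{lem:minimal-fvs-is-2apx}), the algorithm must raise $z_T$ only on a \emph{semi-disjoint} $2$-pseudotree (one meeting the rest of the degree-$\ge 2$ core at a single vertex), and raise $y_S$ only with $S$ equal to the full current core \emph{when no such $T$ exists}; the local-ratio inequalities from \cite{LFFW19} hold under precisely those hypotheses. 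Without pinning down this rule, the primal--dual accounting does not close.
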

We emphasize that Lin, Feng, Fu, and Wang did not prove \Cref{thm:PFDS-WD-and-2PT-cover} directly, but their local ratio based algorithm and standard techniques to convert local ratio based algorithms to LP-based primal-dual algorithms (following the ideas in \cite{CHUDAK1998111}) lead to the theorem.  

\section{Pseudoforest Deletion Set}
%: ILP Formulations, LP Integrality Gap, and Extreme Point Properties}
\label{sec:PFDS}
In this section, we formally prove our main results for Pseudoforest Deletion Set. We prove \Cref{lemma:orientation-formulation} in \Cref{sec:orientatin-polytope}, \Cref{thm:weak-density-extreme-point} in \Cref{sec:extreme-point-weak-density}, and \Cref{thm:orientation-polyhedron-extreme-point} in \Cref{sec:orientation-polytope-extreme-point}.
%i.e. we show that the IP with orientation and 2-pseudotree covering constraints is a formulation for pseudoforest deletion set; moreover, we show that the LP relaxation is polynomial time solvable and that its integrality gap is at most $2$. 
%In subsequent sections, we study extreme point properties of the weak density polyhedron and the orientation polyhedron, and show that for extreme points of both polyhedra, there exist co-ordinates with value at least $1/3$. In particular,  we prove \Cref{thm:weak-density-extreme-point} in \Cref{sec:extreme-point-weak-density} and we prove \Cref{thm:orientation-polyhedron-extreme-point} in \Cref{sec:orientation-polytope-extreme-point}.

\subsection{Orientation and $2$-Pseudotree Cover constraints}\label{sec:orientatin-polytope}
% \section{Polynomial-time solvable LP-relaxation for PFDS with integrality gap at most $2$}\label{sec:orientatin-polytope}
In this section, we restate and prove \Cref{lemma:orientation-formulation}. 

\lemmaOrientationFormulation*
\begin{proof}
In order to show that (\ref{PFDS-IP:orient}) and (\ref{PFDS-IP:Orient-and-2PT-cover}) are ILP formulations for PFDS, it suffices to show that the ILP (\ref{PFDS-IP:orient}) is a formulation for PFDS. This is because $2$-pseudotree cover constraints are valid for PFDS.

For a subgraph $\Tilde{G} = (\Tilde{V}, \Tilde{E})$ of the graph $G$, we let $d_{\Tilde{G}}(u):=|\delta(u)\cap \Tilde{E}|$ denote the degree of the vertex $u$ in the subgraph $\Tilde{G}$. We now define an intermediate polyhedron $\weakdensitypolyhedronforsubgraphs(G)$.
\begin{align}
\weakdensitypolyhedronforsubgraphs(G) 
    &:= 
    \left\{ x\in \R^{V}_{\ge 0}:\ \sum_{u\in \Tilde{V}}(d_{\Tilde{G}}(u)-1)x_u\geq |\Tilde{E}| - |\Tilde{V}| \ \forall \text{ subgraphs $\Tilde{G} = (\Tilde{V}, \Tilde{E})$ of $G$}.
\right\}. \label{eqn:weak-density-for-subgraphs}
\end{align}
We note that $\weakdensitypolyhedronforsubgraphs(G) \subseteq \weakdensitypolyhedron(G)$. We will use the following claim to show that the ILP  (\ref{PFDS-IP:orient}) formulates PFDS and also to conclude that $\projectedorientationpolyhedron(G)\subseteq \weakdensitypolyhedron(G)$. 

\begin{claim}\label{claim:orientation-inside-weak-density-subgraphs}
$\projectedorientationpolyhedron(G)\subseteq \weakdensitypolyhedronforsubgraphs(G)$.
\end{claim}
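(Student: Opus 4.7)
The plan is to prove the containment by a direct aggregation argument: sum the edge covering constraints of $\orientationpolyhedron(G)$ over all edges of $\tilde{E}$, and bound the resulting $y$-contribution using the degree constraints at each vertex of $\tilde{V}$.

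More concretely, fix $x \in \projectedorientationpolyhedron(G)$ and pick any $y$ such that $(x,y) \in \orientationpolyhedron(G)$. Fix an arbitrary subgraph $\tilde{G} = (\tilde{V}, \tilde{E})$ of $G$. For every edge $e = uv \in \tilde{E}$ the covering inequality $x_u + x_v + y_{e,u} + y_{e,v} \geq 1$ holds. Summing these over $e \in \tilde E$ and grouping by endpoint yields
\[
\sum_{u \in \tilde V} d_{\tilde G}(u)\, x_u \;+\; \sum_{u \in \tilde V} \sum_{e \in \delta(u) \cap \tilde E} y_{e,u} \;\geq\; |\tilde E|,
\]
where I use that $\tilde E$ has both endpoints in $\tilde V$, so only vertices of $\tilde V$ appear on the left-hand side.

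Next I would upper bound the $y$-sum using the degree constraint at each vertex. For each $u \in \tilde V$, $\sum_{e \in \delta(u) \cap \tilde E} y_{e,u} \leq \sum_{e \in \delta(u)} y_{e,u} \leq 1 - x_u$ by nonnegativity of $y$ and the constraint $x_u + \sum_{e \in \delta(u)} y_{e,u} \leq 1$ from $\orientationpolyhedron(G)$. Summing over $u \in \tilde V$ gives
\[
\sum_{u \in \tilde V} \sum_{e \in \delta(u) \cap \tilde E} y_{e,u} \;\leq\; |\tilde V| - \sum_{u \in \tilde V} x_u.
\]

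Substituting this into the previous display and rearranging yields
\[
\sum_{u \in \tilde V} (d_{\tilde G}(u) - 1)\, x_u \;\geq\; |\tilde E| - |\tilde V|,
\]
which is exactly the weak density constraint indexed by the subgraph $\tilde G$. Since $\tilde G$ was arbitrary, $x \in \weakdensitypolyhedronforsubgraphs(G)$, proving the claim. I do not expect any genuine obstacle here; the argument is a one-line aggregation, and the only care needed is to keep the indexing consistent between $V$ and $\tilde V$ when exchanging the order of summation.
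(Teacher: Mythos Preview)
Your proof is correct and follows essentially the same approach as the paper: sum the edge covering constraints of $\orientationpolyhedron(G)$ over $\tilde{E}$, then bound the aggregated $y$-terms using the degree constraints $x_u + \sum_{e\in\delta(u)} y_{e,u} \le 1$ at each $u\in\tilde{V}$ (together with $y\ge 0$), and rearrange. The paper presents the same computation slightly more compactly by grouping the terms as $\sum_{v\in\tilde{V}}(d_{\tilde{G}}(v)-1)x_v + \sum_{v\in\tilde{V}}\bigl(x_v + \sum_{e\in\delta_{\tilde{G}}(v)} y_{e,v}\bigr)$ before applying the degree bound, but the substance is identical.
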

\begin{proof}
Let $x\in \projectedorientationpolyhedron(G)$. Then, there exists a vector $y$ such that $(x, y)\in \orientationpolyhedron(G)$. Let $\Tilde{G} = (\Tilde{V}, \Tilde{E})$ be an arbitrary subgraph of $G$. We have the following:
\begin{align*}
|\Tilde{E}| &\leq \sum_{e = vw \in \Tilde{E}} (x_v + x_w + y_{e,v} + y_{e,w})\\
       &= \sum_{v \in \Tilde{V}} (d_{\Tilde{G}}(v)-1) x_v + \sum_{v \in \Tilde{V}} \left( x_v + \sum_{e \in \delta_{\Tilde{G}}(v)} y_{e,v}\right)\\ 
       &\leq \sum_{v \in \Tilde{V}} (d_{\Tilde{G}}(v)-1) x_v + |\Tilde{V}|,
\end{align*}
where the inequalities are because the vector $(x,y) \in \orientationpolyhedron$. Rearranging the above inequality gives the constraint for $\tilde{G}$ given in $\weakdensitypolyhedronforsubgraphs(G)$. Hence, $x\in \weakdensitypolyhedronforsubgraphs(G)$.      
\end{proof}

We remark that \Cref{claim:orientation-inside-weak-density-subgraphs} can be strengthened to show that $\projectedorientationpolyhedron(G)= \weakdensitypolyhedronforsubgraphs(G)$, but we will not need this stronger version for the purposes of this theorem. We now show that the ILP (\ref{PFDS-IP:orient}) formulates PFDS. 
Let $S \subseteq V$ be a pseudoforest deletion set, i.e., 
 the subgraph $F := G - S$ is a pseudoforest. Let $x := \chi^S \in \{0,1\}^{V}$ denote the indicator vector of the set $S$.  
 Select an arbitrary orientation of the pseudoforest $F$ such that the maximum indegree of every vertex is at most $1$, and let $y_{e,v} := 1$ for the edge $e = vw$ if $e$ is an edge of $F$ that is oriented towards $v$, and $y_{e,v} := 0$ otherwise. 
 Then, $(x,y) \in \orientationpolyhedron(G)$. 

Next, suppose $x\in \projectedorientationpolyhedron(G)\cap \Z^V$. Then, we have that $x\in \{0,1\}^V$. By \Cref{claim:orientation-inside-weak-density-subgraphs}, $x\in \weakdensitypolyhedronforsubgraphs(G)\subseteq \weakdensitypolyhedron(G)$. 
Since (\ref{PFDS-IP: WD}) is an ILP formulation for PFDS by \Cref{thm:PFDS-WD-and-2PT-cover}, it follows that the set $S:=\{u\in V: x_u =1\}$ is a pseudoforest deletion set for the graph $G$. This concludes the proof that the ILP (\ref{PFDS-IP:orient}) formulates PFDS. 

We now prove the two additional conclusions of the theorem statement. 

\begin{enumerate}[label=(\arabic*)]
\item By Claim \ref{claim:orientation-inside-weak-density-subgraphs}, we have that $\projectedorientationpolyhedron(G)\subseteq \weakdensitypolyhedronforsubgraphs(G)\subseteq \weakdensitypolyhedron(G)$. 
We now show that there is a graph $G$ such that $\weakdensitypolyhedronforsubgraphs(G)\subsetneq \weakdensitypolyhedron(G)$. In particular, we consider the graph $K_5 = (V, E)$ where $V = \{v_1, v_2, \ldots, v_5\}$ and $E = {V \choose 2}$. Let $x = (2/3, 2/3, 1/3, 0, 0)$. We note that $x \in \weakdensitypolyhedron(K_5)$.
We now show that $x \not \in \weakdensitypolyhedronforsubgraphs(K_5)$.
    Consider the subgraph $\Tilde{G} = (\Tilde{V}, \Tilde{E})$ obtained by removing the edge $\{v_1, v_2\}$ from the graph $G$, i.e. $\Tilde{V} = V$ and $\Tilde{E} = {V\choose 2} - \{v_1, v_2\}$.
    Then, we have the following:
    %$$\sum_{u \in \Tilde{V}}(d_{\Tilde{G}}(u) - 1)x_u = 3x_1 + 3x_2 + 4x_3 + 4x_4 + 4x_5 = \frac{46}{12} < 4 = |\Tilde{E}| - |\Tilde{V}|.$$
    $$\sum_{u \in \Tilde{V}}(d_{\Tilde{G}}(u) - 1)x_u = 2x_1 + 2x_2 + 3x_3 + 3x_4 + 3x_5 = \frac{11}{3} < 4 = |\Tilde{E}| - |\Tilde{V}|.$$
    
    In particular, the vector $x$ does not satisfy the constraint of $\weakdensitypolyhedronforsubgraphs(K_5)$ defined by the subgraph $\Tilde{G}$. Thus, we have that $\projectedorientationpolyhedron(K_5) \subseteq \weakdensitypolyhedronforsubgraphs(K_5)\subset \weakdensitypolyhedron(K_5)$. 

    \item By \Cref{thm:PFDS-WD-and-2PT-cover}, the ILP (\ref{PFDS-IP: WD-and-2PT-cover}) is a valid formulation of PFDS and its LP-relaxation (\ref{PFDS-LP: WD-and-2PT-cover}) has integrality gap at most $2$. We have already shown that the ILP (\ref{PFDS-IP:Orient-and-2PT-cover}) is a valid formulation of PFDS and that $\projectedorientationpolyhedron(G)\subseteq \weakdensitypolyhedron(G)$. Consequently, the integrality gap of (\ref{PFDS-LP:Orient-and-2PT-cover}) is at most $2$. Finally, we note that (\ref{PFDS-LP: WD-and-2PT-cover}) can be solved in polynomial time because $\projectedorientationpolyhedron(G)$ is the projection of $\orientationpolyhedron(G)$ which admits a polynomial sized description, and because 2-pseudotree-cover constraints admit a polynomial time separation oracle by \Cref{thm:MC2PT-polytime:main} (in \Cref{sec:2pt-cover-constraints-separation-oracle}). The polynomial time separation oracle for 2-pseudotree-cover constraints is based on the fact that node-weighted Steiner tree for constant number of terminals is solvable in polynomial time. We present its proof in a separate subsection for clarity. 

%\item The third conclusion follows from \Cref{thm:MC2PT-polytime:main} in \shubhang{the appendix}. It is based on the fact that node-weighted Steiner tree for constant number of terminals is solvable in polynomial time. We present its proof in a separate subsection for clarity. 
\end{enumerate}
\end{proof}

% \subsubsection{Weak Density Constraints}

%\subsection{Extreme Point Results}

\subsection{Extreme point property of the Weak Density polyhedron}\label{sec:extreme-point-weak-density}
%\section{Extreme Point Property of the Weak Density Polyhedron}\label{sec:extreme-point-weak-density}
In this section, we prove \Cref{thm:weak-density-extreme-point}. We restate it below.
\thmWeakDensityExtremePoint*
%We prove \Cref{thm:weak-density-extreme-point} in this section. 
For a fixed $x\in \R^V$, we define the function $f_x: 2^V\rightarrow \R$ as 
\[
f_x(S) := |E[S]|-|S| - \sum_{u\in S}(d_S(u)-1)x_u\ \forall S\subseteq V. 
\]
Using this function, we can express the weak density polyhedron as 
\[
    \weakdensitypolyhedron(G)
     = \left\{x\in\nnreal^{V} : f_x(S) \leq 0\  \forall S \subseteq V\right\}.
\]

 In \Cref{sec:supermodular-background}, we recall properties of supermodular functions and chain set families. 
In  \Cref{sec:conditional-supermodularity-of-weak-density-constraints}, we show that if $x_u\le 1/2$ for all $u\in V$, then the function $f_x:2^V\rightarrow \R$ is supermodular. This conditional supermodularity property allows us to uncross the tight constraints that form a basis of an extreme point $x$ for which $x_u\le 1/2$ for all $u\in V$. We prove properties about the tight constraints in  \Cref{sec:conditional-properties-tight-sets} and show the existence of a \emph{chain basis} in  \Cref{sec:conditional-basis-structure-for-extreme-point}. Using the chain basis structure and its properties, we complete the proof of  \Cref{thm:weak-density-extreme-point} in \Cref{sec:counting-argument}. 

\paragraph{Notation.} Let $b(S):=|E[S]|-|S|$ for all $S\subseteq V$. For vertex $u\in V$, let $\chi_u \in \{0,1\}^V$ denote the indicator vector of $u$. For a subset $S\subseteq V$, let $\row(S)$ denote the row of the constraint matrix of $\weakdensitypolyhedron(G)$ corresponding to the set $S$, i.e, $$\row(S)_u = \begin{cases}
    d_S(u) - 1 & \text{ if $u \in S$} \\
    0 & \text{ o.w.}\\
\end{cases} \qquad \forall u \in V.$$
For $J \subseteq 2^V$, let $\Rows(J) = \{\row(S):S \in J\}$. When the context is clear, we also use $\Rows(J)$ to denote the submatrix of the weak density polyhedron constraint matrix given by the set $\Rows(J)$.
We let $\spanfunc(\Rows(J))$ denote the \emph{span} of the set of vectors $\Rows(J)$, i.e. the smallest linear subspace that contains the set  $\Rows(J)$.
%We say that the set $\Rows(J)$ is a \emph{basis} for an extreme point $x \in P(G)$ of the weak density polyhedron if $\spanfunc(\Rows(J)) = \spanfunc(\Rows(2^V))$. 
We let $\mathsf{columns}(J)$ denote the set of columns of the submatrix $\Rows(J)$. For matrix $M$, we denote $\mathsf{rank}(M)$ as the \emph{rank} of the matrix. We note that $\mathsf{rank}(\Rows(J)) = \mathsf{rank}(\mathsf{columns}(J))$. For $x\in \R^V$, we will use $g_x(S):=\sum_{u\in S}(d_S(u)-1)x_u$ for all $S\subseteq V$, $\calT_x:=\{S\subseteq V: f_x(S)=0\}$ be the tight sets for $x$, and $\calZ:=\{u\in V: x_u=0\}$.
 Let $x \in P(G)$ be an extreme point of the weak density polyhedron. For $J \subseteq \calT_x$, we say that the set $\Rows(J)$ is a \emph{basis} for $x$ if $\spanfunc(\Rows(J)) = \spanfunc(\Rows(2^V))$.

\subsubsection{Background on supermodular functions and set families}\label{sec:supermodular-background}
%\paragraph{Background on supermodular functions and set families}
Here, we recall supermodular functions, chain set families, and some of their properties that will be useful while proving the extreme point property of the weak density polyhedron. 
A set function $f:2^V\rightarrow \mathbb{R}$ is said to be \emph{supermodular} if $f(A) + f(B) \leq f(A\cap B) + f(A\cup B)$ for all subsets $A, B \subseteq V$. 
We refer the reader to \cite{Fujishige05} for background on supermodular set functions and their properties. 
We will rely on the following property:
 
\begin{proposition}\label{prop:supermodularity-of-graph-edge-functions}
Let $G = (V, E)$ be an undirected graph with non-negative edge weights $w: E\rightarrow \R_{\ge 0}$. Then, the function $f: 2^V\rightarrow \R_{\ge 0}$ defined by $f(S) := \sum_{e\in E[S]}w_e$ for all $S\subseteq V$ is supermodular. 
\end{proposition}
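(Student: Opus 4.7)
The plan is to establish the supermodular inequality $f(A) + f(B) \leq f(A \cap B) + f(A \cup B)$ directly from the definition, by reducing it to a per-edge check. Since $f(S) = \sum_{e \in E[S]} w_e$ is a non-negative linear combination of the weights, and each $w_e \geq 0$, it suffices to show that for every edge $e = uv \in E$, the number of sets among $\{A, B\}$ whose induced subgraph contains $e$ is at most the number of sets among $\{A \cap B,\, A \cup B\}$ whose induced subgraph contains $e$. Equivalently, writing $\mathbf{1}_S(e) = 1$ iff both endpoints of $e$ lie in $S$, the goal is to show $\mathbf{1}_A(e) + \mathbf{1}_B(e) \leq \mathbf{1}_{A\cap B}(e) + \mathbf{1}_{A\cup B}(e)$ for every edge $e$.

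To verify this per-edge inequality, I would partition the possible locations of $u$ and $v$ among the four regions $A \cap B$, $A \setminus B$, $B \setminus A$, and $V \setminus (A \cup B)$, and inspect each case. In every case the two sides are equal, with a single exception: when one endpoint lies in $A \setminus B$ and the other in $B \setminus A$, the edge lies only in $E[A \cup B]$, so the left side is $0$ while the right side is $1$. All remaining cases reduce, by symmetry or by the fact that an endpoint outside $A \cup B$ kills all four indicators, to an equality.

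Multiplying the resulting per-edge inequality by $w_e \geq 0$ and summing over $e \in E$ then yields the supermodularity of $f$. The non-negativity of the weights is used only at this final weighting step, to preserve the direction of the inequality. No step is expected to be a serious obstacle; the argument is folklore and is recorded here mainly to feed into the subsequent conditional supermodularity analysis of $f_x$ under the assumption $x_u \leq 1/2$ for all $u$.
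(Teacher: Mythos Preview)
Your argument is correct: reducing to the per-edge indicator inequality $\mathbf{1}_A(e) + \mathbf{1}_B(e) \le \mathbf{1}_{A\cap B}(e) + \mathbf{1}_{A\cup B}(e)$ and checking it by a case split on the positions of the two endpoints is a clean and complete proof, and you have correctly identified the one strict case (one endpoint in $A\setminus B$, the other in $B\setminus A$).

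For comparison, the paper does not actually give a proof of this proposition at all; it is stated as background and the reader is referred to Fujishige's monograph. So your write-up supplies strictly more than the paper does here. The only minor remark is that the final sentence about feeding into the analysis of $f_x$ under $x_u \le 1/2$ is commentary rather than part of the proof and could be dropped.
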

Two sets $A$ and $B$ are said to \emph{cross} if they have a non-empty intersection and neither set is contained in the other, i.e. $A\cap B, A - B, B - A \not = \emptyset$.
For a ground set $V$, the set $\calC = \{A_1, A_2, \ldots A_\ell \} \subseteq 2^{V}$ is said to be a \emph{chain family} if its elements can be ordered such that $A_1 \subseteq A_2 \subseteq \ldots \subseteq A_\ell$. We need the following proposition on chain families. This is well-known but we give
a proof for the sake of completeness. %\chandra{Added the sentence because this is a standard one in uncrossing arguments.}

\begin{proposition}\label{prop:chain-uncrossing}
Let $\calC$ be a chain family and $A$ be a subset that crosses some set $B\in \calC$. Then, the number of sets in $\calC$ crossed by $A\cup B$ and $A\cap B$ is strictly less than the number of sets in $\calC$ crossed by $A$.
\end{proposition}
\begin{proof}
Our strategy will be to pick an arbitrary set in $\calC$ that crosses the set $A\cup B$ (resp. $A\cap B$) and show that this picked set also crosses the set $A$. The claim then follows since the set $B$ crosses the set $A$ but not the set $A\cup B$ (resp $A\cap B$).

Let $P\in\calC$ be an arbitrary set that crosses the set $A\cup B$. Since both $B, P\in \calC$, it must be that either $P \subset B$ or $B\subset P$. First, we consider the case where $P \subset B$. Then $P \subset A\cup B$ and thus does not cross the set $A\cup B$ as it is contained in it. This contradicts the choice of $P$. Next, we consider the case where $B\subset P$. If $A \subseteq P$ then $A\cup B \subseteq P$, contradicting choice of $P$ crossing $A\cup B$. Furthermore, if $P \subseteq A$, then $B \subset P \subseteq A$, contradicting the hypothesis that the set $B$ crosses the set $A$. Finally, $A\cap P \not = \emptyset$ as $A\cap B\not = 0$ and $B\subset P$. Thus the set $P$ crosses the set $A$.

Let $P\in\calC$ be an arbitrary set that crosses the set $A\cap B$. Since the sets $P, B \in \calC$ are part of a chain family, it must be that either $P\subseteq B$ or $B\subseteq P$. If $B \subseteq P$, then $A\cap B \subseteq P$, contradicting the choice of set $P$. Thus $P \subseteq B$. If $P \subseteq A$, then $P \subseteq A\cap B$, contradicting the choice of set $P$. Moreover, if $A \subseteq P$, then we have that $A \subseteq B$, contradicting that $A$ and $B$ cross. Thus the set $P$ crosses the set $A$.
\end{proof}

\subsubsection{Conditional supermodularity of Weak Density constraints}\label{sec:conditional-supermodularity-of-weak-density-constraints}
%\paragraph{Conditional Supermodularity of Weak Density Constraints}
We now show that the function $f_x$ is supermodular if  $x_u < \frac{1}{2}$ $\forall u \in V$. 
%\chandra{For each when one wants to say for all, can be confusing. I did some fixes in this proof. }
\begin{lemma}\label{lem:supermodularity}
Let $x\in \R^V$. If $x_u < \frac{1}{2}\ \forall u\in V$, then the function $f_x$ is  supermodular.
\end{lemma}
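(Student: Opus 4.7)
The plan is to rewrite $f_x$ in a form where the dependence on $x$ appears only as edge weights, and then invoke \Cref{prop:supermodularity-of-graph-edge-functions}. Starting from the identity $\sum_{u \in S} d_S(u) x_u = \sum_{e=uv \in E[S]}(x_u + x_v)$, I would express
\begin{align*}
f_x(S) &= |E[S]| - |S| - \sum_{u\in S}(d_S(u)-1)x_u \\
       &= \sum_{e=uv \in E[S]}(1 - x_u - x_v) \;+\; \sum_{u \in S}(x_u - 1).
\end{align*}
The second sum is modular in $S$, so it suffices to show that the first sum is supermodular.

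For the first sum, define edge weights $w_{uv} := 1 - x_u - x_v$ for each $e = uv \in E$. The hypothesis $x_u < 1/2$ for every $u \in V$ guarantees $w_{uv} > 0 \geq 0$ for every edge, so the weights satisfy the non-negativity requirement of \Cref{prop:supermodularity-of-graph-edge-functions}. Applying that proposition to the graph $G$ with these edge weights yields supermodularity of $S \mapsto \sum_{e \in E[S]} w_e$, which is precisely the first sum. Adding the modular term preserves supermodularity, completing the proof.

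The argument is essentially a one-line calculation followed by a quotation of the standard fact that the edge-counting function $S \mapsto \sum_{e \in E[S]} w_e$ is supermodular when weights are non-negative; no obstacle is expected. The only subtlety worth flagging is that the hypothesis $x_u < 1/2$ is used solely to ensure non-negativity of the derived weights $1 - x_u - x_v$, which is exactly where the ``conditional'' in \emph{conditional supermodularity} enters.
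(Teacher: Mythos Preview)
Your proof is correct and follows essentially the same approach as the paper: both rewrite $f_x(S)$ as $\sum_{uv\in E[S]}(1-x_u-x_v)+\sum_{u\in S}(x_u-1)$, observe that the hypothesis $x_u<1/2$ makes the induced edge weights $1-x_u-x_v$ non-negative, invoke \Cref{prop:supermodularity-of-graph-edge-functions} for the first sum, and note that the second sum is modular.
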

\begin{proof}
Let $S\subseteq V$. We have 
\begin{align*}
    f_x(S)
    & = |E[S]| - |S| - \sum_{u\in S}(d_S(u) - 1)x_u&\\
    & = |E[S]| - \sum_{u\in S}d_S(u)x_u - |S| + \sum_{u\in S}x_u&\\
    &= |E[S]| - \sum_{u \in S}\sum_{v \in \delta(u)}x_u - \sum_{u \in S}(1 - x_u)&\\
    & = |E[S]| - \sum_{uv\in E[S]}(x_u + x_v) - \sum_{u \in S}(1 - x_u)&\\
    &= \sum_{uv\in E[S]}\left(1 - (x_u+x_v)\right) - \sum_{u\in S}\left(1 - x_u\right).&
\end{align*}
Thus, the function $f_x$ can be expressed as $f_x=p_x-q_x$ for functions $p_x, q_x: 2^V\rightarrow \R$ defined by $p_x(S) :=\sum_{uv\in E[S]}(1 - (x_u + x_v))$ and $q_x(S) := \sum_{u\in S}(1 - x_u)$.
%for all $S\subseteq V$. 
Since $x_u<1/2$ for all $u\in V$, we have that $p_x(S)\ge 0$ and $q_x(S)\ge 0$ for all $S\subseteq V$. 

Now, let $H=(V, E)$ be a graph with edge weights $w:E\rightarrow \R$ defined by $w(uv):=1-(x_u+x_v)$. 
%for each $uv\in E$. 
We note that the edge weights are non-negative since $x_a<1/2$ for all $a\in V$. 
By definition, $p_x(S)=\sum_{uv\in E[S]}w(uv)$ for all $S\subseteq V$. By \Cref{prop:supermodularity-of-graph-edge-functions}, the function $p_x$ is supermodular. Moreover, the function $q_x$ is modular. Thus, the function $f_x$ is the sum of a supermodular function and a modular function. Consequently, $f_x$ is a supermodular function. 
\end{proof}

\subsubsection{Conditional properties of tight sets}\label{sec:conditional-properties-tight-sets}
%\paragraph{Conditional Properties of Tight Sets}
Next, we prove certain properties of tight sets that help us to prove the existence of a well-structured basis for the extreme point $x$ under the condition that $x_u<1/2$ for every $u\in V$. 

\begin{lemma}[Conditional Uncrossing Properties]\label{lem:tight-sets-conditional-uncrossing-properties}
Let $x$ be an extreme point of $\weakdensitypolyhedron(G)$ such that $x_u < \frac{1}{2}$ for all $u\in V$ and the family of tight sets for $x$ be $\calT_x := \{S\subseteq V : f_x(S) = 0\}$. 
Let $A,B\in\calT$. Then,
\begin{enumerate}
    \item $A\cap B \neq \emptyset$, i.e., tight sets overlap,
    \item $A\cap B, A\cup B\in \calT_x$, i.e. tight sets form a lattice family, 
    \item $\delta(A-B, B-A)=\emptyset$, i.e. tight sets admit no crossing edges, and 
    \item $\row(A) + \row(B) = \row(A\cap B) +\row(A\cup B)$.%, i.e. row vectors of tight sets span the row vectors of their meet and join.
    %\item $G[A]$ is connected, and 
\end{enumerate}
\end{lemma}
\begin{proof} 
We recall that for a set $S \subseteq V$, we denote $b(S):=|E[S]|-|S|$. Moreover, for the point $x \in P_{WD}(G)$, we denote $g_x(S):=\sum_{u\in S}(d_S(u)-1)x_u$. We prove each property separately below.
\begin{enumerate}
    \item By way of contradiction, assume that $A\cap B = \emptyset$. We have that
    \begin{align*}
        b(A\cup B)&\leq g_x(A\cup B)&\\
        & = g_x(A) + g_x(B) + \sum_{ab \in \delta(A,B)}(x_a + x_b)&\\
        & = |E[A] - |A| + |E[B] - |B| + \sum_{ab \in \delta(A,B)}(x_a + x_b)&\\
        & = |E[A\cup B]| - |A\cup B|- |\delta(A, B)| + \sum_{ab \in \delta(A,B)}(x_a + x_b)&\\
        & = b(A \cup B) - |\delta(A, B)| + \sum_{ab \in \delta(A,B)}(x_a + x_b)&\\
        &< b(A \cup B) - |\delta(A, B)| + |\delta(A, B)|,&
    \end{align*}
    a contradiction. Here, the first inequality is by the weak density constraint for the set $A\cup B$. The 
    first and third equalities are by the hypothesis that $A\cap B = \emptyset$. The final inequality is because $x_u < \frac{1}{2}$ $\forall u \in V$.
    
    \item We have the following: 
    $$0 = f_x(A) + f_x(B) \leq f_x(A\cup B) + f_x(A\cap B) \leq 0.$$
    Here, the first equality is due to the sets $A, B \in \calT_x$, and thus $f_x(A) = f_x(B) = 0$. The first inequality follows from supermodularity of the function $f_x$ shown in \Cref{lem:supermodularity}. The final inequality follows from weak density constraints for the sets $A\cup B$ and $A\cap B$.
    Thus, all inequalities are  equalities, and we have that $f_x(A\cap B) = f_x(A\cup B) = 0$ using weak density constraints on the respective sets.
    
    \item By way of contradiction, assume that $\delta(A-B, B-A)\not = \emptyset$.
    Since $A,B\in\calT_x$, we have that $A\cap B$, $A\cup B \in\calT_x$ by the previously shown property that tight sets uncross. Thus, $f_x(A) = f_x(B) = f_x(A\cap B) = f_x(A\cup B) = 0$. 
    We have that 
    \begin{align*}
        0  = f_x(A) + f_x(B) - f_x(A\cap B) - f_x(A\cup B)
        %& = p_x(A) + p_x(B) - p_x(A\cap B) - p_x(A\cup B)& \\
         = \sum_{uv\in \delta(A-B, B-A)}\left(1 - (x_u+x_v)\right)
        > 0,
    \end{align*}
    a contradiction. Here, the second equality is by definition of $f_x$, and the final inequality is because $x_u <\frac{1}{2}$ $\forall u\in V$.

\item Let $u \in V$ be an arbitrary vertex. It suffices to show that $(d_A(u) - 1)+ (d_B(u) - 1) = (d_{A\cup B}(u) - 1) + (d_{A\cap B}(u) - 1)$. We consider four cases:

First, we consider the case $u \in A-B$. Then, we have that (1) $d_A(u) = d_{A-B}(u)+\delta(u, A\cap B)$, (2) $d_B(u) = 0$ (3) $d_{A\cap B}(u) = 0$ and (4) $d_{A\cup B}(u) = d_{A-B}(u)+\delta(u, A\cap B)$. Here, (4) follows from $\delta(u, B - A) = \emptyset$ by the previous part. Thus, the claimed equality follows. We note that the argument for the case where $u \in B-A$ is similar to this case.

Next, we consider the case $u \in A\cap B$. Then, we have that (1) $d_A(u) = d_{A\cap B}(u)+\delta(u, A - B)$, (2) $d_B(u) = d_{A\cap B}(u)+\delta(u, B - A)$,  and (3) $d_{A\cup B}(u) = d_{A\cap B}(u)+\delta(u, A - B)+ \delta(u, B - A)$. Thus, the claimed equality follows.

Finally, we consider the case $u \in V - (A\cup B)$. Then, we have that $d_A(u) =d_B(u) =d_{A\cap B}(u) =d_{A\cup B}(u) = 0$ and the claimed equality follows.
\end{enumerate}
\end{proof}

Next, we show that every tight set is of size at least $2$ and the graph induced over the tight set is connected. 
\begin{lemma}\label{lem:tight-set-connected}
Let $x$ be an extreme point of $\weakdensitypolyhedron(G)$ such that $x_u < \frac{1}{2}$ for all $u\in V$ and the family of tight sets for $x$ be $\calT_x := \{S\subseteq V : f_x(S) = 0\}$. For every $A\in \calT_x$, we have that $|A|\ge 2$ and the graph $G[A]$ is connected. 
\end{lemma}
\begin{proof}
Let $A\in\calT_x$. 
Suppose that $|A|=1$. Let $A=\{u\}$. Then, $A\in \calT_x$ implies that $f_x(\{u\})=0$. Equivalently, $-x_u=-1$ and hence, $x_u=1$, a contradiction. Hence, $|A|\ge 2$. Next, we show that $G[A]$ is connected. By way of contradiction, let $A = A_1\uplus A_2$ such that $G[A_1], G[A_2]$ are disconnected components of $G[A]$. Then, we have the following:
\begin{align*}
    g_x(A_1) + g_x(A_2)&= g_x(A) &\\
    &= b(A)&\\
    &= |E[A]| - |A|&\\
    &= |E[A_1]| - |A_1| + |E[A_2]| - |A_2|&\\
    &= b(A_1) + b(A_2)&\\
    &\leq g_x(A_1) + g_x(A_2).&
\end{align*}
Here, the first and fourth equalities are because $A_1\cap A_2 = \emptyset$. The second equality holds because $A\in\calT_x$. The final inequality holds by the weak density constraints on $A_1$ and $A_2$.

The chain of inequalities implies that the final inequality is an equality. By weak density constraints for $A_1$ and $A_2$, we also have that $b(A_1) \le g_x(A_1)$ and $b(A_2)\le g_x(A_2)$ and consequently, $A_1$ and $A_2$ are tight sets, i.e., $A_1, A_2\in\calT_x$. However, $A_1\cap A_2 = \emptyset$ by assumption, contradicting \Cref{lem:tight-sets-conditional-uncrossing-properties} that tight sets must overlap.
\end{proof}

\subsubsection{Conditional basis structure for extreme points}\label{sec:conditional-basis-structure-for-extreme-point}
% \paragraph{Conditional Basis Structure for Extreme Points}
In this section, we use the conditional structural properties of tight sets proved in \Cref{sec:conditional-properties-tight-sets} to show that every extreme point $x$ for the weak density polyhedron for which $x_u<1/2$ for all $u\in V$ has a well-structured basis. 
We recall that a $2$-pseudotree is a connected graph that has one more edge than the number of vertices.
%at least one more vertex than the number of edges
The following lemma is the main result of this section.

\begin{lemma}\label{lem:chain-basis-existence}
Let $x$ be an extreme point of $\weakdensitypolyhedron(G)$ such that $x_u < \frac{1}{2}$ $\forall u\in V$, the family of tight sets for $x$ be $\calT_x := \{S\subseteq V : f_x(S) = 0\}$, and let $\calZ := \left\{u \in V : x_u = 0\right\}$. 
Then, there exists a family $\calC\subseteq\calT_x$ such that
\begin{enumerate}[label=(\arabic*)]
    \item the family $\calC$ is a chain family,
    \item the set of vectors $\Rows(\calC\cup\calZ)$ is linearly independent,
    \item $\spanfunc(\Rows(\calC\cup\calZ)) = \spanfunc(\Rows(\calT_x\cup\calZ))$,
    \item For each $S \in \calC$, the subgraph $G[S]$ contains a $2$-psuedotree, 
    \item For each $S \in \calC$ and each vertex $u \in S$, we have that $d_S(u) \geq 2$, and 
    \item For every $A, B\in \calC$ such that $A\subset B$, there exists a vertex $v\in B-A$ such that $x_v>0$. 
\end{enumerate}
%$$\spanfunc(\Rows(\calC)) = \spanfunc(\Rows(\calT)).$$
\end{lemma}
\begin{proof}
    We first show that there exists a family satisfying properties (1)-(4). Let $\calC^{(1)} \subseteq \calT_x$ be an inclusion-wise maximal chain family. \Cref{claim:Cspan=Tspan} below shows that $\spanfunc(\Rows(\calC^{(1)})) = \spanfunc(\Rows(\calT_x))$.

\begin{claim}\label{claim:Cspan=Tspan}
$\spanfunc(\Rows(\calC^{(1)})) = \spanfunc(\Rows(\calT_x)).$
\end{claim}
\begin{proof}
%Let $\calC$ be an inclusion-wise maximal chain family in $\calT_x$. We show that this chain family $\calC$ satisfies the claim.
 By way of contradiction assume false. Let $A \in \calT_x$ such that $\row(A) \not \in \spanfunc(\Rows(\calC^{(1)}))$ and $A$ crosses the fewest number of sets in $\calC^{(1)}$. Consider a set $B\in \calC^{(1)}$ such that $B$ crosses $A$. We note that such a set $B$ exists since otherwise the family $\calC^{(1)} \cup \{A\}$ contradicts the inclusion-wsie maximality of $\calC^{(1)}$. Recall that by \Cref{lem:tight-sets-conditional-uncrossing-properties}, the sets $A\cap B, A\cup B \in \calT_x$ are also tight. We note that by \Cref{prop:chain-uncrossing}, the sets $A\cap B$ and $A\cup B$ cross fewer sets in $\calC^{(1)}$ than the number of sets in $\calC^{(1)}$ crossed by $A$. We consider two cases based on whether $\row(A\cup B), \row(A\cap B) \in \spanfunc(\Rows(\calC^{(1)}))$. First, consider the case where $\row(A\cup B) \not \in \spanfunc(\Rows(\calC^{(1)}))$ without loss of generality. Since $A\cup B$ crosses fewer sets in $\calC$, the set $A\cup B$ contradicts the choice of $A$. 
Next, consider the case where $\row(A\cup B), \row(A\cap B)  \in \spanfunc(\Rows(\calC^{(1)}))$.
By \Cref{lem:tight-sets-conditional-uncrossing-properties}, 
%\Cref{lem:row-vector-uncrossing}, 
we have that $\row(A) + \row(B) = \row(A\cup B) + \row(A\cap B)$. Thus $\row(A) \in \spanfunc(\Rows(B, A\cap B, A\cup B))$, contradicting choice of $A$.
\end{proof}

    Let $\calC^{(2)} \subseteq \calC^{(1)}$ be an inclusion-wise maximal family such that the set $\Rows(\calC^{(2)} \cup \calZ)$ is linearly independent. We note that 
    $\spanfunc(\Rows(\calC^{(2)}\cup\calZ)) 
 = \spanfunc(\Rows(\calC^{(1)}\cup\calZ)) = \spanfunc(\Rows(\calT_x\cup\calZ))$, where the first equality is because the family $\calC^{(2)}$ is inclusion-wise maximal. In particular, we have that the family $\calC^{(2)}$ satisfies properties (1)-(3).
%our choice of the family $\calC$ satisfies properties (1) and (2). \Cref{claim:Cspan=Tspan} below shows that $\spanfunc(\Rows(\calC)) = \spanfunc(\Rows(\calT_x))$. By our choice of the family $\calC$, this gives us that the set $\Rows(\calC \cup \calZ)$ is a basis for the vector space $\Rows(\calT_x \cup \calZ)$ and thus satisfies propety (3). 
\Cref{claim:2-pseudotree-containment} below shows that the family $\calC^{(2)}$ also satisfies property (4). 

\begin{claim}\label{claim:2-pseudotree-containment}
%Let $x$ be an extreme point of $\weakdensitypolyhedron(G)$ such that $x_u < \frac{1}{2}$ for all $u\in V$ and $\calT_x := \{S\subseteq V : f_x(S) = 0\}$. Then, 
For every $S\in \calC^{(2)}$, the subgraph $G[S]$ contains a $2$-pseudotree.
\end{claim}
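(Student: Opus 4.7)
The plan is to proceed by contradiction: suppose there exists $S \in \calC^{(2)}$ such that $G[S]$ contains no $2$-pseudotree as a subgraph. By \Cref{lem:tight-set-connected}, $G[S]$ is connected and $|S| \ge 2$. Observe that if $|E[S]| \ge |S|+1$, then $G[S]$ itself is a $2$-pseudotree, while if $|E[S]| \le |S|$, then $G[S]$ has at most one cycle, so no subgraph can contain two independent cycles and thus no subgraph of $G[S]$ can be a $2$-pseudotree. Hence the contradiction hypothesis reduces to $|E[S]| \le |S|$, and the connectedness of $G[S]$ forces $|E[S]| \in \{|S|-1, |S|\}$.

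In the tree case $|E[S]| = |S|-1$, I plan to show that $f_x(S) < 0$, which contradicts $S \in \calC^{(2)} \subseteq \calT$. Since $G[S]$ is connected with $|S| \ge 2$, every $u \in S$ satisfies $d_S(u) \ge 1$, so each term $(d_S(u)-1)x_u$ in $g_x(S)$ is nonnegative. Thus $g_x(S) \ge 0 > -1 = b(S)$, so $f_x(S) = b(S) - g_x(S) < 0$.

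In the pseudotree case $|E[S]| = |S|$, I will use tightness of $S$ to push every coordinate with $\row(S)_u \neq 0$ into $\calZ$. Here $b(S) = 0$ and tightness gives $g_x(S) = 0$; as each term $(d_S(u)-1)x_u \ge 0$, every term must vanish, so for each $u \in S$ either $d_S(u) = 1$ (in which case $\row(S)_u = 0$) or $x_u = 0$ (in which case $u \in \calZ$). This lets me write
\[
\row(S) \;=\; \sum_{u \in S:\, d_S(u) \ge 2}(d_S(u)-1)\,\chi_u \;\in\; \spanfunc(\Rows(\calZ)),
\]
exhibiting $\row(S)$ as a linear combination of rows in $\Rows((\calC^{(2)} \setminus \{S\}) \cup \calZ)$, which contradicts the linear independence of $\Rows(\calC^{(2)} \cup \calZ)$ guaranteed by the choice of $\calC^{(2)}$.

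The main obstacle is the preliminary graph-theoretic reduction from ``$G[S]$ contains no $2$-pseudotree'' to $|E[S]| \le |S|$; once the two cases are isolated, the tree case is immediate from nonnegativity of $g_x(S)$, and the pseudotree case is driven entirely by the tightness equation $g_x(S) = 0$ together with the defining linear independence condition on $\calC^{(2)}$.
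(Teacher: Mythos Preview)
Your proof is correct and follows essentially the same approach as the paper's. The only cosmetic difference is that you split into the tree case ($|E[S]|=|S|-1$) and the pseudotree case ($|E[S]|=|S|$) explicitly, whereas the paper handles both at once via the single chain $0 \ge b(S) = g_x(S) \ge \sum_{u:\,d_S(u)\ge 2} x_u \ge 0$, which simultaneously forces $b(S)=0$ (ruling out the tree case) and $x_u=0$ for all $u$ with $d_S(u)\ge 2$; the conclusion $\row(S)\in\spanfunc(\Rows(\calZ))$ and the ensuing contradiction to linear independence are identical.
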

\begin{proof}
By way of contradiction, let $S\in\calC^{(2)}$ such that $G[S]$ does not contain a $2$-pseudotree. Let $S^{i}:=\{u\in S: d_S(u)=i\}$ and  $S^{\geq i}:=\{u\in S: d_S(u)\geq i\}$. 
We note that $b(S) \leq 0$ since $G[S]$ does not contain a $2$-pseudotree. Hence, we have that
\begin{align*}
    0 \ \geq \ b(S) \ &= g(S)&\\
    &= \sum_{u\in S}(d_S(u)-1)x_u&\\
    &= \sum_{u\in S^{0}}(d_S(u)-1)x_u + \sum_{u\in S^{1}}(d_S(u)-1)x_u + \sum_{u\in S^{\geq 2}}(d_S(u)-1)x_u&\\
    &\geq \sum_{u\in S^{\geq 2}}x_u&\\
    &\geq 0&
\end{align*}
Here, the first equality holds because $S\in\calT_x$. The final inequality holds since $|S|\ge 2$ and $G[S]$ has no isolated vertices by  \Cref{lem:tight-set-connected}, and by the non-negativity constraints on vertex variables of $\weakdensitypolyhedron(G)$.

Thus, all inequalities above are equalities, and consequently, we have that $\sum_{u\in S^{\geq 2}}x_u = 0$ 
%\knote{DEfine $S^{\ge 2}$ and $S^{\le 1}$}.
This, coupled with the non-negativity constraints on vertex variables, implies that $x_u = 0$ for each $u\in S^{\geq 2}$. We also note that $\row(S)_u = 0$ for each $u \in S^{ 1}\cup S^{0}$ as $S^{0} = \emptyset$ and $d_S(u) - 1 = 0$ for $u\in S^1$.  Thus, we have that $\row(S) \in \spanfunc(\Rows(\calZ))$, contradicting linear independence of $\Rows(\calC^{(2)}\cup\calZ)$. %\knote{Why is $\calC\cup \calZ$ linearly independent? If it is linearly independent, then it should be stated as an additional conclusion in the previous lemma statement.}
%\shubhang{Fixed. The family is now linearly independent by choice.}.
\end{proof}

We now show the existence of a family satisfying properties (1)-(5). Let $\calC\subseteq\calT_x$ be a family satisfying properties (1)-(4) which minimizes $\sum_{S \in\calC}|S|$. We will show that this $\calC$ also satisfies property (5).
%We note if the family $\calC^{(2)}$ satisfies property (5), then we are done. Thus, we assume that property (5) fails and show how to construct a new family $\calC$ from $\calC^{(2)}$ that satisfies property (5) while maintaining properties (1)-(4). 
    By way of contradiction, suppose there exists $S\in\calC$ with $u \in S$ such that $d_S(u) < 2$. Since $S\in\calT_x$, we have that $|S|\ge 2$ and $d_S(u)\geq 1$ since tight sets cannot have isolated vertices by \Cref{lem:tight-set-connected}. Thus, $d_S(u) = 1$. Let $uv \in E[S]$ be the unique edge incident to $u$ in $G[S]$. 
We note that $E[S]$ should contain at least one more edge apart from the edge $uv$ as otherwise $\row(S) = 0$. Furthermore, by \Cref{lem:tight-set-connected}, the subgraph $G[S]$ must be connected. Thus, $d_S(v) \geq 2$ and we have the following: 
\begin{align*}
    g_x(S - u)& = g_x(S) - x_v&\\
    & = b(S) - x_v&\\
    & = |E[S]| - |S| - x_v&\\
    & = (|E[S-u]| + 1) - (|S - u| + 1) - x_v&\\
    & = b(S - u) - x_v&\\
    &\leq g_x(S - u) - x_v&
\end{align*}
The above chain of inequalities implies that $x_v \leq 0$. By the non-negativity constraints on $x_v$, we have that $x_v = 0$. Thus, the final inequality in the above chain is in fact an equality, and hence, the set $S-u$ is in $\calT_x$. 

We now make three observations: 
\begin{enumerate}
    \item $\row(S) - \chi_v = \row(S - u)$ (we note that the LHS subtracts the indicator vector of the vertex $v$ while the RHS considers the row vector of the set $S-u$, where $u$ is the vertex with $d_S(u)=1$),
    \item Every $X \in\calC$ such that $X\subset S$ either has $d_X(u) = 1$ or $X$ does not contain $u$, and 
    \item For every $X\subset S$ such that $X\in\calC$ and $u,v \in X$, $\row(X) - \chi_v = \row(X - u)$. 
\end{enumerate} 
Here, the first observation is due to $d_S(u) = 1$ and thus $\row(S)_u = 0$. The second observation follows because of monotonicity of the induced-degree function and the fact that tight sets cannot have isolated vertices by \Cref{lem:tight-set-connected}. The third observation follows by writing down the corresponding chain of inequalities as above for each such set $X$. 
Then, by the three observations, we can remove the vertex $u$ from every set in $\calC$ that contains $u$ resulting in a set family $\calC' \not = \calC$ that still satisfies properties (1)-(4). However, we have that $\sum_{S\in\calC'}|S| < \sum_{S\in\calC}|S|$, contradicting our choice of family $\calC$.

Finally, we show that $\calC$ also satisfies property (6). 
By way of contradiction, assume that there exists $A, B\in \calC$ such that $A\subset B$ and $x_u = 0$ for each $u \in B - A$. We first show a lower bound on $|\delta(A, B-A)|$.
%\knote{We first show a lower bound on $|\delta(A, B-A)|$}. 
\begin{claim}\label{claim:non-zero-difference-of-tight-sets}
$\frac{|\delta(A, B-A)|}{2} \geq -b(B - A)$.
\end{claim}
\begin{proof}
First, we argue that the subgraph $G[B - A]$ is a forest. By way of contradiction, suppose that the subgraph $G[B-A]$ contains a cycle. Let $C$ be a minimal cycle in the subgraph $G[B-A]$. Then, $g_x(C) = 0$ as $x_u = 0$ for each $u \in B - A$ by our hypothesis. Furthermore, $b(C) = 0$ as $C$ is a cycle. Thus, $C \in \calT_x$ is a tight set. However, since $C\subseteq B - A$, we have that $C\cap A = \emptyset$. This contradicts the fact that all tight sets must overlap (as shown by \Cref{lem:tight-sets-conditional-uncrossing-properties}).

Let $k$ be the number of disconnected acyclic components of the forest $G[B - A]$. Then, we have the following two observations: %(1) $b(B - A) = (-k)$; and (2)$|\delta(A, B - A)| \geq 2k$.
\begin{enumerate}
    \item $b(B - A) = -k$ and 
    \item $|\delta(A, B - A)| \geq 2k$.
\end{enumerate}
We note that the claim  %\knote{remove `directly'}
follows from above observations. We justify these observations next. 

The first observation follows from the fact that the subgraph $G[B-A]$ is a forest. In particular, we have that $$b(B-A) = \left|E[B-A]\right| - \left|B - A\right| = \left(\left(\left|B-A\right| - 1\right) - (k-1)\right) - |B - A| = -k.$$
We now prove the second observation. Each of the $k$ acyclic components of the subgraph $G[B - A]$ is either a singleton or has at least two leaves. If the component is a singleton vertex $v$, then we have that $|\delta(v, A)| = d_B(v) \geq 2$, where the inequality is because $B \in \calC$ and the family $\calC$ satisfies property (5). Alternatively, suppose that the component has two leaves. Let $v$ be an arbitrary leaf of the component. We note that the induced degree $d_{B-A}(v) = 1$. Thus, $|\delta(v, A)| = d_B(v) - d_{B-A}(v) \geq 1$, where the inequality is once again because $B \in \calC$ and $\calC$ satisfies property (5). 
%must have at least $2$ leaves. For every such leaf vertex $v$, we have that the induced degree $d_{B-A}(v) = 1$. However, since the set $B\in\calT_x$ is tight, it must be that $d_{B}(v) \geq 2$. Thus, $|\delta(v, A)| \geq 1$ for each such leaf vertex $v$.
\end{proof}
With the above lower bound on the cut size $|\delta(A, B-A)|$, we get the required contradiction as follows:
\begin{align*}
    g_x(A) + \frac{|\delta(A, B-A)|}{2} \ & > g_x(A) + \sum_{uv\in\delta(A, B-A)}x_u&\\
    & = g_x(A) + g_x(B-A) + \sum_{uv\in\delta(A, B-A)}(x_u + x_v)&\\
    & = g_x(B)&\\
    &= b(B)&\\
    & = |E[B]| - |B|&\\
    &= \left( |E[A]| + |E[B - A]| + |\delta(A, B-A)|\right) - \left(|A| + |B - A|\right)&\\
    &=b(A) + b(B - A) + |\delta(A, B-A)|&\\
    &\geq b(A) + \frac{|\delta(A, B-A)|}{2}&\\
    & = g_x(A) + \frac{|\delta(A, B-A)|}{2}.&
\end{align*}
The first inequality follows from $x_u < 1/2$ $\forall u\in V$. The first equality follows from the assumption that $x_u = 0$ for each $u \in B - A$. The penultimate inequality follows from \Cref{claim:non-zero-difference-of-tight-sets}, and the final equality holds since the set $A\in \calT_x$ is a tight set.
\end{proof}

\subsubsection{Proof of Theorem \ref{thm:weak-density-extreme-point}}\label{sec:counting-argument}
%\paragraph{Proof of Theorem \ref{thm:weak-density-extreme-point}}
We now complete the proof of \Cref{thm:weak-density-extreme-point}.  
In the next lemma, we use a stronger hypothesis that $x_u<1/3$ for all $u\in V$ to 
show that there exist at least two vertices $u, v$ which take non-zero values. We emphasize that the lemma does not rely on extreme point properties and holds for every feasible point $x$ satisfying the hypothesis.  

\begin{lemma}\label{lem:two-non-zero-vertices-general-statement}
Let $G=(V, E)$ be a connected graph with minimum degree at least $2$ such that $G$ contains a $2$-pseudotree. Let $x\in \weakdensitypolyhedron(G)$ such that $x_u<1/3$ for all $u\in V$. Then, there exist distinct vertices $u, v\in V$ such that $x_u, x_v>0$.
\end{lemma}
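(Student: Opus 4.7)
The plan is to argue by contradiction: assume at most one vertex has $x_u>0$, and derive a violated weak density constraint. If all $x_u=0$, then the hypothesis supplies some $S\subseteq V$ with $G[S]$ a $2$-pseudotree, and the weak density constraint for $S$ gives $0\ge |E[S]|-|S|\ge 1$, a contradiction. So I may assume there is a unique vertex $u^*$ with $x_{u^*}\in(0,1/3)$.

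A first reduction is that $G-u^*$ must itself be a pseudoforest: otherwise an induced $2$-pseudotree $S'\subseteq V\setminus\{u^*\}$ would give $\sum_{v\in S'}(d_{S'}(v)-1)x_v=0<|E[S']|-|S'|$, contradicting $x\in\weakdensitypolyhedron(G)$. Let $C_1,\dots,C_c$ be the components of $G-u^*$, of which $k$ are cyclic; since $G$ is connected, $u^*$ has at least one neighbor in each $C_i$, and minimum degree at least $2$ gives $d(u^*)\ge 2$. The cyclomatic identity $\mu(G)=k+d(u^*)-c$ together with $\mu(G)\ge 2$ (since $G$ contains a $2$-pseudotree) will drive a case analysis on how $u^*$'s neighbors are distributed.

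In each case I will exhibit an induced $2$-pseudotree $G[S]$ with $u^*\in S$ such that $(d_S(u^*)-1)x_{u^*}<|E[S]|-|S|$, the desired violation. Case (a): some cyclic $C_i$ has at least $2$ $u^*$-neighbors. Take $S=V(C_i)\cup\{u^*\}$; since $V(C_i)$ is a component of $G-u^*$, $G[V(C_i)]=C_i$ is a pseudotree with $|V(C_i)|$ edges, so $|E[S]|-|S|=d_S(u^*)-1\ge 1$ and the constraint forces $x_{u^*}\ge 1$, a contradiction. Case (b): every cyclic $C_i$ has exactly one $u^*$-neighbor. Here I split by $k$: if $k\ge 2$, take $S$ to be a dumbbell made of two cyclic components together with $u^*$, giving $d_S(u^*)=2$ and $|E[S]|-|S|=1$; if $k=1$, the bound $\mu(G)\ge 2$ combined with $d(u^*)\ge 2$ forces some tree component $T_j$ to contain at least two $u^*$-neighbors, and I take $S=V(C_1)\cup V(P_j)\cup\{u^*\}$ where $P_j$ is the unique $T_j$-path between two such neighbors; if $k=0$, then $d(u^*)\ge c+2$ forces either some tree with $\ge 3$ $u^*$-neighbors (use the Steiner subtree on three such neighbors together with $u^*$) or two trees each with $\ge 2$ $u^*$-neighbors (double-path construction). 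In each sub-case, writing $q=d_S(u^*)$, a direct count gives $|E[S]|-|S|\in\{q-1,\,1,\,q-2,\,q-3\}$, and the elementary inequality $(q-1)/3\le |E[S]|-|S|$ holds on the applicable range of $q$, so that $(d_S(u^*)-1)x_{u^*}<(q-1)/3\le|E[S]|-|S|$, completing the contradiction.

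The main obstacle is verifying that each construction genuinely yields an induced $2$-pseudotree with the claimed edge count. This relies on the fact that each $V(C_i)$ is closed in $G-u^*$ (so $G[V(C_i)]=C_i$), the $T_j$-paths used give induced subtrees (so $G[V(P_j)]=P_j$ since trees have no chords), and no edges of $G$ run between distinct components of $G-u^*$; consequently the edges of $G[S]$ are exactly the edges of the glued pieces plus $u^*$'s edges into $S$. The cyclomatic-count bookkeeping in case (b) is where the most care is required, to confirm that the four listed sub-configurations exhaust every structure of $G-u^*$ consistent with $\mu(G)\ge 2$ and $d(u^*)\ge 2$.
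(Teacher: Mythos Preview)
Your argument is correct, but it takes a considerably more elaborate route than the paper's. The paper simply applies the weak density constraint to $S=V$ itself, exploiting directly the hypotheses that $G$ is connected, has minimum degree at least $2$, and contains a $2$-pseudotree. With $v$ denoting the unique vertex with $x_v>0$, the constraint reads $(d(v)-1)x_v\ge |E|-|V|$. If $d(v)\ge 4$, the global degree bound $\sum_u d(u)=2|E|$ together with $d(u)\ge 2$ for all $u$ gives $|E|-|V|\ge d(v)/2-1$, whence $x_v\ge \tfrac{1}{2}\bigl(1-\tfrac{1}{d(v)-1}\bigr)\ge 1/3$. If $d(v)\le 3$, connectivity plus the $2$-pseudotree hypothesis give $|E|-|V|\ge 1$, so $(d(v)-1)x_v\ge 1$ forces $x_v\ge 1/2$. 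Either way $x_v\ge 1/3$, the desired contradiction.

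By contrast, you first pass to $G-u^*$, classify its components, and then build a tailored witness set $S$ in each structural sub-case. This works, and the cyclomatic bookkeeping and induced-subgraph verifications you outline are all sound (in particular, each sub-case indeed yields $|E[S]|-|S|\ge (q-1)/3$ on the applicable range of $q=d_S(u^*)$). What your approach buys is an explicit small witness $S$ in each case, which could be useful if one wanted a local certificate of infeasibility; what the paper's approach buys is brevity and the observation that the global hypotheses on $G$ are already exactly what is needed, so no structural decomposition of $G-u^*$ is required. One minor point: your reduction ``$G-u^*$ must be a pseudoforest, else an induced $2$-pseudotree $S'$\ldots'' is more cleanly phrased as ``else some connected component of $G-u^*$ has more edges than vertices,'' since that is what you actually use.
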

\begin{proof}
By way of contradiction, let $S\subseteq V$ be a set such that the subgraph $G[S]$ is connected, has minimum-degree at least $2$, contains a $2$-pseudotree, but 
$S$ contains at most one vertex with a corresponding non-zero vertex variable.
We first consider the case where $x_v = 0$ for all $v \in S$. %\chandra{Reworded preceding sentence.}
Observe that in this case, $g_x(S) = 0$. Consequently, we have that $0 = g_x(S) \geq b(S) = |E[S]| - |S| \geq 1$, a contradiction. Here, the first inequality is by the weak-density constraint on $S$, and the second inequality is because $G[S]$ is connected and contains a $2$-pseudotree---this can be observed by starting with the $2$-pseudotree, and then charging the remaining vertices to the edges which connect them to the $2$-pseudotree.
%since the subgraph $G[S]$ is connected and contains a $2$-pseudotree, $g_x(S) > 0$, a contradiction. 
%\chandra{Do we want to say $g_x(S) > 0$ due to feasibility of $x$?}

We next consider the case where the set $S$ contains exactly one vertex $v\in S$ such that $x_v > 0$. It follows that 
$$(d_S(v)-1)x_v = g_x(S) \geq b(S) = |E[S]| - |S|.$$
Here the first equality is because $x_v$ is the only non-zero variable, while the first inequality is by the weak-density constraint on $S$. We now consider two cases based on the degree of $v$ in $G[S]$.

First, consider the case where $d_S(v) \geq 4$. Then we have the following:
$$x_v = \frac{|E[S]| - |S|}{d_S(v) - 1} \geq \frac{|S| - 1 + \frac{d_S(v)}{2} - |S|}{d_S(v) - 1}= \frac{1}{2}\left(1 - \frac{1}{d_S(v) - 1}\right)\geq \frac{1}{2}\cdot\frac{2}{3} = \frac{1}{3},$$
a contradiction to our hypothesis that $x_u < 1/3$ $\forall u \in V$.
Here, the first inequality is by hypothesis that $d_S(u) \geq 2$ for each $u \in S$.

Next, consider the case where $d_S(v) \leq 3$. We recall (from the last sentence of the first paragraph) that $b(S) = |E[S]| - |S| \geq 1$ since $G[S]$ is connected and contains a $2$-pseudotree. This gives us the required contradiction as follows:
$$1 \leq b(S) \leq g_x(S) = (d_S(v)-1)x_v \leq 2x_v < 1,$$
Here, the first inequality holds by the weak-density constraints on $S$, while the the final inequality holds due to our hypothesis that $x_v < \frac{1}{2}$.

%\knote{I think the proof of this lemma does not really need basis or chain family: it only needs that $S$ is connected and contains a $2$-pseudotree and $x$ is feasible on $G[S]$. Instead of $g_x(S) = b(S)$, you need to use $g_x(S) \ge b(S)$. This holds by the LP constraint. I suggest phrasing and proving the more general form and deriving the lemma as a corollary by observing that every $S\in \calC$ is connected and contains a $2$-pseudotree which you have shown before.} \shubhang{(done) It also requires minimum degree to be at least 2 which I also included in the hypothesis.}
\end{proof}

%The next lemma is a corollary to \Cref{lem:two-non-zero-vertices-general-statement}
%We now give a corollary to \Cref{lem:two-non-zero-vertices-general-statement}. 
We use \Cref{lem:two-non-zero-vertices-general-statement} to conclude the following: 
%show that every tight set in a chain basis for an extreme point $x$ with $x_u<1/3$ for all $u\in V$ satisfies the conditions of \Cref{lem:two-non-zero-vertices-general-statement} and thus contains two non-zero elements. In particular, this implies that the innermost set in the chain basis has two non-zero variables.

\begin{corollary}\label{cor:two-non-zero-vertices-in-tight-sets}
Let $G=(V, E)$ be a connected graph with minimum degree at least $2$ such that $G$ contains a $2$-pseudotree. 
%Suppose $x_u < \frac{1}{3}$ for all $u\in V$. 
Let $x$ be an extreme point of $\weakdensitypolyhedron(G)$ such that $x_u < \frac{1}{3}$ for all $u\in V$, the family of tight sets for $x$ be $\calT_x := \{S\subseteq V : f_x(S) = 0\}$, and let $\calZ:= \left\{u \in V : x_u = 0\right\}$. 
Let $\calC\subseteq\calT_x$ be a chain family satisfying the properties in \Cref{lem:chain-basis-existence}. 
Then, 
%there exists a chain family $\calC\subseteq\calT_x$ such that $\Rows(\calC))$ form a basis for $\Rows(\calT_x)$ and 
for every $S\in \calC$, there exist distinct vertices $u,v \in S$ such that $x_u, x_v > 0$.
\end{corollary}
\begin{proof}
%By \Cref{lem:chain-basis-existence}, there exists a chain basis $\calC$ such that for every tight set $A\in \calC$, we have that $d_A(u)\geq 2$ for each $u\in A$ and the subgraph $G[A]$ contains a $2$-pseudotree. Furthermore, by 
By \Cref{lem:chain-basis-existence}, we have that for every $A\in\calC$, the subgraph $G[A]$ contains a $2$-pseudotree. The claim follows by \Cref{lem:two-non-zero-vertices-general-statement}.
\end{proof}

We now restate and prove Theorem \ref{thm:weak-density-extreme-point}.

\thmWeakDensityExtremePoint*

\iffalse
The next lemma completes the proof of Theorem \ref{thm:weak-density-extreme-point}. \knote{Theorem 1 does not have anything about min-degree being at least $2$?}

\begin{lemma}\label{lem:main}
Let $G=(V, E)$ be a connected graph with minimum degree at least $2$ such that $G$ contains a $2$-pseudotree  
%Suppose $x_u < \frac{1}{3}$ for all $u\in V$. 
and $x$ be an extreme point of $\weakdensitypolyhedron(G)$. 
Then, there exists a vertex $u \in V$ such that $x_u \geq \frac{1}{3}$.
\end{lemma}
\fi
\begin{proof}
By way of contradiction, let $x_u < \frac{1}{3}$ for all $u \in V$. 
%$\calN\calZ:=\{u\in V: x_u>0\}$ %$\calN\calZ = V - \calZ$ 
%denote the set of vertices with non-zero vertex variables. 
Let $\calC =\{C_1, C_2, C_3, \ldots, C_t\}$ be the chain basis guaranteed by \Cref{lem:chain-basis-existence}. 
%\Cref{cor:two-non-zero-vertices-in-tight-sets}. 
We order the sets in the basis so that
 $C_0 \subseteq C_1\subseteq C_2 \subseteq C_3 \subseteq \cdots\subseteq C_t \subseteq C_{t+1}$, where we denote $C_0 := \emptyset$ and $C_{t+1} := V$ for notational convenience. Let $\calN\calZ(C_i) = \{v \in C_i : x_v > 0\}$.
 Let $\calN\calZ := \calN\calZ(V)$ be the set of vertices $v$ with $x_v>0$. Then, $\calN\calZ$ can be partitioned as follows: $\calN\calZ(V) = \uplus_{i \in [t+1]} \calN\calC(C_i) \backslash \calN\calC(C_{i-1})$.
Hence, we have that 
\[
|\calC| < \sum_{i\in [t]} \left|\calN\calZ(C_i) \backslash \calN\calZ(C_{i-1})\right|\leq |\calN\calZ|.
\]
The first inequality holds due to the following two reasons: (i)$|\calN\calZ(C_1) \backslash \calN\calZ(C_0)| = |\calN\calZ(C_1)| \geq 2$ by \Cref{cor:two-non-zero-vertices-in-tight-sets} and (ii) $|\calN\calZ(C_i) \backslash \calN\calZ(C_{i-1})| \geq 1$ for $i \in [2,t]$ by property (6) of \Cref{lem:chain-basis-existence}. The second inequality is because $|\calN\calZ(C_{t+1}) \backslash \calN\calZ(C_{t})| \geq 0$ and our observation that the set of differences of subsequent basis sets in the chain ordering of $\calC$ partitions the set of non-zero variables.
Thus, we have that 
\[
|V| 
%= |\calZ \cup \calC|
= |\calZ|+|\calC|
< |\calZ|+|\calN\calZ|
= |V|,
\]
a contradiction. The first equality is because $x$ is an extreme point and  $\calC$ is such that $\Rows(\calC\cup \calZ)$ is linearly independent and $\spanfunc(\Rows(\calC\cup \calZ))=\spanfunc(\Rows(\calT_x \cup \calZ))$ by \Cref{lem:chain-basis-existence}. The last equality is because the number of variables is equal to the sum of the number of zero variables and the number of non-zero variables. 

\iffalse
Then, we have that 
$$|\calN\calZ(V)|= \mathtt{rank}(\mathtt{Columns}(\calT_x))= \mathtt{rank}(\Rows(\calT_x))= |\calC| < \sum_{i\in [t]} \left|\calN\calC(C_i) \backslash \calN\calC(C_{i-1})\right|\leq |\calN\calZ|,$$
a contradiction since we have strict inequality. 
Here, the first equality is because $x$ is an extreme point, 
the second equality is because the row rank of a matrix is equal to its column rank, and the third equality is because $\calC$ is a basis for $\calT_x$. The first inequality holds due to the following two reasons: (i)$|\calN\calZ(C_1) \backslash \calN\calZ(C_0)| = |\calN\calZ(C_1)| \geq 2$ by \Cref{cor:two-non-zero-vertices-in-tight-sets}; and (ii) $|\calN\calZ(C_i) \backslash \calN\calZ(C_{i-1})| \geq 1$ for $i \in [2,t]$ by \Cref{lem:non-zero-difference-of-tight-sets}. The final inequality is because $|\calN\calZ(C_{t+1}) \backslash \calN\calZ(C_{t})| \geq 0$ and our observation that the set of differences of subsequent basis sets in the chain ordering of $\calC$ partitions the set of non-zero variables.
\fi
%\knote{Expand out and write the last inequality - use the paragraph above \Cref{lem:two-non-zero-vertices-in-tight-sets}.}\shubhang{(done)}. 
\end{proof}

%\Cref{thm:main:P_ell_extereme_point} follows from \Cref{lem:main}.

\subsection{Extreme point property of the Orientation polyhedron}\label{sec:orientation-polytope-extreme-point}
%\section{Extreme Point Property of Orientation Polyhedron}\label{sec:orientation-polytope-extreme-point}
In this section, we prove \Cref{thm:orientation-polyhedron-extreme-point}. We restate it below. 
\thmOrientationPolyhedronExtremePoint*

We recall that an extreme point of $\orientationpolyhedron(G)$ is said to be {\em minimal} if we cannot lower any single variable, keeping the others unchanged, while maintaining feasibility. %If we assume that $c(v) > 0$ for all vertices $v$, every optimal solution to \eqref{eq:pseudoforest} is automatically minimal with respect to the $x$-variables. To obtain an extreme optimal solution that is also minimal with respect to the $y$-variables, we minimize the sum of all $y$-variables, $\sum_{v \in V(G)} \sum_{e \in \delta(v)} y(e,v)$, within the optimal face.
Before proceeding with the proof of \Cref{thm:orientation-polyhedron-extreme-point}, we establish a few lemmas. In the lemmas below, we always assume that $G$ is not a pseudoforest, $(\bar{x},\bar{y})$ is a minimal extreme point of $\orientationpolyhedron(G)$ and, aiming toward a contradiction, $\bar{x}(v) < 1/3$ for all vertices $v \in V(G)$.

For a graph $G$, we denote its edge-vertex incidence graph by $H$. Thus, $V(H) = V(G) \cup E(G)$ and $E(H) = \{ev : e \in E(G),\ v \in V(G),\ e \in \delta(v)\}$. We note that $|V(H)| = |V(G)| + |E(G)|$ and $|E(H)| = 2 |E(G)|$. The \emph{support graph} of $\bar{y}$ is the subgraph of the incidence graph $H$ whose vertices are all the vertices of the incidence graph $H$ and whose edges are those for which $\bar{y}_{e,v} > 0$. Denoting this subgraph by $H(\bar{y})$, we have $V(H(\bar{y})) = V(H) = V(G) \cup E(G)$ and $E(H(\bar{y})) = \{ev \in E(H) : \bar{y}_{e,v} > 0\}$.

\begin{lemma}
\label{lem:acyclic}
The support graph $H(\bar{y})$ is a forest.
\end{lemma}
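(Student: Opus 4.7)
The plan is to prove acyclicity by contradiction: assume $H(\bar{y})$ contains a cycle and construct a perturbation direction that witnesses $(\bar{x},\bar{y})$ as a non-extremal convex combination of two feasible points. Since $H$ is bipartite (the vertex set partitions as $V(G) \cup E(G)$ with edges only between the two sides), any cycle $C$ in $H(\bar{y})$ has even length and alternates between vertex-vertices and edge-vertices. Write such a cycle as $v_1 - e_1 - v_2 - e_2 - \cdots - v_k - e_k - v_1$, so that $e_i = \{v_i, v_{i+1}\}$ (indices mod $k$) and $\bar{y}_{e_i, v_i} > 0$ and $\bar{y}_{e_i, v_{i+1}} > 0$ for all $i$.

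Next I define the signed perturbation $d$ supported on the $y$-coordinates of $C$: set $d_{e_i, v_i} := +1$ and $d_{e_i, v_{i+1}} := -1$ for all $i \in [k]$, and $0$ elsewhere. For sufficiently small $\epsilon > 0$, I claim both $(\bar{x}, \bar{y} + \epsilon d)$ and $(\bar{x}, \bar{y} - \epsilon d)$ lie in $\orientationpolyhedron(G)$. The non-negativity $y \geq 0$ is preserved because $d$ is only nonzero on entries that are strictly positive in $\bar{y}$. The coverage inequality for edge $e_i$ is preserved because the perturbation contributes $+\epsilon$ and $-\epsilon$ to $y_{e_i, v_i}$ and $y_{e_i, v_{i+1}}$, which appear together in that inequality and hence cancel. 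The load inequality at each vertex $v_i$ is preserved because the only entries of $d$ contributing to $\sum_{e \in \delta(v_i)} y_{e, v_i}$ come from the two cycle-edges of $H$ incident to $v_i$, namely $(e_i, v_i)$ with coefficient $+1$ and $(e_{i-1}, v_i)$ with coefficient $-1$ (since $e_{i-1} = \{v_{i-1}, v_i\}$), and these again cancel. All remaining coverage and load constraints, as well as the $x \geq 0$ constraints, are untouched because $d$ is zero on those coordinates.

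Since $(\bar{x}, \bar{y}) = \tfrac{1}{2}(\bar{x}, \bar{y} + \epsilon d) + \tfrac{1}{2}(\bar{x}, \bar{y} - \epsilon d)$ with both summands feasible and distinct, $(\bar{x}, \bar{y})$ cannot be an extreme point, a contradiction. Hence $H(\bar{y})$ contains no cycle and is therefore a forest.

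The main obstacle is simply the bookkeeping for verifying the load constraint: one has to check that each vertex-vertex $v_i$ in the cycle is incident (in $H$) to exactly two cycle-edges and that the signs assigned by $d$ to those two edges are opposite. This is automatic from the alternating bipartite structure of the cycle once indices are set up carefully, so no deeper combinatorial input is needed. Notice that the argument uses only extremality, not minimality, so the lemma in fact holds for every extreme point.
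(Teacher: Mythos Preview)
Your proof is correct and takes essentially the same approach as the paper's: assume a cycle in $H(\bar{y})$, use the bipartite structure to split its edge set into two alternating classes, and perturb $\bar{y}$ by $\pm\varepsilon$ along those classes to exhibit $(\bar{x},\bar{y})$ as a midpoint of two distinct feasible points. Your direction $d$ is exactly the paper's $\chi^{M_1}-\chi^{M_2}$; you simply spell out the verification that each edge-coverage and load constraint is unaffected, which the paper leaves implicit.
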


\begin{proof}
Towards a contradiction, suppose that there exists a cycle in $H(\bar{y})$. Since $H(\bar{y})$ is bipartite, the edge set of this cycle can be partitioned into two matchings $M_1$ and $M_2$. Given some $\varepsilon > 0$, we define two points $\bar{y}^{\pm} \in \mathbb{R}^{E(H)}$ by letting $\bar{y}^{\pm} := \bar{y} \pm \varepsilon \chi^{M_1} \mp \varepsilon \chi^{M_2}$. If $\varepsilon$ is sufficiently small, both $(\bar{x},\bar{y}^+)$ and $(\bar{x},\bar{y}^-)$ are feasible (that is, belong to the orientation polytope), which contradicts the fact that $(\bar{x},\bar{y}) = \frac{1}{2} (\bar{x},\bar{y}^+) + \frac{1}{2} (\bar{x},\bar{y}^-)$ is extreme.
\end{proof}

\begin{lemma}
\label{lem:num_components}
The number of (connected) components of $H(\bar{y})$ is $|V(G)|-|E(G)|+m_0$, where $m_0$ denotes the number of edges $ev \in E(H)$ such that $\bar{y}_{e,v} = 0$.
\end{lemma}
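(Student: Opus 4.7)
The plan is to deduce the component count by a direct vertex-edge count, using the already established fact that $H(\bar{y})$ is a forest (\Cref{lem:acyclic}). For any forest, the number of connected components equals the number of vertices minus the number of edges, so it suffices to count both quantities carefully for $H(\bar{y})$.

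First, I would write down $|V(H(\bar{y}))|$. Since the support graph $H(\bar{y})$ shares its vertex set with the incidence graph $H$, we have $|V(H(\bar{y}))| = |V(H)| = |V(G)| + |E(G)|$ by the definition of the edge-vertex incidence graph.

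Next, I would count $|E(H(\bar{y}))|$. By definition, $E(H(\bar{y}))$ consists of exactly those edges $ev \in E(H)$ for which $\bar{y}_{e,v} > 0$. Since $|E(H)| = 2|E(G)|$ and precisely $m_0$ of these edges satisfy $\bar{y}_{e,v} = 0$, we obtain $|E(H(\bar{y}))| = 2|E(G)| - m_0$.

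Combining the two counts with \Cref{lem:acyclic}, the number of components of $H(\bar{y})$ equals
\begin{equation*}
|V(H(\bar{y}))| - |E(H(\bar{y}))| = \bigl(|V(G)| + |E(G)|\bigr) - \bigl(2|E(G)| - m_0\bigr) = |V(G)| - |E(G)| + m_0,
\end{equation*}
which is the claimed formula. There is no real obstacle here: the entire content is a bookkeeping calculation once the acyclicity of $H(\bar{y})$ has been established. The only point worth double-checking is that $m_0$ is defined on edges of $H$ (not on ordered pairs), so that the subtraction $2|E(G)| - m_0$ accurately reflects the removed edges of $H$.
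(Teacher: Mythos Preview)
Your proposal is correct and follows exactly the same approach as the paper: compute $|V(H(\bar{y}))| = |V(G)|+|E(G)|$ and $|E(H(\bar{y}))| = 2|E(G)|-m_0$, then invoke \Cref{lem:acyclic} to conclude that the number of components of the forest $H(\bar{y})$ is the difference of these two quantities.
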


\begin{proof}
We have that $|V(H(\bar{y}))|=|V(G)|+|E(G)|$ and $|E(H(\bar{y}))|=2|E(G)|-m_0$ by definition.  
By Lemma~\ref{lem:acyclic}, the number of components of $H(\bar{y})$ is 
\begin{align*}
|V(H(\bar{y}))| - |E(H(\bar{y}))| 
  &= (|V(G)| + |E(G)|) - (2 |E(G)| - m_0)\\
  &= |V(G)| - |E(G)| + m_0\,.
\end{align*}
\end{proof}

\begin{lemma}
\label{lem:edges}
Every edge $e = vw \in E(G)$, we have $\bar{y}_{e,v} > 0$ or $\bar{y}_{e,w} > 0$ (or both) and $\bar{x}_v + \bar{x}_w + \bar{y}_{e,v} + \bar{y}_{e,w} = 1$.
\end{lemma}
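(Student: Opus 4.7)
The plan is to establish the two claims in the lemma in sequence, using the assumption $\bar{x}_v < 1/3$ for all $v \in V(G)$ (for the first claim) and the minimality of the extreme point $(\bar{x},\bar{y})$ (for the second claim).

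First I would show that for every edge $e = vw \in E(G)$, at least one of $\bar{y}_{e,v}$ and $\bar{y}_{e,w}$ is strictly positive. This is an easy contradiction: the edge constraint defining $\orientationpolyhedron(G)$ gives $\bar{x}_v + \bar{x}_w + \bar{y}_{e,v} + \bar{y}_{e,w} \ge 1$, so if both $\bar{y}_{e,v} = \bar{y}_{e,w} = 0$, then $\bar{x}_v + \bar{x}_w \ge 1$, which contradicts the assumption that every coordinate of $\bar{x}$ is strictly less than $1/3$ (hence their sum is strictly less than $2/3 < 1$).

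Next I would use minimality of $(\bar{x},\bar{y})$ to conclude that the edge constraint for $e$ is tight. Let $e = vw \in E(G)$ and assume without loss of generality that $\bar{y}_{e,v} > 0$, by the previous paragraph. The only constraints in the definition of $\orientationpolyhedron(G)$ that impose a lower bound on the variable $y_{e,v}$ are (i) the non-negativity constraint $y_{e,v} \ge 0$, which is not tight since $\bar{y}_{e,v} > 0$, and (ii) the edge constraint $x_v + x_w + y_{e,v} + y_{e,w} \ge 1$ for $e$ itself. The load constraint $x_v + \sum_{e' \in \delta(v)} y_{e',v} \le 1$ is an upper bound on $y_{e,v}$ and hence only becomes easier to satisfy when $y_{e,v}$ is decreased. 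Therefore, if the edge constraint for $e$ were slack at $(\bar{x},\bar{y})$, we could decrease the single coordinate $y_{e,v}$ by a small positive amount while maintaining feasibility, contradicting minimality. Hence $\bar{x}_v + \bar{x}_w + \bar{y}_{e,v} + \bar{y}_{e,w} = 1$, as claimed.

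The proof is entirely straightforward once the two ingredients (the bound on $\bar{x}$ and minimality) are isolated; no step appears to be a real obstacle. The only subtlety is ensuring that no other constraint involving $y_{e,v}$ is overlooked, which is why I would explicitly enumerate the constraints of $\orientationpolyhedron(G)$ and verify that the edge constraint for $e$ is indeed the unique non-trivial lower bound on $y_{e,v}$.
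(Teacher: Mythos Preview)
Your proof is correct and follows essentially the same argument as the paper: both parts use the edge constraint together with the standing bound on $\bar{x}$ for the first claim, and minimality (decrease the positive $\bar{y}_{e,v}$) for the second. The only cosmetic difference is that the paper phrases the first contradiction as ``$\bar{x}_v \ge 1/2$ or $\bar{x}_w \ge 1/2$'' rather than bounding the sum, and is slightly less explicit about enumerating the constraints involving $y_{e,v}$.
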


\begin{proof}
If both $\bar{y}_{e,v} = 0$ and $\bar{y}_{e,w} = 0$ then, since $(\bar{x},\bar{y})$ is feasible, we have $\bar{x}_v + \bar{x}_w \geq 1$. Hence, $\bar{x}_v \geq 1/2$ or $\bar{x}_w \geq 1/2$, a contradiction.

For the second part, toward a contradiction, suppose that $\bar{x}_v + \bar{x}_w + \bar{y}_{e,v} + \bar{y}_{e,w} > 1$. We may assume (by symmetry) that $\bar{y}_{e,v} > 0$. Then, slightly decreasing $\bar{y}_{e,v}$ preserves the feasibility of $(\bar{x},\bar{y})$. This contradicts the minimality of $(\bar{x},\bar{y})$.
\end{proof}

Now consider a component $T$ of the support graph $H(\bar{y})$. By Lemma~\ref{lem:acyclic}, the component $T$ is a tree. We define the \emph{defect} of $T$ as the number vertices $v \in V(T) \cap V(G)$ such that $\bar{x}_v > 0$, plus the number of vertices $v \in V(T) \cap V(G)$ such that $\bar{x}_v + \sum_{e \in \delta(v)} \bar{y}_{e,v} < 1$. Below, we denote this quantity by $\mathrm{defect}(T)$.

We say that component $T$ is \emph{tight} if $\bar{x}_v + \sum_{e \in \delta(v)} \bar{y}_{e,v} = 1$ for all vertices $v \in V(T) \cap V(G)$. Tight components play an important role in our analysis. We seek a tight component with extra properties, dubbed `interesting' (Lemma~\ref{lem:exists_interesting_comp} below states that such tight components exist).

\begin{lemma}
\label{lem:tight}
Let $T$ be a tight component of $H(\bar{y})$. Suppose that some vertex $v \in V(G)$ is a leaf of $T$. Then $T$ has exactly two vertices and $\bar{x}_w = 0$ for the other vertex $w \in V(G)$ that is incident to the unique edge of $G$ in $T$.
\end{lemma}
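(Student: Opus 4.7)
The plan is to combine the tightness condition at the leaf vertex $v$ with the assumption $\bar{x}_v < 1/3$ (held throughout the section toward a contradiction), and then propagate this information across the unique edge of $G$ that is incident to $v$ using Lemma~\ref{lem:edges}.

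First, since $v \in V(G)$ is a leaf of the tree $T$, there is a unique neighbor of $v$ in $T$, which must lie in $E(G)$; call it $e$. Thus $\bar{y}_{e,v} > 0$, while $\bar{y}_{f,v} = 0$ for every $f \in \delta(v) \setminus \{e\}$. The tightness of $T$ at $v$ then gives $\bar{x}_v + \bar{y}_{e,v} = 1$, so $\bar{y}_{e,v} = 1 - \bar{x}_v > 2/3$ by our standing assumption.

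Next, write $e = vw$ with $w \in V(G)$. By Lemma~\ref{lem:edges}, we have $\bar{x}_v + \bar{x}_w + \bar{y}_{e,v} + \bar{y}_{e,w} = 1$. Combining this with $\bar{x}_v + \bar{y}_{e,v} = 1$ forces $\bar{x}_w + \bar{y}_{e,w} = 0$, and nonnegativity then yields both $\bar{x}_w = 0$ and $\bar{y}_{e,w} = 0$. This establishes the second conclusion of the lemma.

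Finally, to see that $T$ has exactly two vertices, observe that the only possible neighbors of $e$ in $H$ are $v$ and $w$; since $\bar{y}_{e,v} > 0$ but $\bar{y}_{e,w} = 0$, the vertex $e$ is a leaf of $H(\bar{y})$ whose unique neighbor is $v$. Combined with the fact that $v$ itself is a leaf of $T$ with unique neighbor $e$, the connected component $T$ must be exactly $\{v, e\}$. This is essentially routine once the inequality $\bar{y}_{e,v} > 2/3$ is in hand, so I do not anticipate a substantive obstacle; the only subtlety is remembering that the contradiction hypothesis $\bar{x}_u < 1/3$ is what makes the feasibility constraint on edge $e$ collapse to equality at $w$, and that is precisely what forces the component to be so small.
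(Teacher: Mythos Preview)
Your proof is correct and follows essentially the same route as the paper: use tightness at the leaf $v$ to get $\bar{x}_v + \bar{y}_{e,v} = 1$, combine with the edge equality from Lemma~\ref{lem:edges} to force $\bar{x}_w = \bar{y}_{e,w} = 0$, and conclude that $T = \{v,e\}$. Your argument is in fact slightly more explicit than the paper's in spelling out why $e$ is itself a leaf of $H(\bar{y})$; the observation $\bar{y}_{e,v} > 2/3$ is correct but unnecessary, since the conclusion $\bar{x}_w + \bar{y}_{e,w} = 0$ already follows directly from subtracting the two tight equalities.
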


\begin{proof}
Since $T$ is tight, we have $\bar{x}_v + \sum_{e \in \delta(v)} \bar{y}_{e,v} = 1$. If $\bar{y}_{e,v} = 0$ for all $e \in \delta(v)$, we get $\bar{x}_v = 1$, a contradiction. Hence, $T$ has at least two vertices and we have $\bar{y}_{e,v} = 0$ for all $e \in \delta(v)$ except for precisely one edge, say $f = vw$. From $\bar{x}_v + \bar{y}_{f,v} = 1$ and $\bar{x}_v + \bar{x}_w + \bar{y}_{f,v} + \bar{y}_{f,w}$ (see Lemma~\ref{lem:edges}), we get $\bar{x}_w + \bar{y}_{f,w} = 0$, which implies $\bar{x}_w = 0$ and $\bar{y}_{f,w} = 0$. The result follows.
\end{proof}

We will call a tight component $T$ with exactly two vertices as a \emph{dyad}. Let $T$ be a dyad, say, with $V(T) = \{v,f\}$ where $v \in V(G)$ and $f = vw \in E(G)$. We note that $x_v + \sum_{e \in \delta(v)} y_{e,v} = 1$ follows from the tight constraints $x_v + x_w + y_{f,v} + y_{f,w} = 1$, $x_w = 0$, $y_{e,v} = 0$ for all $e \in \delta(v) \setminus \{f\}$ and $y_{f,w} = 0$. %Note that there are two types of dyads: those with defect $0$ and those with defect $1$. 

\begin{lemma} \label{lem:avg_defect_nonsimple}
Let $T_1$, \ldots, $T_k$ denote the components of $H(\bar{y})$ and let $d$ denote the number of dyads. We have
\begin{equation}
\label{eq:counting}
\sum_{i=1}^{k} \mathrm{defect}(T_i) \leq k-d\,.
\end{equation}
\end{lemma}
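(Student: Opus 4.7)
The plan is to use a linear-algebraic counting argument based on the extreme-point condition for $(\bar{x},\bar{y}) \in \orientationpolyhedron(G) \subseteq \mathbb{R}^{n+2m}$, where $n := |V(G)|$ and $m := |E(G)|$.  Introduce the quantities
$$n_0 := |\{v \in V(G) : \bar{x}_v = 0\}|, \quad m_0 := |\{(e,v) : v \in V(G),\ e \in \delta(v),\ \bar{y}_{e,v}=0\}|,$$
and $n_L := |\{v \in V(G) : \bar{x}_v + \sum_{e \in \delta(v)} \bar{y}_{e,v} = 1\}|$.  Directly from the definition of defect and summing across components, $\sum_{i=1}^k \mathrm{defect}(T_i) = (n - n_0) + (n - n_L)$, while Lemma~\ref{lem:num_components} gives $k = n - m + m_0$.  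A short rearrangement shows that inequality~(\ref{eq:counting}) is equivalent to the \emph{equivalent target inequality}
$$n_0 + m_0 + n_L \;\geq\; n + m + d. \qquad (\star)$$

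The first step toward $(\star)$ is to count tight constraints.  Since $(\bar{x},\bar{y})$ is an extreme point in $\mathbb{R}^{n+2m}$, the tight constraints span a linear space of dimension $n + 2m$.  By Lemma~\ref{lem:edges} all $m$ orientation constraints are tight; to these we add $n_0 + m_0$ tight non-negativity constraints and $n_L$ tight load constraints.  Thus the total number of tight constraints is $m + n_0 + m_0 + n_L$, which yields the crude bound $n_0 + m_0 + n_L \geq n + m$ were no linear dependencies to exist among them.  Our task is to find $d$ further units of slack, one per dyad.

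The second and main step is to exhibit, for each dyad, one linear dependence among the tight constraints.  Let $T = \{v, f\}$ be a dyad with $f = vw$.  By Lemma~\ref{lem:tight} and the discussion following it, the following constraints are simultaneously tight at $(\bar{x},\bar{y})$: the orientation constraint at $f$, namely $x_v + x_w + y_{f,v} + y_{f,w} = 1$; the load constraint at $v$, namely $x_v + \sum_{e \in \delta(v)} y_{e,v} = 1$; and the non-negativities $x_w = 0$, $y_{f,w} = 0$, together with $y_{e,v} = 0$ for each $e \in \delta(v)\setminus \{f\}$.  A direct inspection verifies the identity
$$\mathrm{orient}(f) \;-\; \mathrm{load}(v) \;-\; (x_w) \;-\; (y_{f,w}) \;+\; \sum_{e \in \delta(v)\setminus\{f\}} (y_{e,v}) \;=\; 0,$$
viewed as an identity among the corresponding affine forms on $\mathbb{R}^{n+2m}$ (the left-hand sides cancel pairwise, and the constants satisfy $1 - 1 - 0 - 0 + 0 = 0$).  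Consequently the rank of the tight constraints is at most $m + n_0 + m_0 + n_L - d$; matching this with the dimension count $n + 2m$ yields exactly $(\star)$.

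The subtlety I anticipate lies in cleanly arguing linear independence of the $d$ dependencies obtained across dyads.  Different dyads can share the spouse vertex $w$, so the non-negativity constraint $x_w = 0$ may appear in more than one identity.  However, the orientation constraint of $f$ uniquely pins each dependency---each edge $f$ of $G$ lies in at most one component of $H(\bar{y})$, and the orientation row of $f$ appears with nonzero coefficient only in the dependency produced by the dyad $\{v,f\}$---so the $d$ coefficient vectors are linearly independent.  Apart from this bookkeeping, the proof is a routine chain of computations combining Lemmas~\ref{lem:edges}--\ref{lem:tight} with the extreme-point dimension condition.
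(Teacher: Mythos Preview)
Your argument is correct and follows essentially the same approach as the paper's proof: both reduce \eqref{eq:counting} to the dimension count $n+2m \le m + n_0 + m_0 + n_L - d$ by observing that for each dyad the load constraint at its unique $G$-vertex is a linear combination of the other tight constraints. Your treatment is in fact slightly more careful than the paper's, since you explicitly verify that the $d$ resulting dependencies are linearly independent (via the distinct orientation rows $f_i$), whereas the paper leaves this implicit.
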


\begin{proof}
The number of variables defining 
%ambient dimension 
the orientation polytope is $|V(G)| + |E(H)| = |V(G)| + 2 |E(G)|$. By Lemma~\ref{lem:edges}, we can write the total number of constraints that are tight at $(\bar{x},\bar{y})$ as $|E(G)| + n_\mathrm{tight} + n_0 + m_0$, where $n_\mathrm{tight}$ denotes the number of vertices $v \in V(G)$ such that $\bar{x}_v + \sum_{e \in \delta(v)} \bar{y}_{e,v} = 1$, $n_0$ denotes the number of vertices $v \in V(G)$ such that $\bar{x}_v = 0$, and $m_0$ (as above) denotes the number of edges $ev \in E(H)$ such that $\bar{y}_{e,v} = 0$. For every dyad $T$, the tight constraint $\bar{x}_v + \sum_{e \in \delta(v)} \bar{y}_{e,v} = 1$ follows from the other tight constraints, where $v$ is the unique vertex of $G$ in $T$. Since $(\bar{x},\bar{y})$ is an extreme point, we have that the number of variables is at most the number of tight constraints. Hence, 
\begin{align*}
|V(G)| + 2 |E(G)| &\leq |E(G)| + n_\mathrm{tight} + n_0 + m_0 - d.
\end{align*}
Rewriting the above gives
\begin{align*}
|V(G)| - n_\mathrm{tight} + |V(G)| - n_0 &\leq |V(G)| - |E(G)| + m_0 - d\,.
\end{align*}
Inequality \eqref{eq:counting} follows from this and Lemma~\ref{lem:num_components}.
\end{proof}

We call a component $T$ of $H(\bar{y})$ \emph{interesting} if it is tight, has defect at most $1$ and is not a dyad. By Lemma~\ref{lem:tight}, interesting components $T$ have the following properties: (i) $T$ has at least three vertices, (ii) every leaf of $T$ is an edge of $G$, and (iii) $\bar{x}(v) = 0$ for all vertices $v \in V(T) \cap V(G)$ except at most one.

\begin{lemma} \label{lem:exists_interesting_comp}
At least one component of $H(\bar{y})$ is interesting.
\end{lemma}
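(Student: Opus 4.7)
The plan is to argue by contradiction: assume that no component of $H(\bar{y})$ is interesting and derive that $\bar{x} = 0$, contradicting the hypothesis that $G$ is not a pseudoforest. The main tool is the defect-counting bound $\sum_{i=1}^{k} \mathrm{defect}(T_i) \leq k - d$ from Lemma~\ref{lem:avg_defect_nonsimple}, where $k$ is the number of components and $d$ the number of dyads.

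Under the assumption that no component is interesting, every non-dyad component is either not tight (so its defect is at least $1$) or has defect at least $2$. I will partition the non-dyads into three classes: tight with defect $\geq 2$ (count $t_1$), non-tight with defect exactly $1$ (count $u_1$), and non-tight with defect $\geq 2$ (count $u_{\geq 2}$). Lower-bounding the defect contribution of each class and invoking Lemma~\ref{lem:avg_defect_nonsimple} yields $2 t_1 + u_1 + 2 u_{\geq 2} \leq k - d = t_1 + u_1 + u_{\geq 2}$, which forces $t_1 = u_{\geq 2} = 0$. Consequently, every non-dyad component is non-tight with defect exactly $1$, and the resulting chain of inequalities becomes tight, forcing the dyads to contribute $0$ in total; hence every dyad has $\bar{x}_v = 0$ at its unique vertex in $V(G)$. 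For a non-dyad component of defect exactly $1$ that is non-tight, the single unit of defect is already accounted for by a $V(G)$-vertex whose constraint $\bar{x}_v + \sum_e \bar{y}_{e,v} < 1$ is slack, so by the definition of defect no $V(G)$-vertex in this component has $\bar{x}_v > 0$. Since every vertex of $G$ lies in some component of $H(\bar{y})$, combining both cases gives $\bar{x} = 0$ identically.

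Finally, substituting $\bar{x} = 0$ into the defining inequalities of $\orientationpolyhedron(G)$ yields $\bar{y}_{e,v} + \bar{y}_{e,w} \geq 1$ for every edge $e = vw$ and $\sum_{e \in \delta(v)} \bar{y}_{e,v} \leq 1$ for every vertex $v$, exhibiting a fractional orientation of $G$ with maximum indegree at most $1$. Via the density-based argument used in the proof of \Cref{lemma:orientation-formulation} (equivalently, the containment $\projectedorientationpolyhedron(G) \subseteq \weakdensitypolyhedron(G)$, which with $\bar{x} = 0$ gives $|E[S]| \leq |S|$ for every $S \subseteq V$), we conclude that $G$ is a pseudoforest, contradicting the hypothesis. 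The main technical point is the tightness of the counting inequality in the second paragraph: it is precisely this tightness that simultaneously rules out every non-dyad component type other than non-tight components of defect exactly $1$ and forces every dyad to have $\bar{x}_v = 0$ at its $V(G)$-vertex, which together yield $\bar{x} \equiv 0$.
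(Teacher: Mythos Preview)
Your proof is correct and follows essentially the same approach as the paper's. Both arguments hinge on the counting inequality of Lemma~\ref{lem:avg_defect_nonsimple}; the paper first fixes a vertex $r$ with $\bar{x}_r > 0$ (observing that otherwise $G$ would be a pseudoforest) and then shows that the component containing $r$ must be interesting, whereas you run the contrapositive---assuming no interesting component exists and deducing $\bar{x} \equiv 0$---arriving at the same contradiction. Your partition into $t_1$, $u_1$, $u_{\geq 2}$ is a slightly more explicit bookkeeping of the same inequality chain, but the logical content is identical.
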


\begin{proof}
First, we observe that if $\bar{x}_v = 0$ for all vertices $v \in V(G)$, then $G$ is a pseudoforest, which contradicts our hypothesis. Hence, there exists a vertex $r \in V(G)$ with $\bar{x}_r > 0$. 

Let $T_1$, \ldots, $T_k$ denote the components of $H(\bar{y})$. %We apply Lemma~\ref{lem:avg_defect_nonsimple}. 
If some $T_i$ which is not a dyad has $\mathrm{defect}(T_i) = 0$, then $T_i$ is interesting. Hence, we may assume that $\mathrm{defect}(T_i) \geq 1$ for all components that are not dyads. %By~\eqref{eq:counting}, 
By \Cref{lem:avg_defect_nonsimple}, 
this implies $\mathrm{defect}(T_i) = 0$ for all dyads and $\mathrm{defect}(T_i) = 1$ for all the other components. Then, the unique component containing $r$ is interesting.
\end{proof}

% We now restate and prove \Cref{thm:orientation-polyhedron-extreme-point}.
% \thmOrientationPolyhedronExtremePoint*
We now restate and prove \Cref{thm:orientation-polyhedron-extreme-point}.
\thmOrientationPolyhedronExtremePoint*
\begin{proof}
%[Proof of \Cref{thm:orientation-polyhedron-extreme-point}]
Aiming toward a contradiction, suppose that $\bar{x}_v < 1/3$ for all vertices $v \in V(G)$. Hence, each one of the above lemmas apply. 
Let $T$ denote an interesting component of $H(\bar{y})$, which exists by Lemma~\ref{lem:exists_interesting_comp}. Since it is interesting, $T$ has at least two vertices and each leaf of $T$ is an edge of $G$. 

Let $r \in V(T) \cap V(G)$ denote any vertex such that $\bar{x}_v = 0$ for all vertices $v \in V(T) \cap V(G)$ distinct from $r$. We call $r$ the \emph{root} of $T$. For $w \in V(G)$, let $d_T(w)$ denote the number of edges $e \in V(T) \cap E(G)$ that are incident to $w$. (This is a slight abuse of notation since $d_T(w)$ is also defined for vertices $w \in V(G)$ outside of $T$.) We let $N(T)$ denote the set of vertices $w \in V(G)$ such that $d_T(w) > 0$ and $w \notin V(T)$. Finally, we let $\mathcal{L}(T) \subseteq E(G)$ denote the leaf set of $T$. 

By Lemma~\ref{lem:edges} and since $T$ is an interesting component of the support graph $H(\bar{y})$, we have
\begin{align*}
&|V(T) \cap E(G)| - |V(T) \cap V(G)|\\
&= \sum_{e = vw \in V(T) \cap E(G)} \left(\bar{x}_v + \bar{x}_w + \bar{y}_{e,v} + \bar{y}_{e,w}\right) 
- \sum_{v \in V(T) \cap V(G)} \left(\bar{x}_v + \sum_{e \in \delta(v)} \bar{y}_{e,v}\right)\\
&= (d_T(r) - 1) \bar{x}_r + \sum_{w \in N(T)} d_T(w) \bar{x}_w\,.
\end{align*}
We notice that $|V(T) \cap E(G)| - |V(T) \cap V(G)| = |\mathcal{L}(T)| - 1$. Hence, from the equations above, we get
\begin{equation}
\label{eq:component}
(d_T(r) - 1) \bar{x}_r + \sum_{w \in N(T)} d_T(w) \bar{x}_w = |\mathcal{L}(T)|-1\,.
\end{equation}

Let $M := (d_T(r)-1) + \sum_{w \in N(T)} d_T(w) = (d_T(r)-1) + |\mathcal{L}(T)|$ denote the sum of the coefficients in the left-hand side of \eqref{eq:component}. We note that $d_T(r) \leq |\mathcal{L}(T)|$. This implies $M \leq 2|\mathcal{L}(T)| - 1$. Thus,
$$
\frac{1}{M} \left( (d_T(r) - 1) \bar{x}_r + \sum_{w \in N(T)} d_T(w) \bar{x}_w\right) \geq \frac{|\mathcal{L}(T)|-1}{2|\mathcal{L}(T)|-1} \geq \frac{1}{3}\,.
$$
This is a contradiction since the left-hand side is an average of values that are all strictly less than $1/3$.
\end{proof}

\section{Feedback Vertex Set}
%: ILP Formulations and LP Integrality Gaps}
\label{sec:poly-sized-lp-relaxation-with-gap-atmost-two-for-fvs}
In this section, we prove our main results for FVS. We prove \Cref{thm:integrality-gap-wd+cycle-cover} in Section \ref{sec:integrality-gap-of-weak-density-and-cycle-cover} and \Cref{thm:poly-sized-LP-with-integrality-gap-atmost-2-for-FVS} in Section \ref{sec:orientation-and-cycle-cover-formulation}. We discuss two additional polynomial-sized ILP formulations for FVS with integrality gap at most $2$ in Section \ref{sec:more-FVS-formulations}. 

% \subsection{Formulations}

% \subsubsection{Weak Density and Cycle Cover constraints}

% \subsubsection{Orientation and Cycle Cover constraints}

% \subsubsection{Alternative Formulations}

% \paragraph{Chekuri-Madan Formulation}

% \paragraph{Orientation-based Formulation}

\subsection{Weak Density and Cycle Cover constraints}\label{sec:integrality-gap-of-weak-density-and-cycle-cover}
In this section, we restate and prove Theorem \ref{thm:integrality-gap-wd+cycle-cover}.

\thmWeakDensityCycleCover*

\iffalse
\section{Weak Density and Cycle Cover Constraints for FVS}\label{sec:weak-density-cycle-cover}

In this section, we prove \Cref{thm:integrality-gap-wd+cycle-cover}. Let $G=(V, E)$ be the input graph with non-negative vertex costs $c: V\rightarrow \R_{\ge 0}$. By definition, the ILP $\min\{\sum_{u\in V}c_ux_u: x\in \cyclecoverpolyhedron(G)\cap \Z^V\}$ formulates FVS. We note that weak density constraints are valid for FVS. Consequently, (\ref{FVS-IP: WD-and-cycle-cover}) is an ILP formulation for FVS. 
The additional conclusions of \Cref{thm:integrality-gap-wd+cycle-cover} follow from \Cref{lemma:weak-density-cycle-cover-integrality-gap} shown in Section \ref{sec:integrality-gap-of-weak-density-and-cycle-cover} and \Cref{lemma:cycle-cover-poly-sized-lp} shown in Section \ref{sec:cycle-cover-poly-sized-formulation}. 
\fi
%\subsection{Integrality gap of weak density and cycle cover constraints}\label{sec:integrality-gap-of-weak-density-and-cycle-cover}
%\shubhang{todo: modify paragraph} In this section, we prove \Cref{thm:integrality-gap-wd+cycle-cover}. 
Let $G=(V, E)$ be a graph with non-negative vertex costs $c: V\rightarrow \R_{\ge 0}$. By definition, the ILP $\min\{\sum_{u\in V}c_ux_u: x\in \cyclecoverpolyhedron(G)\cap \Z^V\}$ formulates FVS. We note that weak density constraints are valid for FVS. Consequently, (\ref{FVS-IP: WD-and-cycle-cover}) is an ILP formulation for FVS. 
%The additional conclusions of 
The rest of the section is devoted to proving the integrality gap bound mentioned in 
\Cref{thm:integrality-gap-wd+cycle-cover}. 
%follow from \Cref{lemma:weak-density-cycle-cover-integrality-gap} shown in Section \ref{sec:integrality-gap-of-weak-density-and-cycle-cover} and \Cref{lemma:cycle-cover-poly-sized-lp} shown in Section \ref{sec:cycle-cover-poly-sized-formulation}.
%In this section, we show that the integrality gap of (\ref{FVS-LP:weak-density-cycle-cover}) is at most $2$ by proving the following main lemma. 
\begin{lemma}\label{lemma:weak-density-cycle-cover-integrality-gap}
    The integrality gap of the LP (\ref{FVS-LP:weak-density-cycle-cover}) is at most $2$. 
\end{lemma}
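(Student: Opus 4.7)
The plan is to prove the lemma via a primal-dual (equivalently, local-ratio) argument, analogous to the scheme of Chudak, Goemans, Hochbaum, and Williamson for the strong density LP, but driven by the two families of constraints at hand. The dual of (\ref{FVS-LP:weak-density-cycle-cover}) has nonnegative variables $\alpha_S$ for each $S \subseteq V$ (weak density) and $\beta_C$ for each $C \subseteq V$ with $G[C]$ an induced cycle (cycle cover), and maximizes $\sum_{S}(|E[S]|-|S|)\alpha_S + \sum_{C}\beta_C$ subject to $\sum_{S \ni u}(d_S(u)-1)\alpha_S + \sum_{C \ni u}\beta_C \leq c_u$ for every $u\in V$. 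My goal is to exhibit a feedback vertex set $U$ and a feasible dual $(\alpha, \beta)$ satisfying $c(U) \leq 2\left(\sum_{S}(|E[S]|-|S|)\alpha_S + \sum_{C}\beta_C\right)$, which by weak LP-duality yields $c(U) \leq 2\cdot\mathrm{OPT}_{\mathrm{LP}}$, giving the desired gap bound.

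The primal-dual algorithm maintains a current subgraph $H$ (initially $G$) and repeatedly does the following: after deleting from $H$ all vertices of degree at most $1$ in $H$ (since such vertices lie in no cycle), if $H$ still has a cycle, I consider two cases. If $H$ contains a \emph{semi-disjoint} induced cycle $C$ (one in which at most one vertex has degree greater than $2$ in $H$), I raise $\beta_C$ until some $v\in V(C)$ becomes dual-tight, add $v$ to $U$, and delete it from $H$; otherwise I raise $\alpha_{V(H)}$ until some vertex of $H$ becomes dual-tight, add it to $U$, and delete it. Once $H$ is acyclic, I post-process $U$ by reverse-deletion: scanning the vertices added to $U$ in reverse order and discarding any vertex whose removal still leaves $G-U$ acyclic, so that the final $U$ is a minimal FVS.

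By complementary slackness, $c(U) = \sum_{C}\beta_C\,|U\cap V(C)| + \sum_{S}\alpha_S \sum_{v\in U\cap S}(d_S(v)-1)$, so the factor-$2$ bound reduces to two structural claims: (i) $|U \cap V(C)| \leq 2$ for every $C$ with $\beta_C > 0$, and (ii) $\sum_{v\in U\cap S}(d_S(v)-1) \leq 2(|E[S]|-|S|)$ for every $S$ with $\alpha_S > 0$. Claim~(i) follows because when $C$ was processed, the degree-$2$ vertices of $C$ lay in no other cycle of the then-current subgraph, so minimality of $U$ forces $U\cap V(C)$ to consist of at most the unique high-degree vertex of $C$ plus at most one degree-$2$ vertex chosen to destroy $C$. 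Claim~(ii) is the main obstacle, and the crux of the proof: whenever $\alpha_S$ was raised, $G[S]$ had minimum degree at least $2$ and contained no semi-disjoint induced cycle, and I plan to establish that any minimal FVS $F$ of $G[S]$ under these hypotheses satisfies $\sum_{v\in F}(d_S(v)-1) \leq 2(|E[S]|-|S|)$. Expanding $\sum_{v\in F}d_S(v) = 2|E[F]|+|E[F,S\setminus F]|$ and using that $G[S\setminus F]$ is a forest (so $|E[S\setminus F]| = |S\setminus F| - k$ for $k$ the component count of $G[S\setminus F]$), the inequality reduces to $|E[F,S\setminus F]| \geq |F| + 2k$. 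This should follow from the hypotheses: the minimum-degree-$2$ condition forces many edges from $F$ into $S\setminus F$, while the absence of a semi-disjoint induced cycle supplies the extra $2k$ slack by providing at least two vertices of degree $\geq 3$ on the boundary of each forest component, with small exceptional cases (triangles and related structures) handled separately.
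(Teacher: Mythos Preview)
Your proposal is essentially the paper's proof: the same primal-dual scheme (raise the cycle-cover dual on a semi-disjoint cycle if one exists, otherwise raise the weak-density dual on the full residual vertex set), followed by reverse-delete, with the analysis reducing claim~(ii) to exactly the edge-count inequality $|\delta(F,S\setminus F)|\ge|F|+2k$ that you derive.

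Two details deserve tightening. First, to apply claim~(ii) you need $U\cap S$ to be a \emph{minimal} FVS of $G[S]$ for each $S$ with $\alpha_S>0$; minimality of $U$ in $G$ does not immediately give this, and the paper proves it as a separate lemma exploiting that reverse-delete processes vertices in reverse insertion order. Second, your sketch of $|\delta(F,S\setminus F)|\ge|F|+2k$ is off: there are no exceptional cases, and the ``two degree-$\ge 3$ boundary vertices per component'' picture is not the argument. The paper's proof assigns to each $f\in F$ one crossing edge via its witness cycle (which exists by minimality and lies in a single component of $S\setminus F$), accounting for $|F|$ edges; it then shows that after discounting one unit from each such edge, every component of $G[S\setminus F]$ still contributes crossing weight at least $2$, since otherwise one exhibits either a degree-$1$ vertex of $G[S]$ or a semi-disjoint cycle. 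Incidentally, the paper obtains $|U\cap V(C)|=1$ rather than your $\le 2$, but your weaker bound already suffices for the factor-$2$ conclusion.
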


Let $\calC$ be the set of cycles in the input graph $G$.  Let $b(S):=|E[S]|-|S|$ for all $S\subseteq V$. 
We rewrite (\ref{FVS-LP:weak-density-cycle-cover}) and its dual in \Cref{fig:primal-dual-LPs}. 

%% primal and dual LPs
\begin{figure}[H]
    \centering
    \begin{mdframed}
    \footnotesize
    \begin{tabular}{cc|cc}
    % First column
    $\begin{array}{ll@{}ll}
        & \text{\textbf{Primal}:}  & &\\
        &&&\\ % skip a line
        \text{min}  &\displaystyle\sum\limits_{u \in V} c_{u}x_{u}&   &\\
        &&&\\ % skip a line
        \text{s.t.}& \displaystyle\sum\limits_{u\in S}(d_S(u)-1)x_u \geq b(S), & & \forall S \subseteq V\\
        &\displaystyle\sum\limits_{u\in C}x_{u} \geq 1 &    & \forall C \in \calC\\
        &x_u \geq 0&&\forall u \in V
    \end{array}$ 
    & & & % Second column
    $\begin{array}{ll@{}ll}
        & \text{\textbf{Dual}:} & & \\
        &&&\\ % skip a line
        \text{max}  &\displaystyle\sum\limits_{S\subseteq V} b(S)y_S + \displaystyle\sum\limits_{C\in\calC}z_C &   &\\
        &&&\\ % skip a line
        \text{s.t.}& \displaystyle\sum\limits_{S\subseteq V:u\in S}(d_S(u)-1)y_S + \displaystyle\sum\limits_{C\in \calC:u\in C}z_C \leq c_u & & \forall u \in V\\
        &y_S \geq 0 &    & \forall S \subseteq V\\
        &z_C \geq 0&&\forall C \in \calC
    \end{array}$ 
    \end{tabular}
\end{mdframed}
    \caption{\textbf{WD-CC} Primal and Dual LPs for FVS}
  \label{fig:primal-dual-LPs}
\end{figure}

We will need the definition of \emph{semi-disjoint cycles} from \cite{CHUDAK1998111}:  
%which will be used throughout this section. 
A cycle $C$ in graph G is \emph{semi-disjoint} (with respect to the graph $G$) if there is at most one vertex in $C$ of degree strictly larger than $2$. We call such a vertex as the \emph{pivot} vertex of the semi-disjoint cycle. We note that such cycles can be viewed as ``hanging off'' the graph via their pivots which are cut vertices. 

\subsubsection{Minimal FVS and Weak Density constraints}
In this section, we show an important property of inclusion-wise \emph{minimal} feedback vertex sets. Let $F$ be a feedback vertex set for a graph $G$. A \emph{witness cycle} for a vertex $f \in F$ is a cycle $C$ of the graph $G$ such that $F\cap C = \{f\}$ i.e.\ $f$ is the only vertex that \emph{witnesses} the cycle $C$. 
We note that every vertex of a minimal FVS must have a witness cycle.
In \Cref{lem:minimal-fvs-is-2apx}, we show that for a graph with non-trivial degree and no semi-disjoint cycles, every \emph{minimal} feedback vertex set is essentially a $2$-approximation to the optimal feedback vertex set with respect to an appropriate cost function. We note that our proof of the lemma appears implicitly in \cite{CHUDAK1998111}. However, we include it here for completeness.

\begin{lemma}\label{lem:minimal-fvs-is-2apx}
Let $G$ be a graph with minimum-degree at least $2$ and containing no semi-disjoint cycles. Then, for every minimal feedback vertex set $F$, we have that $$\sum_{v\in F}(d(v) - 1) \leq 2b(V).$$
\end{lemma}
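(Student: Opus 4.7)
My approach is an algebraic reduction to an edge-counting inequality between $F$ and $T := V \setminus F$, followed by a case analysis on the components of the forest $G[T]$. Let $k$ denote the number of connected components $C_1,\ldots,C_k$ of $G[T]$. Using $|E[T]|=|T|-k$, $|V|=|F|+|T|$, and $\sum_{v\in F} d(v)=2|E[F]|+|\delta(F,T)|$, the target inequality $\sum_{v\in F}(d(v)-1)\le 2(|E|-|V|)$ simplifies algebraically to
\[
|\delta(F,T)| \ \ge\ |F| + 2k.
\]
For each component $C_i$, writing $\ell_i := |\delta(F,C_i)|$, a count of degrees inside $C_i$ together with $|E[C_i]|=|C_i|-1$ yields the identity $\ell_i = \sum_{u\in C_i}(d(u)-2)+2$; in particular the min-degree hypothesis gives $\ell_i\ge 2$ unconditionally.

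For each $v\in F$, minimality supplies a witness cycle $C_v$ with $F\cap C_v=\{v\}$; since $C_v-v$ is a path in the forest $G[T]$, it lies in a single tree component $C_{j(v)}$. Set $m_i := |\{v\in F:\ j(v)=i\}|$, so $\sum_i m_i = |F|$. I claim three bounds on $\ell_i$. \textbf{(a)} If $m_i\ge 1$, then $\ell_i\ge 2m_i$: each witness cycle through $C_i$ contributes two edges from its own $F$-vertex to $C_i$, and edges belonging to witness cycles of distinct $v\in F$ have distinct $F$-endpoints, hence are distinct. \textbf{(b)} If $m_i=0$, we use only the trivial bound $\ell_i\ge 2$. \textbf{(c)} The only place the no-semi-disjoint-cycle hypothesis is used is the case $m_i=1$, where I claim $\ell_i\ge 3$: if instead $\ell_i=2$, then $\sum_{u\in C_i}(d(u)-2)=0$ forces every $u\in C_i$ to have $d_G(u)=2$, and the unique witness cycle through $C_i$ would then have all of its $T$-vertices of degree exactly $2$, so at most one vertex of degree larger than $2$ (namely $v$, if $d(v)>2$), making it semi-disjoint and contradicting the hypothesis.

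Finally let $k^{(0)},k^{(1)},k^{(\ge 2)}$ count components with $m_i$ equal to $0$, $1$, and at least $2$, respectively, and use $\sum_{m_i\ge 2} m_i = |F|-k^{(1)}$. Summing the three bounds gives
\[
|\delta(F,T)| \ \ge\ 2k^{(0)} + 3k^{(1)} + 2(|F|-k^{(1)}) \ =\ 2|F| + 2k^{(0)} + k^{(1)}.
\]
Combined with the trivial $|F| = k^{(1)} + \sum_{m_i\ge 2} m_i \ge k^{(1)} + 2k^{(\ge 2)}$, subtraction yields
\[
|\delta(F,T)| - (|F|+2k) \ \ge\ \bigl(2|F|+2k^{(0)}+k^{(1)}\bigr) - \bigl(|F|+2k^{(0)}+2k^{(1)}+2k^{(\ge 2)}\bigr) \ =\ |F| - k^{(1)} - 2k^{(\ge 2)} \ \ge\ 0,
\]
which proves the reduced inequality and hence the lemma. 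The main obstacle in the argument is the middle case $m_i=1$: the naive bound $\ell_i\ge 2$ is short by exactly one, and this is precisely the slack that the no-semi-disjoint hypothesis must absorb. The other cases are elementary edge-counting, and the role of the hypotheses ``minimum degree at least $2$'' and ``no semi-disjoint cycles'' is made transparent by the case analysis.
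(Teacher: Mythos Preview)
Your proof is correct and follows essentially the same approach as the paper. Both arguments first reduce the statement to the edge-count inequality $|\delta(F,T)| \ge |F| + 2k$ (this is the paper's Claim~\ref{claim:lower-bound-cut-fvs}), and both then establish it via witness cycles, the min-degree hypothesis, and the no-semi-disjoint-cycle hypothesis in the same places. The only difference is bookkeeping: the paper builds an auxiliary bipartite graph between components and $F$, designates a ``primary edge'' for each $f\in F$, and argues that each component retains residual weight at least $2$; you instead assign each $f$ to the component $C_{j(f)}$ containing its witness path and do a direct case analysis on $m_i\in\{0,1,\ge 2\}$. Your three bounds combine to $\ell_i\ge m_i+2$, which is exactly the paper's residual-weight conclusion, so the two countings are equivalent.
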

\begin{proof}
We first show a convenient sufficient condition that proves the claimed upper bound. We subsequently prove this sufficient condition by an edge-counting argument. Let $t$ denote the number of connected components in $G - F$. 

\begin{claim}\label{claim:lower-bound-cut-fvs}
If $|\delta(F, V-F)| \geq |F| + 2t$, then $\sum_{v \in F} (d(v) - 1)\leq 2b(V)$.
\end{claim}
\begin{proof}
We have the following:
\begin{align*}
    \sum_{v \in F}(d(v) - 1)& = 2|E| - |F| - \sum_{v \in V - F}d(v) &\\
    &=  2|E| - |F| - \left(|\delta(F, V - F)| + 2|E[V - F]|\right) &\\
    & =  2|E| - |F| - \left(|\delta(F, V - F)| + 2(|V| - |F| - t)\right)&\\
    &\leq  2|E| - |F| - \left(2|V| - |F|\right)&\\
    & = 2b(V).&
\end{align*}
Here, the first equality holds because the sum of all vertex degrees is $2|E|$, the third equality is because $G - F$ is acyclic, and the inequality is due to the hypothesis that $|\delta(F, V-F)| \geq |F| + 2t$.
% \begin{align*}
%     2|V| - |F|&\leq |\delta(F, V - F)| + 2(|V| - |F| - t)&\\
%     & = |\delta(F, V - F)| + 2|E[V - F]|&\\
%     & = \sum_{v \in V - F} d(v)&\\
%     & = 2|E| - \sum_{v\in F}d(v).&
% \end{align*}
% The above chain of inequalities gives us the claim (by rearranging terms). Here, the first inequality holds due to the hypothesis that $|\delta(F, V-F)| \geq |F| + 2t$. The first equality holds because $G-F$ is acyclic. The final equality holds because the sum of all vertex degrees is $2|E|$.
\end{proof}

We now show that a minimal feedback vertex set $F$ has at least $|F|+2t$ edges crossing it.  
For this, we consider the auxiliary bipartite graph $H = (K\cup F, E_H)$ as follows: Each vertex $k \in K$ corresponds to a connected component $C_k$ in $G - F$. For vertices $k \in K, f \in F$, the graph $H$ contains the edge $(k,f)$ if 
there exists a vertex $v \in C_k$ such that the edge $(v, f) \in E$ exists in the original graph $G$. 
Furthermore, we define the weight of an edge $(k, f) \in E_H$ as $w_H(k,f):=|\delta_G(C_k, f)|$. We note that $|K|=t$.

Since $F$ is an inclusionwise minimal feedback vertex set, every vertex $f \in F$ has a witness cycle. We note that there are no edges between the components of $G - F$. Thus, every witness cycle is completely contained in some component $C_k$ for some $k \in K$. In particular, this implies that in the graph $H$, every vertex $f\in F$ has an edge of weight at least $2$ incident on it. For each $f\in F$ pick an arbitrary such edge $e_f \in \delta_H(f)$ of weight at least $2$ and call it $f$'s \emph{primary} edge. Then, reducing the weight of all primary edges by $1$ still leaves all edges in $E_H$ with positive weight, but counts $|F|$ edges in the cut $|\delta(F, V-F)|$ in the original graph $G$. Let $H'$ denote the residual graph after this weight reduction. By \Cref{claim:lower-bound-cut-fvs}, it now suffices to show that the weight of all edges in $H'$ is at least $2|K|$.

For this, we claim that each vertex $k\in K$ must have a cumulative edge-weight of at least $2$ incident on it in the residual graph $H'$. By way of contradiction, suppose that there exists a vertex $k\in K$ with total incident edge-weight at most $1$. We consider two cases. First, suppose that $k$ has no incident edges. We recall that we reduced the weight of only primary edges in our weight-reduction step. Thus, if there was a primary edge incident to $k$, then $k$ would have a weight $1$ edge incident on it in the residual graph, contradicting the case assumption. Thus, there are no edges incident to $C_k$ in $G$, i.e. $C_k$ is a disconnected acyclic component of $G$. In particular, $C_k$ contains a leaf vertex, contradicting the hypothesis that the minimum degree of $G$ is at least $2$.

Next, consider the case where $k$ has a total weight of $1$ incident on it in the residual graph $H'$. Let $f \in F$ be its unique neighbor in the residual graph. Then, $f$ must also be its unique neighbor in $H$. We note that if $C_k$ has only one edge to $f$ in $G$, then $C_k$ must contain a leaf node in $G$ as it is acyclic, a contradiction. Thus, $C_k$ does not contain any leaf nodes, is acyclic, and has exactly $2$ edges to $f$ in $G$. In particular, $C_k \cup \{f\}$ is a semi-disjoint cycle in $G$, contradicting the hypothesis that $G$ has no semi-disjoint cycles. 
\end{proof}

\subsubsection{Proof of integrality gap}
In this section, we prove \Cref{lemma:weak-density-cycle-cover-integrality-gap}, 
i.e., we show that the integrality gap of the LP-relaxation for FVS given in \Cref{fig:primal-dual-LPs} is at most $2$. 
We prove this by constructing a dual feasible solution $(\overline{y}, \overline{z})$ and a FVS $F$ such that the cost of $F$ is at most twice the dual objective value of $(\overline{y}, \overline{z})$. This implies that the integrality gap of the LP is at most $2$ since the LP is a relaxation for FVS. We construct such a pair of solutions via a primal-dual algorithm. 
Our primal-dual algorithm is an extension of the $2$-approximation primal-dual algorithm due to \cite{CHUDAK1998111}. We state our algorithm in \Cref{alg:primal-dual-fvs}. 

%% Primal Dual Algorithm
\begin{algorithm}[H]
\caption{Primal-Dual for FVS}\label{alg:primal-dual}
\label{alg:primal-dual-fvs}
\textbf{Input:} (1) Graph $G=(V, E)$; (2) Vertex weights $c:V\rightarrow\mathbb{R_{+}}$.\\
\textbf{Output:} Feedback vertex set $F \subseteq V$.
\begin{enumerate}
    %\item Consider Primal and Dual LPs as in \Cref{fig:primal-dual-LPs}.
    \item Initialize $(y,z) := 0$, $i := 1$, $F := \emptyset$, and $G_1 = G$.
    \item \textbf{While} the graph $G_i = (V_i, E_i)$ has a cycle:
    \begin{enumerate}
        \item Recursively remove vertices of degree $1$ from $G_i$.
        \item \textbf{If} $G_i$ has a semi-disjoint cycle $C_i$:\\ Define the set $S_i = C_i$ as the semi-disjoint cycle.\\
        Increase dual variable $z_{S_i}$ until the first dual constraint for some vertex $v_i \in C_i$ becomes tight.
        \item \textbf{Else}:\\ 
        Define the set $S_i = V_i$ as the entire vertex set.\\
        Increase dual variable $y_{S_i}$ until the first dual constraint becomes tight for some vertex $v_i \in V_i$.
        %\item Increase dual variable $y_{S_i}$ until the first dual constraint becomes tight for some vertex $\correctionnomargin{v_i} \in S_i$.
        \item Update $F := F \cup \{v_i\}$\\
        $G_{i+1} := G_i - v_i$\\
        $i := i+1$.
    \end{enumerate}
    \item Perform reverse-delete on $F$.
    %Let $\ell$ denote number of while loop iterations. Perform reverse-delete on $F_{\ell}$ to get $F$
    \item Return $F$.
\end{enumerate}
\end{algorithm}

\Cref{alg:primal-dual-fvs} is a primal-dual algorithm based on the LP relaxation for FVS given in \Cref{fig:primal-dual-LPs}. 
The algorithm initializes with the dual feasible solution $(\overline{y}=0, \overline{z}=0)$. In the $i$th iteration of the while-loop, the algorithm selects a specific set $S_i$ of vertices (which is either a cycle $C_i$ or entire vertex set $V_i$) and increases the corresponding dual variable until a dual constraint for some vertex $v_i \in S_i$ becomes tight. It then includes the vertex $v_i$ into the FVS and removes $v_i$ from the graph. Adding $v_i$ to the candidate set $F$ is interpreted as setting the primal variable $x_{v_i} = 1$. 
Thus, \Cref{alg:primal-dual-fvs} always maintains dual feasibility and primal complementary slackness (i.e., $x_{v_i}=1$ only if the dual constraint for $v_i$ is tight). 

We now explain the reverse-delete procedure mentioned in \Cref{alg:primal-dual-fvs}. 
Let $\ell$ be the number of iteration of the while loop. 
The reverse-delete procedure iteratively considers vertices in the reverse-order in which they were added into $F$, i.e. $v_{\ell}, v_{\ell-1}, \ldots v_1$. For every vertex in this order, the procedure checks whether removal of this vertex from $F$ is still a feasible FVS for $G$. If so, then the vertex is removed from $F$. We will later rely on reverse-delete to show that $F\cap S_i$ is a minimal FVS in $G[S_i]$ for every iteration $i$. 

We now observe that the algorithm terminates in polynomial time to return a feasible FVS. 

\begin{lemma}[Feasibility and Runtime]\label{lem:primal-dual-fvs-feasability}
\Cref{alg:primal-dual-fvs} returns a feasible FVS for the input graph $G$ in polynomial time.
\end{lemma}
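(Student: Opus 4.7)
The plan is to handle feasibility and polynomial runtime separately. For the runtime, each iteration of the while loop adds a vertex $u_i$ to $F$ and deletes it when setting $G_{i+1} = G_i - u_i$, so the loop executes at most $|V|$ times. Within an iteration, recursively pruning degree-$1$ vertices runs in $O(|V|+|E|)$ time via a standard queue, detecting a semi-disjoint cycle in $G_i$ is polynomial (see below), and determining the step size that first tightens a dual constraint is a minimum over $u \in S_i$ of the slack in its dual inequality. The reverse-delete loop iterates over $|F|\le |V|$ vertices, each requiring an $O(|V|+|E|)$ acyclicity check on $G$ minus the current candidate set.

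For feasibility, it suffices to show that the set $F$ at the end of the while loop is a feedback vertex set of $G$, since reverse-delete only removes a vertex whose removal keeps the candidate set an FVS. To prove this, I would consider an arbitrary cycle $C$ in $G$ and argue that some vertex of $C$ is added to $F$. Indeed, every vertex of $C$ has at least two neighbors in $V(C)$ and hence has degree at least $2$ in every $G_i$ that contains all of $V(C)$. Therefore no vertex of $C$ can be eliminated by the degree-$1$ pruning in step 2(a); the only way a vertex of $C$ can leave $G_i$ is by being chosen as $u_i$ in step 2(e). Since the while loop terminates only when $G_i$ is acyclic, at least one vertex of $C$ must enter $F$, proving that $G-F$ is acyclic.

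The one step requiring some care is the polynomial-time detection of a semi-disjoint cycle in step 2(b). I would suppress each maximal path of degree-$2$ vertices of $G_i$ into a single edge, obtaining a multigraph $H_i$ whose vertices all have degree different from $2$. A semi-disjoint cycle of $G_i$ then corresponds either to a connected component of $G_i$ that is itself a cycle (the case with no pivot), or to a self-loop at some vertex of $H_i$ whose degree in $G_i$ exceeds $2$ (the case with one pivot). Both conditions are implementable in $O(|V|+|E|)$ time. Apart from this implementation detail, I anticipate no real obstacles.
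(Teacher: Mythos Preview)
Your proposal is correct and follows essentially the same approach as the paper's proof: bound the number of while-loop iterations by $|V|$ since each removes a vertex, observe that reverse-delete runs for at most $|F|$ steps, and argue feasibility from the while-loop's termination condition together with the fact that reverse-delete only drops a vertex when feasibility is preserved. Your write-up is more detailed than the paper's (which is quite terse and does not spell out the semi-disjoint cycle detection or the cycle-versus-degree-$1$-pruning argument), but the underlying ideas are the same.
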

\begin{proof}
Each step of the while loop and reverse-delete procedure can be implemented to run in polynomial time. 
In $i$th iteration of the while-loop, at least one vertex of the graph $G_i$ is removed. The empty graph has no cycles and so the while-loop will terminate in at most linear number of iterations. Furthermore, the reverse delete step considers every vertex in $F$ exactly once, and so, it also terminates in at most linear number of steps. The output set $F$ is a feasible FVS by the terminating condition of the while-loop and the fact that reverse-delete deletes a vertex only if its deletion maintains feasibility. 
\end{proof}

We now bound the approximation factor of the solution $F$ returned by \Cref{alg:primal-dual-fvs}. Let $(\overline{y}, \overline{z})$ be the dual solution constructed by \Cref{alg:primal-dual-fvs} and let $\chi^F\in \{0,1\}^V$ be the indicator vector of $F$. We note that \Cref{lem:primal-dual-fvs-feasability} implies that $\chi^F$ is a feasible solution to the primal LP. Let $\mathtt{primal}(\chi^F), \mathtt{dual}(\overline{y}, \overline{z})$ denote the objective values of the primal and dual LPs for solutions $\chi^F$ and $(\overline{y}, \overline{z})$ respectively.

Let $I_1\subseteq [\ell]$ denote the set of iterations in which the dual variable for the entire residual graph was increased (i.e. statement (c) of \Cref{alg:primal-dual-fvs} is executed) and let  $I_2\subseteq [\ell]$ denote the set of iterations in which the dual variable for a semi-disjoint cycle in the residual graph was increased (i.e. statement (b) of \Cref{alg:primal-dual-fvs} is executed).
For notational convenience, we let $F_j = F \cap \{v_j\}$ and $F_{\geq j} = \cup_{k \in[j,\ell]}F_j$.
The next lemma shows that the set $S_i\cap F_{\geq i}$ is  a minimal feedback vertex set for the subgraph $G[S_i]$ for each $i \in [\ell]$. 

\begin{lemma}\label{lem:F-minimal-for-each-set-of-algorithm}
The set $S_i \cap F_{\geq i}$ is a minimal feedback vertex set for the subgraph $G[S_i]$ for each $i\in [\ell]$.
\end{lemma}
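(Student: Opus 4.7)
The plan is to prove feasibility and minimality separately for $S_i \cap F_{\geq i}$ as an FVS for $G[S_i]$. For feasibility, since $S_i \subseteq V_i = V \setminus (\{v_1, \ldots, v_{i-1}\} \cup D_i)$, where $D_i$ is the set of vertices removed by the degree-$1$ cascade before iteration $i$, any $v_k \in S_i$ must satisfy $k \geq i$, so $F \cap S_i = F_{\geq i} \cap S_i$. By \Cref{lem:primal-dual-fvs-feasability}, $F$ is an FVS for $G$; since $G[S_i]$ is a subgraph of $G$, every cycle in $G[S_i]$ is hit by some vertex of $F$, and that vertex must lie in $S_i$.

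For minimality, the key insight is to exploit the precise reverse-delete condition. When reverse-delete keeps $v_j$, the set $\{v_1, \ldots, v_{j-1}\} \cup F_{>j}$ fails to be an FVS for $G$ (here the running $F$ at the moment $v_j$ is processed equals $\{v_1, \ldots, v_j\} \cup F_{>j}$). Hence there exists a witness cycle $C_j$ in $G$ avoiding this entire set, in particular avoiding all of $\{v_1, \ldots, v_{j-1}\}$ --- including those removed by reverse-delete, not merely those in the final $F$. Combining this stronger avoidance with a chronological-removal argument (the earliest-removed vertex on $C_j$ cannot be a degree-$1$ cascade removal, since both its $C_j$-neighbors would still be present at that time, contradicting its degree being $1$), we obtain the containment $C_j \subseteq V_j \subseteq V_i$.

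In the case $S_i = V_i$, this immediately produces for each $v_j \in F_{\geq i}$ a witness cycle $C_j$ in $G_i = G[V_i]$ with $C_j \cap F_{\geq i} = \{v_j\}$, establishing minimality. In the case $S_i = C_i$, one first observes that $G[C_i]$ is precisely the cycle $C_i$ (a chord would create additional pivots, contradicting semi-disjointness), so a minimal FVS for $G[C_i]$ is a singleton; a cascade analysis on $G_i - v_i$ then yields $F \cap C_i \subseteq \{v_i, u_0\}$, where $u_0$ is the pivot of $C_i$ (surviving only when $v_i$ is a non-pivot). To rule out both $v_i$ and $u_0$ being in $F$, I would apply the $V_i$-containment above to $v_i$'s hypothetical reverse-delete witness cycle $C^{**}$: since $v_i$'s only $G_i$-neighbors are its two non-pivot cycle-neighbors of degree $2$, the unique simple cycle through $v_i$ in $G_i$ is $C_i$ itself, which contains $u_0$ and thus contradicts $C^{**}$'s required avoidance of $u_0 \in F_{>i}$. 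The main obstacle --- correctly identifying and leveraging the fact that reverse-delete's witness avoids all previously picked vertices (not only those surviving in $F$) --- underpins both cases; once this is in hand, the remaining arguments are largely structural.
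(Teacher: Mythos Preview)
Your proof is correct and takes a genuinely different route from the paper's argument. The paper proves the lemma by downward induction on $i$ (equivalently, induction on $\ell-i$): it establishes the base case $i=\ell$ directly, and for the inductive step it splits into the semi-disjoint and $S_i=V_i$ cases, in the latter relying on the inductive hypothesis that $F_{\geq i+1}$ is already a minimal FVS for $G_{i+1}$ to argue about whether reverse-delete removes $v_i$.

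You instead give a direct, non-inductive argument. Your key observation is that at the moment reverse-delete processes $v_j$, the running set is exactly $\{v_1,\ldots,v_j\}\cup F_{>j}$, so if $v_j$ is retained there is a witness cycle $C_j$ avoiding \emph{all} of $\{v_1,\ldots,v_{j-1}\}$ (not merely those surviving in the final $F$). Your ``earliest-removed vertex'' argument then cleanly forces $C_j\subseteq V_j\subseteq V_i$: the first vertex of $C_j$ to leave the evolving graph cannot be a degree-$1$ cascade removal (its two $C_j$-neighbors are still present), hence must be some $v_k$ with $k\geq j$. This immediately yields a witness cycle for $v_j$ inside $G[V_i]$ with $C_j\cap F_{\geq i}=\{v_j\}$, handling the $S_i=V_i$ case in one stroke without induction. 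Your treatment of the semi-disjoint case is essentially the same as the paper's (the cascade analysis confining $F\cap C_i$ to $\{v_i,u_0\}$, then using reverse-delete to rule out both), except that you invoke your $C^{**}\subseteq V_i$ containment rather than the inductive hypothesis to produce the final contradiction.

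What your approach buys is a more self-contained argument that avoids the bookkeeping of the induction; the paper's approach, by contrast, makes the recursive structure of reverse-delete more explicit. Both are sound, and yours is arguably more elementary.
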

\begin{proof}
We prove by induction on $\ell-i$. 
For the base case, we consider $i = \ell$. We observe that $F_{\geq \ell} = \{v_{\ell}\}$ as otherwise, the following two observations result in a contradiction: (i) The reverse-delete step only removes $v_\ell$ from $F$ if the set $F - \{v_{\ell}\}$ is a feasible feedback vertex set for $G$; and (ii) if $F - \{v_{\ell}\}$ were a FVS for $G$, then the algorithm would have terminated after the $(\ell - 1)^{th}$ iteration. We note that $F_\ell$ is a FVS for $G_\ell$ as there was no $(\ell+1)^{th}$ iteration. If the set $S_\ell = V_\ell$ is the entire vertex set of the graph $G_\ell$, then $v_\ell \in V_\ell$.
%and is a minimal FVS for $G_\ell = G[V_\ell]$. 
Alternatively, if the set $S_\ell$ is a semi-disjoint cycle $C_\ell$, then $v_\ell \in C_\ell$, as otherwise the cycle $C_\ell$ would remain in $G_{\ell+1}$ contradicting that there were only $\ell$ iterations. In both scenarios, the set $S_{\ell}\cap F_{\geq\ell}$ is a minimal FVS for $G[S_\ell]$.

For the inductive case, assume that $0 \leq i < \ell$. By the inductive hypothesis, we have that the set $S_j\cap F_{\ge j}$ is a minimal FVS for the graph $G[S_j]$ for all $\ell \geq j > i$. We consider two cases based on whether the set $S_i$ is a semi-disjoint cycle of $G_i$ or the entire vertex set of $G_i$.

First, we consider the case where $S_i$ is a semi-disjoint cycle $C_i$. We need to show that $|F\cap C_i|=1$. 
We observe that $1\leq |F \cap C_i| \leq 2$: The lower bound follows from the fact that the set $F_{\geq i}$ is a feasible FVS for $G_i$ and thus intersects all cycles in at least one vertex. The upper bound holds by the following three observations: (i) The vertex $v_i$ is the first vertex of the set $C_i$ to be selected into $F$ by the algorithm; (ii) If the vertex $v_i$ is the pivot vertex, then all vertices in $C_i - \{v_i\}$ get recursively removed and thus cannot be selected into $F$ in any subsequent iteration after $i$; and (iii) If the vertex $v_i$ is not the pivot vertex, then all vertices in $C_i - \{v_i\}$ except the pivot vertex get recursively removed due to step (a) and thus, only the pivot vertex can be included into $F$ in some subsequent iteration after $i$. We note that if scenario (ii) happens, then $F_{\geq i}\cap C_i = \{v_i\}$ and hence, $|F\cap C_i|=1$. Alternatively, suppose that scenario (iii) happens. If the pivot is never selected into $F$, then once again we have $F_{\geq i}\cap C_i = \{v_i\}$ and hence, $|F\cap C_i|=1$. Otherwise the pivot is selected into $F$ and the reverse-delete step processes the pivot vertex before processing $v_i$. If the reverse-delete step removes the pivot vertex from $F$, then once again we have $F_{\geq i}\cap C_i = \{v_i\}$ and hence, $|F\cap C_i|=1$. Otherwise, the reverse-delete step does not remove the pivot from $F$. Consider the iteration of reverse-delete that processes $v_i$---we need to show that reverse-delete will indeed remove $v_i$ in this iteration. We observe that the set $\{v_k: k\in [1, i-1]\}$ has not been processed by reverse-delete and this set intersects every cycle that is not entirely contained in $G_i$ (this is true since $G_i$ is the residual graph after removal of $\{v_1, \ldots v_{i-1}\}$ along with recursively removing degree-1 vertices in each step). Since the pivot for $C_i$ is already in $F$ and the cycle $C_i$ is semi-disjoint in $G_i$, the reverse-delete step must remove the vertex $v_i$ from $F$ as $v_i$ intersects no cycles other than those already intersected by $F - \{v_i\}$. In particular, $F_{\geq i}\cap C_i $ is a singleton set consisting of the pivot vertex for $C_i$. Thus, in all cases we have that $|F_{\geq i}\cap C_i| = 1$. It follows that $F_i$ is a minimal FVS for the cycle $G[C_i]$.

Next, we consider the case when $S_i = V_i$ is the entire vertex set of the subgraph $G_i$. 
Here, we consider the set of all cycles that $v_i$ intersects in $G$ and partition them into two types: (1) Cycles completely contained in $G_i$; and (2) cycles that include at least one vertex from $V-V_i$. 
Since $F_{\geq i}$ is a feasible FVS for $G_i$, we have that $F_{\geq i}$ intersects all cycles of the first type.
Furthermore, since the vertices in $V- V_i$ do not exist in $G_i$, the set $\{v_1,\ldots , v_{i-1}\}$ intersect all cycles of the second type. We recall that the set $F_{\geq i+1}$ is a minimal FVS for $G_{i+1}$. 

We consider the first subcase where the  set $F_{\geq i+1}$ is an FVS for the graph $G_i$. Then, the set $F_{\geq i+1}$ intersects all cycles contained in $G_{i}$. In particular, it intersects all cycles of type (1). Thus, the set $\{v_1,\ldots, v_{i-1}\} \cup F_{\geq i+1}$ is a feasible FVS for the original graph $G$. Due to this, the reverse-delete step removes the vertex $v_i$ from $F$ and we have that $F_{\geq i} = F_{\geq i+1}$. Furthermore, by the inductive hypothesis we have that $F_{\geq i+1}$ is a minimal FVS for $G_{i+1}$. Thus, the set $F_{\geq i}$ is a minimal FVS for $G_i = G[S_i]$.

We next consider the remaining subcase where the  set $F_{\geq i+1}$ is not a FVS for the graph $G_i$. In particular, there must be a cycle of the first type that $v_i$ intersects, but none of the vertices of $F_{\geq i+1}$ intersect. In particular, the set $\{v_1,\ldots, v_{i-1}\} \cup F_{\geq i+1}$ is not a feasible FVS for the original graph $G$ as the set does not intersect all type (1) cycles. Thus, the reverse-delete step cannot remove $v_i$ from $F$. By the induction hypothesis, we have that $F_{\geq i+1}$ is a minimal FVS for $G_{\geq i+1}$. Consequently, none of these vertices can be removed from $F_{\geq i}$ as the resulting set would not be a feasible FVS. It follows that $F_{\geq i}$ is a minimal FVS for $G_i = G[S_i]$.
\end{proof}

We show in \Cref{lemma:WD-CC-LP-integrality-gap} that the solution $F$ constructed by the algorithm has cost at most twice the objective value of the dual feasible solution $(\overline{y}, \overline{z})$ constructed by the algorithm. We recall that $\chi^F\in \{0,1\}^V$ is the indicator vector of the set $F$ and $\mathtt{primal}(\chi^F), \mathtt{dual}(\overline{y}, \overline{z})$ denote the objective values of the primal and dual LPs for solutions $\chi^F$ and $(\overline{y}, \overline{z})$ respectively. 

\begin{lemma}[Integrality gap bound]\label{lemma:WD-CC-LP-integrality-gap}
We have that 
\[
\mathtt{primal}(\chi^F) 
%\sum_{u\in F}c_u
\le  2\cdot\mathtt{dual} (\overline{y}, \overline{z}).
\]
\end{lemma}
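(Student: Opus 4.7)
The plan is to use standard primal-dual accounting combined with the two structural lemmas already established, namely \Cref{lem:F-minimal-for-each-set-of-algorithm} and \Cref{lem:minimal-fvs-is-2apx}. First, I would observe that by the tightness condition enforced in step (d) of the algorithm, for every $v_i \in F$ the dual constraint corresponding to $v_i$ is tight at the moment $v_i$ is added to $F$, and since the dual variables are nondecreasing during the algorithm and no further dual variable touching $v_i$ is ever raised after iteration $i$ (because $v_i$ is removed from the graph), we have
\[
c_{v_i} = \sum_{S \subseteq V:\, v_i \in S}(d_S(v_i)-1)\overline{y}_S + \sum_{C \in \calC:\, v_i \in C}\overline{z}_C.
\]
Summing over $v \in F$ and swapping the order of summation yields
\[
\mathtt{primal}(\chi^F) = \sum_{S \subseteq V} \overline{y}_S \sum_{v \in F \cap S}(d_S(v)-1) \;+\; \sum_{C \in \calC} \overline{z}_C\,|F \cap C|.
\]

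Next, I would exploit the fact that the only dual variables raised by the algorithm are $\overline{y}_{V_i}$ for $i \in I_1$ and $\overline{z}_{C_i}$ for $i \in I_2$, so the double sums above collapse to sums over $I_1$ and $I_2$ respectively. For each $i \in I_2$, the set $S_i = C_i$ is a cycle, and by \Cref{lem:F-minimal-for-each-set-of-algorithm} the set $F \cap C_i$ is a minimal feedback vertex set for the cycle $G[C_i]$; any minimal feedback vertex set of a cycle consists of exactly one vertex, so $|F \cap C_i| = 1 \le 2$. For each $i \in I_1$, the set $S_i = V_i$ is the vertex set of the residual graph $G_i$ after step (a) has recursively deleted degree-one vertices, so $G_i = G[V_i]$ has minimum degree at least $2$, and since the algorithm entered the else-branch in step (c), the graph $G_i$ contains no semi-disjoint cycle. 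Again by \Cref{lem:F-minimal-for-each-set-of-algorithm}, $F \cap V_i$ is a minimal feedback vertex set of $G[V_i]$, so the hypotheses of \Cref{lem:minimal-fvs-is-2apx} are satisfied for this induced subgraph, yielding
\[
\sum_{v \in F \cap V_i}(d_{V_i}(v) - 1) \;\le\; 2\,b(V_i).
\]

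Finally, combining the two bounds gives
\[
\mathtt{primal}(\chi^F) \;\le\; \sum_{i \in I_1} \overline{y}_{V_i}\cdot 2b(V_i) \;+\; \sum_{i \in I_2} \overline{z}_{C_i}\cdot 2 \;=\; 2\left(\sum_{S \subseteq V} b(S)\overline{y}_S + \sum_{C \in \calC} \overline{z}_C\right) \;=\; 2\cdot \mathtt{dual}(\overline{y}, \overline{z}),
\]
as desired. The only non-mechanical step is verifying that the graph $G[V_i]$ in the original graph $G$ really equals the algorithm's residual graph $G_i$ (so that degrees and the absence of semi-disjoint cycles transfer correctly between them); this holds because the algorithm only performs vertex deletions, so $G_i$ is by construction the induced subgraph $G[V_i]$. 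All heavy lifting is isolated in the two lemmas already proved, so this final step is essentially a bookkeeping argument, and I do not anticipate any serious obstacle.
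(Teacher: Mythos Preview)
Your proposal is correct and follows essentially the same approach as the paper's proof: use primal complementary slackness to express $c_v$ for $v\in F$ via the tight dual constraints, swap the order of summation, restrict to the nonzero dual variables $\overline{y}_{V_i}$ ($i\in I_1$) and $\overline{z}_{C_i}$ ($i\in I_2$), and then apply \Cref{lem:F-minimal-for-each-set-of-algorithm} together with \Cref{lem:minimal-fvs-is-2apx}. The paper uses the slightly sharper observation $|F\cap C_i|=1$ (rather than $\le 2$) for $i\in I_2$, but this makes no difference to the final factor-$2$ bound.
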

\begin{proof} We have that 
\begin{align*}
    \mathtt{primal}(\chi^F)& = \sum_{v\in F} c_v&\\
    & = \sum_{v\in F}\left(\sum_{S\subseteq V:v\in S}(d_S(v) - 1)\overline{y}_S + \sum_{C\in \calC:v \in C}\overline{z}_C\right)&\\
    &= \sum_{S\subseteq V}\overline{y}_S\sum_{v \in F\cap S}(d_S(v) - 1) + \sum_{C\in\calC}\overline{z}_C\sum_{v \in F\cap C}1&\\
    &= \sum_{i\in I_1}\overline{y}_{S_i}\sum_{v \in F\cap S_i} (d_{S_i}(v) - 1) + \sum_{i \in I_2}\overline{z}_{C_i}|F \cap C_i|&\\
    &\leq 2\sum_{i \in I_1}\overline{y}_{S_i}b(S_i) + \sum_{i \in I_2}\overline{z}_{C_i} &\\
    &\leq 2\cdot\mathtt{dual} (\overline{y}, \overline{z}).&
\end{align*}
Here, the second equality follows by the primal complementary slackness maintained by \Cref{alg:primal-dual-fvs}. The fourth equality follows by the fact that all dual variables that were not incremented during some iteration of \Cref{alg:primal-dual-fvs} have value zero. The first inequality follows by \Cref{lem:F-minimal-for-each-set-of-algorithm} and \Cref{lem:minimal-fvs-is-2apx}. Here, we note that for $i \in I_2$, $|F\cap C_i| = 1$ (by \Cref{lem:F-minimal-for-each-set-of-algorithm}) because $G[C_i]$ is an induced cycle and consequently a minimal FVS for $G[C_i]$ contains exactly $1$ vertex.
\end{proof}

\Cref{lemma:WD-CC-LP-integrality-gap} completes the proof of \Cref{lemma:weak-density-cycle-cover-integrality-gap}
%\Cref{thm:integrality-gap-wd+cycle-cover} 
since $\mathtt{dual} (\overline{y}, \overline{z})\le \mathtt{primal}(\chi^F)$ by weak duality and the fact that the ILP $\min\{\sum_{u\in V}c_u x_u: x\in \weakdensitypolyhedron(G)\cap \cyclecoverpolyhedron(G)\cap \Z^V\}$ is an ILP formulation for FVS (whose LP-relaxation is given in \Cref{fig:primal-dual-LPs}). It also proves that the approximation factor of the solution $F$ returned by \Cref{alg:primal-dual-fvs} is at most $2$.

\subsection{Orientation and Cycle Cover constraints}\label{sec:orientation-and-cycle-cover-formulation}
% \subsection{Orientation and Cycle Cover based Formulation}\label{sec:orientation-and-cycle-cover-formulation}
In this section, we give a polynomial-sized ILP for FVS based on orientation and cycle cover constraints. Moreover, we show that the integrality gap of the LP-relaxation of this formulation is at most $2$. 
We consider the following formulation: 

\begin{align}
\min\left\{\sum_{u\in V}c_u x_u: x\in \projectedorientationpolyhedron(G)\cap \cyclecoverpolyhedron(G)\cap \Z^V \right\}. \tag{FVS-IP: orient-and-cycle-cover} \label{FVS-IP: orient-and-cycle-cover}
\end{align}

%Our ILP is based on the following lemma.
\begin{lemma}\label{lemma:integrality-gap-orient-intersect-cycle-cover}
For an input graph $G=(V, E)$ with non-negative vertex costs $c: V\rightarrow \R_{\ge 0}$, \eqref{FVS-IP: orient-and-cycle-cover} is an integer linear programming formulation for FVS. Moreover, its LP-relaxation has integrality gap at most $2$. 
\end{lemma}
\begin{proof}
The indicator vector $x$ of a feedback vertex set of $G$ is contained in $\projectedorientationpolyhedron(G)\cap\cyclecoverpolyhedron(G)$. Moreover, every integral solution $x\in \projectedorientationpolyhedron(G)\cap\cyclecoverpolyhedron(G)$ is the indicator vector of a feedback vertex set: since $x\in \projectedorientationpolyhedron(G)\cap \Z^V$, we have that $x\in \{0,1\}^V$ and since $x\in \cyclecoverpolyhedron(G)$, we have that $x$ is the indicator vector of a feedback vertex. 

We now show that the integrality gap of the LP-relaxation is at most $2$. 
By \Cref{lemma:orientation-formulation}(1), we have that $\projectedorientationpolyhedron(G)\subseteq \weakdensitypolyhedron(G)$ and hence, the integrality gap of the LP-relaxation is at most the integrality gap of (\ref{FVS-LP:weak-density-cycle-cover}). By Theorem \ref{thm:integrality-gap-wd+cycle-cover}, the integrality gap of (\ref{FVS-LP:weak-density-cycle-cover}) is at most $2$. 
\end{proof}

\Cref{thm:poly-sized-LP-with-integrality-gap-atmost-2-for-FVS}  follows from \Cref{lemma:integrality-gap-orient-intersect-cycle-cover}, the observation that  $\projectedorientationpolyhedron(G)$ can be expressed using polynomial number of variables and constraints (see the description of $\orientationpolyhedron(G)$), and the fact that the cycle cover polyhedron can equivalently be described using polynomial number of variables and constraints. We note that this fact is folklore, but we include a proof of it in \Cref{lemma:cycle-cover-poly-sized-lp} in the appendix.

\subsection{Alternative formulations}\label{sec:more-FVS-formulations}
%
%In this section, we prove \Cref{thm:poly-sized-LP-with-integrality-gap-atmost-2-for-FVS} by giving three polynomial-sized LP-relaxations for FVS with integrality gap at most $2$. 
In Section \ref{sec:orientation-and-cycle-cover-formulation}, we described an ILP formulation based on orientation and cycle cover polyhedra and showed that its LP-relaxation is at least as strong as (\ref{FVS-LP:weak-density-cycle-cover}). We briefly discuss two more polynomial-sized LP-relaxations for FVS with integrality gap at most $2$. We give full details in the appendix for the sake of completeness. 

We recall that Chekuri and Madan \cite{chekuri-madan16} gave a polynomial-sized ILP formulation for Subset-FVS. We note that the LP-relaxation of their ILP-formulation specialized for FVS has integrality gap at most that of the LP-relaxation of  \eqref{FVS-IP: orient-and-cycle-cover}. %\commentop{Chandra: The previous sentence seems awkward. What is the first ILP formulation? Instead of "note" we should say "prove" or "observe".}
Consequently, the LP-relaxation of their ILP formulation has integrality gap at most $2$. We present the details in Appendix \ref{sec:CM-formulation}. This result gives additional impetus to improving the integrality gap analysis of their LP-relaxation for Subset-FVS. 
%analyzing Chekuri-Madan's ILP with the hope of tightening the approximability of Subset-FVS. 

Next, we give an ILP formulation based only on orientation constraints (without relying on cycle cover constraints) - see Appendix \ref{sec:orientation-formulation}. We show that its LP-relaxation is at least as strong as that of the strong density constraints based formulation and consequently is also at most $2$. 
%Since we know that the LP-relaxation of the strong density constraints based formulation has integrality gap at most $2$ (as shown by \cite{CHUDAK1998111}), the integrality gap of the LP-relaxation of our orientation constraints based formulation is also at most $2$.   

%\chandra{I noticed some inconsistency in captialization in section/subsection headings.}

\paragraph{Acknowledgements.} %Sam and Stefan would like to thank Gwena\"{e}l Joret and Yelena Yuditsky for preliminary discussions. Karthik would like to acknowledge the support of the trimester program on Discrete Optimization at the Hausdorff Institute for Mathematics (2021) for facilitating initial discussions on the problem.
Samuel Fiorini and Stefan Weltge would like to thank Gwena\"{e}l Joret and Yelena Yuditsky for preliminary discussions. Karthekeyan Chandrasekaran and Samuel Fiorini would like to acknowledge the support of the trimester program on Discrete Optimization at the Hausdorff Institute for Mathematics (2021) for facilitating initial discussions on the problem. Chandra Chekuri, Samuel Fiorini and Stefan Weltge would like to thank the Oberwolfach Research Institute for Mathematics and the organizers of the Combinatorial Optimization workshop (2021) during which this research started. Shubhang Kulkarni thanks Da Wei Zheng for engaging in preliminary discussions on the separation oracle for $2$-pseudotree cover constraints. We thank the reviewers for their comments which helped improve the presentation of the paper.

%Math Programming required this Declarations Section. 
\section*{Declarations}
\paragraph{Funding and/or Conflicts of Interests/Competing Interests.}
Karthekeyan Chandrasekaran and Shubhang Kulkarni were supported in part by NSF grants CCF-1814613 and CCF-1907937. Chandra Chekuri was supported in part by NSF grants CCF-1910149 and CCF-1907937. Stefan Weltge was supported in part by Deutsche Forschungsgemeinschaft (DFG, German Research Foundation), project number 451026932. No Conflicts of Interests/Competing Interests. 

\bibliography{references.bib} 
\bibliographystyle{alpha}
\appendix
\section{Pseudoforest Deletion Set}

\subsection{Polynomial-time separation oracle for $2$-Pseudotree Cover constraints }\label{sec:2pt-cover-constraints-separation-oracle}
%\subsection{Separation oracle for the family of $2$-pseudotree cover constraints}
%\chandra{Change title to Polynomial-time? Use polynomial-time with hyphen since we use it for space later?}
In this section, we show that the family of $2$-pseudotree cover constraints admits a polynomial time separation oracle.

\begin{restatable*}{lemma}{NWtwoPTPolytime}
\label{thm:MC2PT-polytime:main}
The family of $2$-pseudotree cover constraints admits a polynomial-time separation oracle. 
%The MC2PT problem can be solved in polynomial time.
\end{restatable*}

In order to solve the separation problem, it suffices to design a polynomial time algorithm for the \emph{Minimum Cost $2$-Pseudotree} problem:  The input to MC2PT is a vertex-weighted graph $\left(G = \left(V,E\right), w:V\rightarrow\R_{\geq 0}\right)$, and the goal is to compute a minimum weight subset $U\subseteq V$ of vertices such that $G[U]$ is a $2$-pseudotree, i.e., $$\min\left\{\sum \nolimits_{u\in U}w(u): U\subseteq V \text{ and } G[U] \text{ is a }2\text{-pseudotree}\right\}.$$ 
Our strategy to solve MC2PT is to reduce it to solving a polynomial number of special instances of the \emph{Minimum Node-Weighted Steiner Tree}\footnote{We refer to \emph{nodes} as \emph{vertices} for consistency with the rest of our technical sections.} (NWST), and use the result of \cite{NWST-Buchanan-et.al} which says that these special instances of NWST can be solved in polynomial time. Formally, the input to NWST is a vertex-weighted graph $\left(G = \left(V,E\right), w:V\rightarrow\R_{\geq 0}\right)$ and a terminal set $S \subseteq V$. We will say that a graph $H$ is a \emph{Steiner tree} in $G$ for terminal set $S$ if $H$ is connected, acyclic, and is a subgraph of $G$ with $S\subseteq V_H$. Moreover, the weight of a subgraph $H$ of $G$ is the sum of weights of vertices in $H$. The NWST problem is to find a  minimum weight Steiner tree in $G$ for terminal set $S$, i.e., $\min\{\sum_{u\in V_H} w(u): H=(V_H, E_H) \text{ is a Steiner tree in } G \text{ for terminal set } S\}$.
NWST can be solved in polynomial time if the number of terminals is a constant.

\begin{proposition}[Theorem 1 of \cite{NWST-Buchanan-et.al}]\label{prop:NWST-polytime} There exists a $O(3^kn + 2^kn^2+n^3)$ time algorithm for NWST, where $k$ denotes the number of terminals and $n$ denotes the number of vertices in the input instance.
\end{proposition}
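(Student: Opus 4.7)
The plan is to adapt the classical Dreyfus-Wagner dynamic program for (edge-weighted) Steiner tree to the node-weighted setting, carefully handling the fact that weights now live on vertices. Let $T \subseteq V$ be the terminal set with $|T| = k$. For each nonempty $D \subseteq T$ and each $v \in V$, define $f(D, v)$ to be the minimum total node-weight of any subtree of $G$ whose vertex set contains $D \cup \{v\}$; the answer to the NWST instance is then $\min_{v \in V} f(T, v)$.

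The recurrence arises from a standard structural case analysis at $v$ in an optimal witness. In the \emph{combine case}, $v$ is a branching vertex, and removing $v$ partitions the tree into two subtrees covering $D' \cup \{v\}$ and $(D \setminus D') \cup \{v\}$ for some $\emptyset \neq D' \subsetneq D$, giving
\[
f(D, v) \;\leftarrow\; f(D', v) + f(D \setminus D', v) - w(v),
\]
with the $-w(v)$ correcting for double-counting. In the \emph{extend case}, $v$ is a degree-$1$ vertex connected to the rest of the tree by a bare path to some vertex $u$, giving
\[
f(D, v) \;\leftarrow\; f(D, u) + \mathrm{sp}(u, v) - w(u),
\]
where $\mathrm{sp}(u,v)$ denotes the minimum total node-weight of a $u$-$v$ path, including both endpoints. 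The values $\mathrm{sp}(\cdot,\cdot)$ can be computed once by an all-pairs shortest-path preprocessing in $O(n^3)$ time. Base case: $f(\{t\}, t) = w(t)$ for each terminal $t$.

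I would then compute $f$ in non-decreasing order of $|D|$. For each such $D$, first apply the combine step across all $v \in V$, then perform the extend step for all $v$ simultaneously using a single Dijkstra-style propagation initialized with the current label vector $(f(D, u))_{u \in V}$ and relaxing across edges weighted by their endpoint node weights. The combine cost for a fixed $D$ is $O(n \cdot 2^{|D|})$, and $\sum_D n \cdot 2^{|D|} = n \sum_{j=0}^{k} \binom{k}{j} 2^j = n \cdot 3^k$ by the binomial identity, yielding $O(3^k n)$ in total; the extend step is $O(n^2)$ per $D$, yielding $O(2^k n^2)$ in total. Combined with the $O(n^3)$ shortest-path preprocessing, this matches the claimed $O(3^k n + 2^k n^2 + n^3)$ bound.

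The main subtlety, and the step most deserving of care, is justifying correctness of the extend recurrence in the node-weighted setting, since the path implicitly used by the extend step could share internal vertices with the subtree realizing $f(D, u)$, which at face value causes overcounting. I would resolve this in the standard Dreyfus-Wagner fashion via a two-direction induction on $|D|$: the ``$\leftarrow$'' direction shows that $f(D,v)$ is always a valid upper bound on the true optimum (since overcounting vertex weights only inflates the DP value and a tree union of a valid Steiner subtree with a path that reaches $v$ still contains a valid Steiner subtree for $D \cup \{v\}$), while the reverse direction shows that \emph{some} optimal Steiner tree admits a decomposition making either the combine or extend recurrence tight — obtained by repeatedly peeling off either a branching split at $v$, or a maximal bare path leaving $v$ to a vertex of degree $\neq 2$.
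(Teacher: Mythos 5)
The paper does not prove this proposition at all: it is imported verbatim as Theorem~1 of the cited reference, so there is no in-paper argument to compare against. Your reconstruction is the standard node-weighted adaptation of the Dreyfus--Wagner dynamic program, which is exactly how the cited result is obtained, and both the recurrences and the accounting ($O(n^3)$ for node-weighted all-pairs shortest paths, $\sum_{D} n\cdot 2^{|D|} = n\cdot 3^k$ for the combine steps, $O(n^2)$ array-based Dijkstra per subset for the extend steps) check out. The one place your sketch is thinnest is the exactness direction of the extend recurrence when $v \in D$ is a leaf of the optimal tree: peeling the bare path off $v$ then leaves a tree containing $(D\setminus\{v\})\cup\{u\}$ rather than $D\cup\{u\}$, so the inequality $f(D,v) \le f(D,u) + \mathrm{sp}(u,v) - w(u)$ cannot be certified directly from that witness. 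This is not a real gap, because the combine recurrence with $D' = \{v\}$ yields $f(D,v) \le f(\{v\},v) + f(D\setminus\{v\},v) - w(v) = f(D\setminus\{v\},v)$, which reduces the state to one with $v \notin D$ where the path-peeling argument applies cleanly; it is worth making this reduction explicit in the induction. With that caveat, the proposal is a correct proof of the cited theorem.
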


The following proposition shows a correspondence between minimum weight $2$-pseudotrees in a vertex-weighted graph and Steiner trees.

\begin{proposition}\label{prop:MC2PT-to-NWST}
    Let $G=(V, E)$ be a graph with non-negative vertex weights $w: V\rightarrow \R_{\ge 0}$ and let $W\ge 0$. Then, there exists a subset $U\subseteq V$ such that $G[U]$ is a $2$-pseudotree with $\sum_{u\in U} w(u)\le W$ if and only if there exists a pair of edges $e_1=u_1v_1, e_2=u_2v_2$ in $G$ such that there exists a Steiner tree $H=(V_H, E_H)$ in the graph $G':=G-\{e_1, e_2\}$ for  terminal set $S:=\{u_1, v_1, u_2, v_2\}$ with $\sum_{u\in V_H}w(u)\le W$.
\end{proposition}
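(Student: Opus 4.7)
The plan is to prove the two directions of the equivalence separately, using the definition of a $2$-pseudotree (a connected graph $H$ with $|E_H|\ge |V_H|+1$) essentially as a counting argument.

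For the forward direction, suppose $U\subseteq V$ is such that $G[U]$ is a $2$-pseudotree of weight at most $W$. Since $G[U]$ is connected, it has a spanning tree $T$ on vertex set $U$ with exactly $|U|-1$ edges. Because $G[U]$ has at least $|U|+1$ edges, at least two edges of $G[U]$ lie outside $T$; I would pick any two such edges and call them $e_1=u_1v_1$ and $e_2=u_2v_2$. Then $T$ is a subgraph of $G-\{e_1,e_2\}$, it is connected and acyclic, and it contains all of $u_1,v_1,u_2,v_2$ (since these vertices are in $U = V_T$). Hence $T$ is a valid Steiner tree in $G':=G-\{e_1,e_2\}$ for terminal set $S:=\{u_1,v_1,u_2,v_2\}$, and its weight equals $\sum_{u\in U} w(u) \le W$.

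For the reverse direction, suppose we have a pair of distinct edges $e_1=u_1v_1, e_2=u_2v_2\in E$ and a Steiner tree $H=(V_H,E_H)$ in $G'=G-\{e_1,e_2\}$ for $S=\{u_1,v_1,u_2,v_2\}$ with $\sum_{u\in V_H}w(u)\le W$. I would set $U:=V_H$. Since $S\subseteq V_H=U$, both edges $e_1$ and $e_2$ have their endpoints in $U$ and therefore belong to $E[U]$. Moreover, $E_H\subseteq E[U]$ because $H$ is a subgraph of $G$ on vertex set $U$. Since $H$ is acyclic and is a subgraph of $G-\{e_1,e_2\}$, we have $e_1,e_2\notin E_H$, so these two edges are counted separately from the $|V_H|-1$ edges of $H$, giving $|E[U]|\ge (|U|-1)+2=|U|+1$. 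Finally, $G[U]$ is connected because it contains the connected spanning subgraph $H$. Thus $G[U]$ is a $2$-pseudotree, and its vertex weight is $\sum_{u\in U}w(u)\le W$, as required.

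The proof is essentially a direct counting argument and I do not expect any real obstacle; the only subtlety is to verify that the two extra edges chosen in the forward direction are truly outside the chosen spanning tree (which is guaranteed by the $|E[U]|\ge |U|+1$ condition), and that in the reverse direction $e_1$ and $e_2$ are distinct and not in $E_H$ so they contribute $+2$ to the edge count on top of $|V_H|-1$.
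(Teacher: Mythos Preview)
Your proof is correct and follows essentially the same approach as the paper's: in both directions the key is that a $2$-pseudotree on $U$ is a spanning tree plus (at least) two extra edges, and conversely a Steiner tree on $V_H$ plus the two deleted edges $e_1,e_2$ yields a connected graph with $|V_H|+1$ edges. The paper routes this through an intermediate notion of a \emph{minimal} $2$-pseudotree (exactly $|U|+1$ edges), but your direct counting argument on $G[U]$ accomplishes the same thing without that detour.
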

\begin{proof}
We will say that a graph $T=(V_T, E_T)$ is a \emph{minimal} $2$-pseudotree if it is connected and has exactly $|V_T|+1$ edges. Equivalently, $T$ has exactly $2$ edges in addition to a spanning tree. We observe that for a subset $U\subseteq V$, the subgraph $G[U]$ is a $2$-pseudotree if and only if there exists a subgraph $T=(U, E_T)$ of $G[U]$ such that $T$ is a minimal $2$-pseudotree. Hence, it suffices to show that there exists a subset $U\subseteq V$ such that $G[U]$ has a subgraph that is a minimal $2$-pseudotree with $\sum_{u\in U} w(u)\le W$ if and only if there exists a pair of edges $e_1=u_1v_1, e_2=u_2v_2$ in $G$ such that there exists a Steiner tree $H=(V_H, E_H)$ in the graph $G':=G-\{e_1, e_2\}$ for  terminal set $S:=\{u_1, v_1, u_2, v_2\}$ with $\sum_{u\in V_H}w(u)\le W$. We prove this statement now. 

Let $U\subseteq V$ such that $G[U]$ contains a subgraph $T=(U, E_T)$ that is a minimal $2$-pseudotree with $\sum_{u\in U} w(u)\le W$.  
Then, there exists a pair of edges $e_1=u_1v_2$ and $e_2=u_2v_2$ in $T$ such that the subgraph $H:=T-\{e_1, e_2\}$ is acyclic, connected, and is a subgraph of $T$. In particular, we have that $H$ is acyclic, connected, and is a subgraph of $G':=G-\{e_1, e_2\}$ with $S=\{u_1, v_1, u_2, v_2\}\subseteq U=V(H)$. Hence, for the pair of edges $e_1, e_2$ in $G$, we have that $H$ is a Steiner tree in $G'$ for terminal set $S$ with $\sum_{u\in V(H)}w(u)=\sum_{u\in U} w(u) \le W$. 

Next, let $e_1=u_1v_2, e_2 = u_2v_2$ be a pair of edges in $G$ such that there exists a Steiner tree $H=(V_H, E_H)$ in the graph $G':=G-\{e_1, e_2\}$ for terminal set $S=\{u_1, v_1, u_2, v_2\}$ with $\sum_{u\in V_H}w(u)\le W$. Consider the graph $T:=H+\{e_1, e_2\}$. Then, $T$ is a minimal $2$-pseudotree. The vertex set of $T$ is $U:=V_H$ and $T$ is a subgraph of $G[U]$. Hence, we have a subset $U\subseteq V$ such that $G[U]$ has a subgraph that is a minimal $2$-pseudotree with $\sum_{u\in U}w(u)\le W$. 
\end{proof}

We now restate and prove \Cref{thm:MC2PT-polytime:main}.

\NWtwoPTPolytime
\begin{proof}
    It suffices to show that MC2PT can be solved in polynomial time. Let $G=(V,E)$ with vertex weights $w:V\rightarrow \R_{\ge 0}$ be the input instance of MC2PT.
    Consider the following algorithm: for all pairs of edges $e_1 = u_1v_1$ and $e_2 = u_2v_2$ in $E$, use the algorithm guaranteed by \Cref{prop:NWST-polytime} to solve NWST on the graph $G - \{e_1, e_2\}$ with vertex weights $w$ for terminal set $\{u_1, v_1, u_2, v_2\}$, and return the minimum weight solution over all instances. The correctness of the algorithm follows from \Cref{prop:MC2PT-to-NWST}. We now analyze the runtime of the algorithm. There are $O(|V|^2)$ pairs of edges to enumerate. Furthermore, for each pair of edges, the associated NWST instance can be solved in polynomial time using the algorithm from \Cref{prop:NWST-polytime} since each such instance has only four terminals. Thus, the algorithm runs in polynomial time.
\end{proof}

\section{Feedback Vertex Set}

\subsection{Polynomial-sized formulation of the Cycle Cover polyhedron}
\label{sec:cycle-cover-poly-sized-formulation}
%\subsection{Polynomial-sized formulation of $\cyclecoverpolyhedron(G)$}\label{sec:cycle-cover-poly-sized-formulation}
In this section, we show that $\cyclecoverpolyhedron(G)$ can be expressed using polynomial number of variables and constraints. 
This is folklore, but we include its proof for the sake of completeness. For a graph $G=(V, E)$, we define $E':=\{(e, s), (e, t):\ \exists \text{ cycle in $G$ containing } e = st \in E\}$ and consider
\begin{equation}\label{eqn:distance-cycle-cover}
    \distancebasedcyclecoverpolyhedron(G) := \left\{ (x,d)\in \R^V_{\ge 0} \times \R^{V\times E'}_{\ge 0}:\ \begin{array}{l}
{d_s^{(e, s)} = 0} \hfill {\qquad \forall (e=st, s)\in E'} \\
 {d_t^{(e, s)} + x_s \ge 1} \hfill  \forall (e=st, s)\in E' \\
 {d_a^{(e,s)} + x_b \ge d_b^{(e, s)} }\hfill \qquad \forall (e, s)\in E', \ ab\in E - \{e\}
  \end{array}\right\}.  
\end{equation}

\begin{lemma}\label{lemma:cycle-cover-poly-sized-lp}
For every graph $G=(V, E)$, the polyhedron $\cyclecoverpolyhedron(G)$ is the projection of the polyhedron $\distancebasedcyclecoverpolyhedron(G)$ to $x$ variables. Consequently, $\cyclecoverpolyhedron(G)$ admits a polynomial-sized description. 
\end{lemma}
\begin{proof}
Let $\projecteddistancebasedcyclecoverpolyhedron(G)$ be the projection of $\distancebasedcyclecoverpolyhedron(G)$ to the $x$ variables. We prove that $\projecteddistancebasedcyclecoverpolyhedron(G)=\cyclecoverpolyhedron(G)$ by showing inclusion in both directions. 

First, we show that $\projecteddistancebasedcyclecoverpolyhedron(G)\subseteq \cyclecoverpolyhedron(G)$. Let $x\in \projecteddistancebasedcyclecoverpolyhedron(G)$. Let $d\in \R^{V\times E'}_{\ge 0}$ be a vector such that $(x,d)\in \distancebasedcyclecoverpolyhedron(G)$. It suffices to show that for every cycle $C$ in $G$, we have that $\sum_{u\in V(C)}x_u\ge 1$. Let $C$ be a cycle in $G$. Fix an edge $e=st\in E(C)$. By definition of the set $E'$, we have that $(e, s)\in E'$. Consequently, $d^{(e,s)}_s=0$ and $d^{(e,s)}_t + x_s\ge 1$ by the first two constraints of $\distancebasedcyclecoverpolyhedron(G)$. Let $a_1=s, a_2, a_3, \ldots, a_{r-1}, a_r=t$ be the ordered vertices along the $s-t$ path induced by $E(C) - e$. Then, for every $i\in [r-1]$, we have that $d^{(e,s)}_{a_i}+x_{a_{i+1}}\ge d^{(e,s)}_{a_{i+1}}$ by the third constraint of $\distancebasedcyclecoverpolyhedron(G)$. Adding all of these constraints gives us that $\sum_{i=2}^r x_{a_i}\ge d^{(e,s)}_{a_r}=d^{(e,s)}_t\ge 1-x_{s}=1-x_{a_1}$. Thus, the constraint $\sum_{i=1}^r x_{a_i}\ge 1$ holds. 

Next, we show that $\projecteddistancebasedcyclecoverpolyhedron(G)\supseteq \cyclecoverpolyhedron(G)$. Let $x\in \cyclecoverpolyhedron(G)$. Let $d\in \R^{V\times E'}_{\ge 0}$ be defined as follows: for each $(e=st, s)\in E'$, let 
\[
d^{(e,s)}_u:=\min\left\{\sum_{v\in P-\{s\}}x_v:\ P \text{ is a path from $s$ to $u$ 
and $P\neq \{s, t\}$
}\right\}.
\]
We show that $(x, d)\in \distancebasedcyclecoverpolyhedron(G)$. We note that the vector $(x,d)$ is non-negative by definition. Let $(e=st, s)\in E'$. Then,  $d^{(e, s)}_s=0$ by definition. 
Moreover, $d^{(e,s)}_t+x_s\ge 1$ holds because of the following reasoning: let $P$ be a path from $s$ to $t$ such that $d^{(e,s)}_t=\sum_{v\in P-\{s\}}x_v$ and $P\neq \{s, t\}$. Then, $P$ concatenated with the edge $st$ forms a cycle $C$ and $d^{(e,s)}_t + x_s = x_s+\sum_{v\in P-\{s\}}x_v = \sum_{u\in V(C)}x_u \ge 1$. 
Finally, the inequality $d^{(e,s)}_a + x_b \ge d^{(e,s)}_b$ for every edge $ab\in E$ holds because of the following reasoning: for an edge $ab\in E$, let $P$ be a path from $s$ to $a$ such that $d^{(e,s)}_a=\sum_{v\in P-\{s\}}x_v$. Then, $P$ concatenated with $b$ is a path from $s$ to $b$ and consequently, $d^{(e,s)}_b\le \sum_{v\in P+\{b\}-\{s\}}x_v$ by definition.  
\end{proof}

\subsection{Chekuri-Madan formulation}
\label{sec:CM-formulation}
% \subsection{Chekuri-Madan's Formulation}
%\chandra{Use the abbreviation CM rather than repeat the names.}
In this section, we show that the polynomial-sized ILP for FVS given by Chekuri and Madan 
%(henceforth referred to as the CM-FVS ILP) 
is such that its LP-relaxation has integrality gap at most $2$. Chekuri and Madan formulated a polynomial-sized ILP for the more general problem of Subset Feedback Vertex Set (SFVS). SFVS is defined as follows: The input is a graph $G=(V, E)$ with non-negative vertex costs $c: V\rightarrow \R_{\ge 0}$ along with a  set of terminals $T\subseteq V$. A subset $U$ of non-terminals is said to be a subset feedback vertex set if $G-U$ has no cycle containing a vertex of $T$. 
The goal is to find a least cost subset feedback vertex set. In Section \ref{sec:CM-LP}, we discuss Chekuri-Madan's ILP for SFVS (henceforth referred to as (\textsc{SFVS-IP: CM})) with some background and notation. In Section \ref{sec:reduction-from-FVS-to-SFVS}, we discuss a reduction from FVS to SFVS satisfying certain technical properties. In Section \ref{sec:cm-lp-integrality-gap}, we  show that the integrality gap of the LP-relaxation of (\textsc{SFVS-IP: CM}) specialized for FVS (henceforth referred to as the (\textsc{FVS-LP: CM})) is at most $2$.

\subsubsection{Chekuri-Madan's ILP for SFVS}
\label{sec:CM-LP}

In this section, we discuss the (\textsc{SFVS-IP: CM}).
%Chekuri-Madan's ILP for SFVS. 
Let $H = (V_H, E_H)$ denote the input graph, $S_H\subseteq V_H$ denote the terminal set, and $c_H:V_H\rightarrow \mathbb{R}_{\ge 0}$ denote the vertex cost function. We will say that a cycle is \emph{interesting} if it contains a terminal. Let $\calC_H$ denote the set of interesting cycles in the graph $H$.

Henceforth, we will assume that a SFVS of finite cost exists in the graph $H$. Furthermore, we will assume without loss of generality that the instance $H$ satisfies the following properties (see \cite{chekuri-madan16} for a justification of these assumptions): 
\begin{enumerate}[label=(\roman*)]
\item the graph $H$ is connected, 
\item each terminal $s_i \in S_H$
has infinite cost, is a degree two vertex with
neighbors $a_i, b_i$ having infinite cost, 
\item no two terminals
are connected by an edge or share a neighbor, 
\item there
exists a special non-terminal degree one vertex $r\in V$ with
infinite cost, and 
\item each interesting cycle contains at
least two terminals.
\end{enumerate} 
We refer to properties (i)-(v) as CM-properties.

\paragraph{Notation.} Throughout this section we will use the following notation: We let $k = |S_H|$ denote the number of terminals. We refer to the set $P_i := \{a_i, b_i\}$ as the \emph{pivot set} for the terminal $s_i$, and the set $P_H = \cup_{i \in [k]}P_i$ as the set of all pivots. We define the set 
\[
\ell_H(u) = \{i\in [k]: \text{$s_i$ is reachable from $u$  via a path not containing any other terminals}\} \cup \{k+1\} 
\]
as the set of \emph{labels} for each vertex $u \in V_H \backslash S$. We denote $\Tilde{V}_H$ as the set of vertices in $V_H$ that are not in $S_H \uplus P_H \uplus \{r\}$. We denote $\Tilde{E}_H$ as the set of edges in $E_H$ that are not incident to any terminals and the set $E'_H$ as the set of edges incident to terminals. We will refer to $E'_H$ as special edges. 
%Finally, for each $u \in V_H \backslash S$, we define the set $\ell_H(u) = \{i\in [k]: \text{$s_i$ is reachable from $u$  via a path not containing any other terminals}\} \cup \{k+1\}$ as the set of \emph{labels} for vertex $u$. 

%\paragraph{Chekuri-Madan's ILP.} 
\paragraph{(\textsc{SFVS-IP: CM}).} 
See \Cref{fig:CM-lp-SFVS} for %Chekuri-Madan's 
the labelling-based (\textsc{SFVS-IP: CM}).
%for SFVS. 
In this ILP, we have vertex variables $x_u$ for each $u \in \Tilde{V}$ indicating whether the vertex $u$ is in the SFVS $F$, and vertex labelling variables $z_{u, i}$ for each $u \in \Tilde{V}\cup P_H, i\in [k]$ indicating whether vertex $u$ receives label $i$. We note that \Cref{fig:CM-lp-SFVS} simplifies the exposition of the ILP from \cite{chekuri-madan16} by explicitly substituting the values for variables whose values are directly set by constraints, i.e. $x_u = 0$ for $u \in S \cup\{r\}$, $z_{s_i, i} = 1$ for $i \in [k]$, and $z_{r, k+1} = 1$. Moreover, it slightly strengthens the first constraint (Chekuri-Madan used a slightly weaker constraint: $x_u + \sum_{i\in [k+1]}z_{u, i}=1$).

%Imposing integrality constraints on the LP-relaxation formulates SFVS. 
The constraints can be interpreted as follows: Let $F$ be a minimal finite cost subset feedback vertex set of $H$. 
Constraint (1) says that if a vertex $u \in \Tilde{V}_H$ is not in $F$, then it must receive exactly one of the labels in $\ell_H(u)$. Constraint (2) says that exactly one of $a_i$ and $b_i$ can receive the label $i$ for each $i \in [k]$. Constraint (3) is a spreading constraint that says that both end vertices of every non-special edge in $H - F$ must receive the same label. Constraint (4) is a cycle covering constraint and says that the solution $F$ must intersect every interesting cycle in at least one vertex. Constraint (5) ensures that the set $F$ contains no pivot vertices.

\iffalse
\begin{figure}[H]
    \centering
    \begin{mdframed}
    \textbf{Subset Feedback Vertex Set:}
    \begin{align*}
        \text{min}\qquad&\displaystyle\sum\limits_{u \in \Tilde{V}_H} w_{u}x_{u}&\\
        \text{s.t.}\qquad&&\\
        (1)\ \ &x_u + \sum_{i \in [k+1]}z_{u,i} = 1& \forall\ u\in \Tilde{V}_H\cup P_H\\
        (2)\ \ &z_{a_i, i} + z_{b_i, i} = 1& \forall\ i \in [k]\\
        (3) \ \ &x_u + z_{u,i} - z_{v, i} \geq 0&     \forall\ uv\in \Tilde{E}_H,  \ i\in [k+1] \\
        %(4)\ \ &z_{u,i} - z_{v, i} \geq 0&     uv\in \Tilde{E}_H : u \in  P_H\cup\{r\}, \ i\in [k+1] \\
        %&z_{s_i, i} = 1& \  \text{ $i \in [k]$}\\
        %&z_{r, k+1} = 1&\\
        %&x_{u} = 0& u\in S_H \cup P_H \\
        (4)\ \ &\sum_{u\in C}x_u \geq 1& \forall\ C\in\calC_H\\
        (5)\ \ &x_u = 0&\forall\ u \in P_H\\
        (6)\ \ &z_{u, i} \geq 0&\forall\  \text{  $u \in \Tilde{V}_H \cup P_H, i\in [k+1]$}\\
        (7)\ \ &x_u \geq 0&\forall\  u \in \Tilde{V}_H
    \end{align*}
    \end{mdframed}
    \caption{Chekuri-Madan LP relaxation for SFVS}
  \label{fig:CM-lp-SFVS-as-is}
\end{figure}
\fi

We now briefly address why a labeling satisfying the LP constraints must exist for every minimal finite weight subset feedback vertex set $F$ (see \cite{chekuri-madan16} for further details). We note that none of the vertices of $S\cup P\cup \{r\}$ are in $F$ as these have infinite cost. Let the graph $H' = (V_{H'}, E_{H'})$ be defined as $H' = H - F$. For simplicity, we first assume that the graph $H'$ is connected and then address the more general case when $H'$ is not connected. Since there is no cycle containing any terminal in $H'$, each terminal is a cut vertex in $H'$. Consider the \emph{block-cut-vertex tree} decomposition $T'$ of $H'$ --- (refer to \cite{west_introduction_2000} for details on block-cut-vertex tree decompositions). We label a vertex $u$ of $H'$ as $i \in [k]$ if terminal $s_i$ is the first terminal encountered on any path from vertex $u$ to the root $r$ in the graph $H'$. If no terminals are encountered, then we label the vertex as $k+1$. We note that such a labelling is well-defined as $T'$ is the block-cut-vertex tree and the set $S$ are cut vertices. In this labelling, we observe that every non-special edge $uv \in \Tilde{E}_{H'}$ has the property that both end points receive the same label. Finally, in the case where $H'$ is not connected, we
can justify the existence of such a labeling by picking an arbitrary
non-terminal vertex from each component, imagining a dummy edge connecting it to the root $r$, and considering the labeling corresponding to block-cut-vertex tree of this (implicit) graph.

\iffalse
\paragraph{Strengthening the Chekuri-Madan LP.} We make a simple observation that strengthens the original Chekuri-Madan LP. We will subsequently show that the integrality gap of this strengthened LP is at most $2$. We recall the block-cut-vertex tree based labelling of $H' = H - F$ where each vertex is labelled by an index $i \in [k]$ if $s_i$ is the first terminal encountered on every path from vertex $u$ to the root $r$ in the tree $T'$. We define the set $\ell_H(u) = \{i\in [k]: \text{$s_i$ is reachable from $u$  via a path not containing any other terminals}\} \cup \{k+1\}$ as the set of \emph{labels} for each vertex $u \in V_H \backslash S$. Then, we observe that in the block-cut-vertex tree based labelling, a vertex $u$ can receive a label $i \in [k]$ only if $i \in \ell_H(u)$. Thus, we strengthen constraint (1) of the LP of \Cref{fig:CM-lp-SFVS-as-is} to be $$x_u + \sum_{i \in \ell_H(u)}z_{u, i} = 1 \qquad \text{ for each } u \in \Tilde{V}_H.$$ We write this LP in \Cref{fig:CM-lp-SFVS}.
\fi

\begin{figure}
    \centering
    \begin{mdframed}
    \textbf{Subset Feedback Vertex Set:}
    \begin{align*}
        \text{min}\qquad&\displaystyle\sum\limits_{u \in \Tilde{V}_H} c_{u}x_{u}&\\
        \text{s.t.}\qquad&&\\
        (1)\ \ &x_u + \sum_{i \in \ell_H(u)}z_{u,i} = 1& \forall\ u\in \Tilde{V}_H\cup P_H\\
        (2)\ \ &z_{a_i, i} + z_{b_i, i} = 1& \forall\ i \in [k]\\
        (3) \ \ &x_u + z_{u,i} - z_{v, i} \geq 0&     \forall\ uv\in \Tilde{E}_H,  \ i\in [k+1] \\
        %(4)\ \ &z_{u,i} - z_{v, i} \geq 0&     uv\in \Tilde{E}_H : u \in  P_H\cup\{r\}, \ i\in [k+1] \\
        %&z_{s_i, i} = 1& \  \text{ $i \in [k]$}\\
        %&z_{r, k+1} = 1&\\
        %&x_{u} = 0& u\in S_H \cup P_H \\
        (4)\ \ &\sum_{u\in C}x_u \geq 1& \forall\ C\in\calC_H\\
        (5)\ \ &x_u = 0&\forall\ u \in P_H\\
        (6)\ \ &z_{u, i} \geq 0&\forall\  \text{  $u \in \Tilde{V}_H \cup P_H, i\in [k+1]$}\\
        (7)\ \ &x_u \geq 0&\forall\  u \in \Tilde{V}_H\\
        (8)\ \ &z_{u, i} \in \Z&\forall\  \text{  $u \in \Tilde{V}_H \cup P_H, i\in [k+1]$}\\
        (9)\ \ &x_u \in \Z &\forall\  u \in \Tilde{V}_H
    \end{align*}
    \end{mdframed}
    \caption{
    (\textsc{SFVS-IP: CM})
    %Chekuri-Madan's ILP for SFVS
    }
  \label{fig:CM-lp-SFVS}
\end{figure}

\subsubsection{Reduction from FVS to SFVS satisfying the CM-properties}\label{sec:reduction-from-FVS-to-SFVS}
FVS can be seen as a special case of SFVS where every vertex in the graph is a terminal. However, this simple reduction does not result in an SFVS instance satisfying the CM-properties (see properties (i)--(v) mentioned in the first paragraph of Section \ref{sec:CM-LP}). In this section, we show how to construct 
an SFVS instance from an FVS instance such that the SFVS instance satisfies the required properties. Let $G= (V_G, E_G)$ with vertex costs $c_G:V_G\rightarrow \R_{\ge 0}$ be the FVS instance. We will construct 
an SFVS instance $H = (V_H, E_H)$ with terminals $S_H$ and vertex costs  $c_H:V_{H}\rightarrow\mathbb{R}_{\ge 0}$ satisfying the required properties. We may assume that the FVS instance $G$ has an infinite-weighted degree-1 root vertex $r$, satisfying property (iv) since adding such a vertex does not change the set of minimal feasible solutions. We construct $H$ as follows:

\begin{enumerate}
    \item For each edge $e = uv \in E_G$, we subdivide the edge $e$ so that the edge becomes a path $u, a_e, s_e, b_e, v$ of length $4$. %Let $a_e, s_e, b_e$ be the edges added by this step. 
    %\item For an edge $e \in E_G$, we denote the vertices added in step (1) as $a_e, s_e, b_e$. Thus, each edge $e=uv$ becomes the path $u,a_e,s_e,b_e,v$ in the graph $H$.
    \item We let $S_H = \bigcup_{e\in E(G)} \{s_e\}$ denote the terminal set. Thus, the set $P_e := \{a_e, b_e\}$ is the set of pivot vertices for each terminal $s_e \in S_H$, and the set $P_H := \bigcup_{e\in E(G)}P_e$ is the set of all pivots of the graph $H$.
    \item We define the cost function as $c_H(u) = c_G(u)$ if $u \in V_G$, and $c_H(u) = \infty$ otherwise (i.e., if $u \in S_H\cup P_H$).
\end{enumerate}

We observe that the graph $H$ satisfies properties (i)-(v) assumed by the Chekuri-Madan LP. The next proposition says that solving FVS in the graph $G$ with cost function $c_G$ is equivalent to solving SFVS in the graph $H$ with terminal set $S_H$ with cost function $c_H$. The proposition follows by construction.

\begin{proposition}\label{prop:reduction-FVS-to-SFVS}
Let $F\subseteq V_G$. The set $F$ is a feedback vertex set for $G$ if and only if the set $F$ is a subset feedback vertex set for the graph $H$ with terminal set $S_H$. Furthermore, the cost of $F$ in both graphs is the same, i.e. $c_G(F) = c_{H}(F)$.
\end{proposition}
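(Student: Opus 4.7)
The proof plan is a straightforward unpacking of the construction. The cost equality is immediate: since $F \subseteq V_G$ and the construction sets $c_H(u) = c_G(u)$ for every $u \in V_G$, we have $c_H(F) = \sum_{u \in F} c_H(u) = \sum_{u \in F} c_G(u) = c_G(F)$. So the real content is the equivalence between being a feedback vertex set for $G$ and being a subset feedback vertex set for $H$ with terminal set $S_H$.

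For the equivalence, the natural approach is to exhibit a bijection between cycles of $G$ and interesting cycles of $H$ (i.e.\ cycles of $H$ containing at least one terminal), and then to observe that this bijection preserves which vertices of $V_G$ lie on the cycle. Given a cycle $C = v_1, v_2, \ldots, v_\ell, v_1$ in $G$, replacing each edge $v_iv_{i+1}$ by its subdivision path $v_i, a_{e_i}, s_{e_i}, b_{e_i}, v_{i+1}$ yields a cycle $\tilde{C}$ in $H$ that contains the terminals $s_{e_1}, \ldots, s_{e_\ell}$, so $\tilde{C}$ is interesting. Conversely, given any cycle $C'$ in $H$, contracting each of the length-$4$ subdivision paths back to its original edge yields a closed walk in $G$; since the $V_G$-vertices visited by $C'$ are exactly the ``endpoints'' of the subdivision paths traversed and each is visited at most once, this walk is in fact a cycle $C$ in $G$. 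Moreover, every cycle in $H$ must contain some terminal, because any cycle in $H$ visits at least one vertex of $V_G$ (since $V_H \setminus V_G$ induces an independent collection of paths) and between two consecutive visits to $V_G$ it must traverse an entire subdivision path, which contains a terminal; thus every cycle of $H$ is interesting, and $C \mapsto \tilde{C}$ is a bijection.

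Under this correspondence, $F \cap V(C) = F \cap V(\tilde{C})$ because $F \subseteq V_G$ and $V(\tilde{C}) \cap V_G = V(C)$. Therefore $F$ hits every cycle of $G$ if and only if $F$ hits every interesting cycle of $H$, which is precisely the statement that $F$ is an FVS for $G$ if and only if $F$ is an SFVS for $H$ with terminal set $S_H$.

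I do not expect any technical obstacle: the construction is set up precisely so the bijection is immediate. The only item requiring care is the assertion that every cycle in $H$ is interesting, for which the one-line argument above (no two $V_G$-vertices are adjacent in $H$, and every $u$-to-$v$ path in $H$ with $u,v \in V_G$ internally disjoint from $V_G$ passes through some $s_e$) suffices.
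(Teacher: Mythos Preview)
Your proof is correct and complete. The paper itself does not give a detailed proof, stating only that ``the proposition follows by construction''; your argument is exactly the natural unpacking of that construction, establishing the bijection between cycles of $G$ and (necessarily interesting) cycles of $H$ and observing that it preserves the $V_G$-vertices on each cycle.
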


\cite{chekuri-madan16} showed that the LP-relaxation of (\textsc{SFVS-IP: CM}) given in \Cref{fig:CM-lp-SFVS} has integrality gap at most $13$. It immediately follows that the integrality gap of that LP-relaxation on the instance $H$ constructed as above is also at most $13$. 
In the subsequent section, we tighten this integrality gap to $2$ for instances $H$ constructed as above.

\subsubsection{Bounding the integrality gap}\label{sec:cm-lp-integrality-gap}
In this section, we show that the LP-relaxation of the (\textsc{SFVS-IP: CM}) given in \Cref{fig:CM-lp-SFVS} for the instance $H$ constructed as given in the reduction in Section \ref{sec:reduction-from-FVS-to-SFVS} has integrality gap at 
 most $2$. We prove this by showing that the LP constraints imply orientation and cycle cover constraints. 
Consequently, the integrality gap of the LP-relaxation is at most that of $\min\{\sum_{u\in V} c_ux_u: x\in \projectedorientationpolyhedron(G)\cap \cyclecoverpolyhedron(G)\}$. By \Cref{lemma:integrality-gap-orient-intersect-cycle-cover}, it follows that the integrality gap of the LP-relaxation is at most $2$. 

%This, coupled with \Cref{thm:integrality-gap-wd+cycle-cover}, will help us conclude that the integrality gap of the LP in \Cref{fig:CM-lp-FVS} is at most $2$ for SFVS instances that are constructed from FVS instances as given in the reduction in Section \ref{sec:reduction-from-FVS-to-SFVS}.

%will complete the proof of \Cref{thm:poly-sized-LP-with-integrality-gap-atmost-2-for-FVS}. 

We begin by simplifying the (\textsc{SFVS-IP: CM}) given in \Cref{fig:CM-lp-SFVS} for SFVS instances $H$ that are constructed from FVS instances as obtained from the reduction in Section \ref{sec:reduction-from-FVS-to-SFVS}. 
We make two observations that will help us in the simplification.
The first observation is that each edge $e \in E_G$ corresponds to a unique terminal $s_e \in V_H$. Thus, we may replace the set $[k]$ which indexed the set of terminals in the ILP of \Cref{fig:CM-lp-SFVS} with the set $E_G$. For ease of exposition, we denote $e_r$ by the label $k+1$.
We now give the second observation. Let $u \in V_G$ be arbitrary. Since each edge $e \in \delta_G(u)$ has been subdivided in the graph $H$ and contains a terminal $s_e$, the only terminals that $u$ can reach via a path that does not contain any other terminal is exactly $\{s_e: e \in \delta_G(u)\}$. For an arbitrary pivot vertex $u \in P_H$, let $g(u)$ denote the unique non-terminal neighbor of $u$ in $H$. Then, the only terminals that $u$ can reach via a path that does not contain any other terminal is exactly $\{s_e: e \in \delta_G(g(u))\}$. We summarize these observations in the next proposition.

% \begin{figure}[H]
%     \centering
%     \begin{mdframed}
%     \begin{align*}
%         \text{min}&\displaystyle\sum\limits_{u \in V(G)} w_{u}x_{u}&\\
%         \text{s.t.}&&\\
%         &x_u + \sum_{e \in \delta(u)}z_{u,e} = 1& \  \text{  $u\in V(H)$}\\
%         &z_{s_e, e} = 1& \  \text{ $e \in E(G)$}\\
%         &z_{r, |E(G)|+1} = 1&\\
%         &z_{a_e, e} + z_{b_e, e} = 1& \text{ $e \in E(G)$}\\
%         &x_u + z_{v,e} - z_{u, e} \geq 0& \  \text{ $e = uv\in E - E_S$}\\
%         &x_{u} = 0& \  \text{  $u \in \cup_{e \in E(G)}\{a_e, s_e, b_e\}$} \\
%         &\sum_{u\in C}x_u \geq 1& \  C\in\calC\\
%         &z_{u, e} \geq 0&\  \text{  $u \in V(H), e \in E(G)$}\\
%         &x_u \geq 0&\  \text{ $u \in V(H)$}
%     \end{align*}
%     \end{mdframed}
%     \caption{LP relaxation for FVS}
%   \label{fig:CM-lp-FVS}
% \end{figure}

\begin{proposition}\label{prop:cm-lp-fvs-stronger-label-set}
We have that
\begin{enumerate}
    \item For each vertex $u \in V_G$, we have that $\ell_H(u) = \delta_G(u)\cup \{e_r\}$;
    \item For each pivot vertex $u \in P$, we have that $\ell_H(u) = \delta_G(g(u))\cup \{e_r\}$.
\end{enumerate}
\end{proposition}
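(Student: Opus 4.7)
The plan is to exploit the structure of the subdivided graph $H$: in $H$ every vertex of $V_G$ has its neighborhood consisting entirely of pivot vertices, and every pivot vertex has exactly two neighbors, one of which is its associated terminal. Both claims then follow from a direct analysis of the first terminal encountered on any path out of the vertex in question.

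For part $1$, given $u \in V_G$, I would observe that every neighbor of $u$ in $H$ is some pivot $a_e$ (or $b_e$) associated with an edge $e \in \delta_G(u)$, and that each such pivot has exactly one other neighbor besides $u$, namely the terminal $s_e$. Hence any path that leaves $u$ encounters $s_e$ as its first terminal for some $e \in \delta_G(u)$, so only terminals indexed by edges of $\delta_G(u)$ can belong to $\ell_H(u)$. Conversely, the length-two path $u, a_e, s_e$ witnesses reachability of $s_e$ from $u$ without passing through any other terminal, for each $e \in \delta_G(u)$. Combined with the always-included label $e_r$ (i.e., $k+1$), this yields $\ell_H(u) = \delta_G(u) \cup \{e_r\}$.

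For part $2$, given $u \in P_H$, I would write $u = a_e$ for some edge $e = vw \in E_G$ (the case $u = b_e$ is symmetric), so that $g(u) = v$. The only neighbors of $u$ in $H$ are the terminal $s_e$ and the vertex $v$. Any terminal-free path from $u$ to a terminal either consists of the single edge $u \to s_e$, or starts $u \to v$ and continues as a terminal-free path out of $v$. Applying the argument of part $1$ to $v$, the second option reaches exactly the terminals $\{s_{e'} : e' \in \delta_G(v)\}$ without passing through any other terminal; and since $e \in \delta_G(v) = \delta_G(g(u))$, the terminal $s_e$ reached by the first option is already accounted for in this set. Together with the always-included $e_r$, this gives $\ell_H(u) = \delta_G(g(u)) \cup \{e_r\}$.

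There is no substantive obstacle in the argument: this is essentially a bookkeeping statement about the subdivision, and the only point warranting a careful statement is that the degree-two structure of pivots in $H$ forbids shortcuts — any attempt to leave $u$ through a pivot immediately hits the associated terminal, preventing us from reaching terminals indexed outside $\delta_G(u)$ (respectively $\delta_G(g(u))$) along terminal-free paths.
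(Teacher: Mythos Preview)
Your proof is correct and follows essentially the same approach as the paper: the paper's justification (given in the paragraph preceding the proposition) observes that since each edge of $G$ incident to $u$ has been subdivided with a terminal placed on it, the only terminals reachable from $u$ (resp.\ from a pivot $u$) without first passing through another terminal are exactly the $s_e$ for $e \in \delta_G(u)$ (resp.\ $e \in \delta_G(g(u))$). Your write-up spells out the degree-two structure of the pivots a bit more explicitly, but the argument is the same.
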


Using \Cref{prop:cm-lp-fvs-stronger-label-set} and the first observation mentioned above, we simplify the (\textsc{SFVS-IP: CM}) in \Cref{fig:CM-lp-SFVS} for the SFVS instance $H$ and write its LP-relaxation (\textsc{FVS-LP: CM}) in \Cref{fig:CM-lp-FVS}. 
\begin{figure}[H]
    \centering
    \begin{mdframed}
    \textbf{LP-relaxation of (\textsc{SFVS-IP: CM}) for graph $H$:}
    \begin{align*}
        \text{min}\qquad&\displaystyle\sum\limits_{u \in \Tilde{V}_H} c_{u}x_{u}&\\
        \text{s.t.}\qquad&&\\
        (1)\ \ &x_u + \sum_{e \in \ell_H(u)}z_{u,e} = 1& \forall u\in \Tilde{V}_H\cup P_H\\
        (2)\ \ &z_{a_e, e} + z_{b_e, e} = 1& \forall e \in E_G\\
        (3) \ \ &x_u + z_{u,e} - z_{v, e} \geq 0&  \forall   uv\in \Tilde{E}_H, \  e\in E_G\cup\{e_r\} \\
        %(4)\ \ &z_{u,e} - z_{v, e} \geq 0&     uv\in \Tilde{E}_H : u \in  P_H\cup\{r\}, \ e\in E_G \\
        %&z_{s_i, i} = 1& \  \text{ $i \in [k]$}\\
        %&z_{r, k+1} = 1&\\
        %&x_{u} = 0& u\in S_H \cup P_H \\
        (4)\ \ &\sum_{u\in C}x_u \geq 1& \ \forall C\in\calC_H\\
        (5)\ \ &x_u = 0&\  \text{ $\forall u \in P_H$}\\
        (6)\ \ &z_{u, e} \geq 0&\  \text{  $\forall u \in \Tilde{V}_H\cup P_H, e\in E_{G}\cup \{e_r\}$}\\
        (7)\ \ &x_u \geq 0&\  \text{ $\forall u \in \Tilde{V}_H$}
    \end{align*}
    \end{mdframed}
    \caption{
    (\textsc{FVS-LP: CM})
    %Chekuri-Madan LP relaxation for FVS
    }
  \label{fig:CM-lp-FVS}
\end{figure}

We now show that the constraints of the LP in \Cref{fig:CM-lp-FVS} imply the orientation and cycle cover constraints for $G$. 

\begin{lemma}\label{lem:cm-implies-orientation}
Let $(\overline{x}, \overline{z})$ be a feasible solution to (\textsc{FVS-LP: CM}) from \Cref{fig:CM-lp-FVS}. Then,
\begin{enumerate}
\item $\overline{x}\in \cyclecoverpolyhedron(G)$ and 
\item  
$(\overline{x}, \overline{y}) \in \orientationpolyhedron(G)$, where $\overline{y}_{e, u}=\overline{z}_{u, e}$ for all $e\in \delta(u), u\in V_G$. 
\end{enumerate}
\end{lemma}
\begin{proof}
We have that $\overline{x}\in \cyclecoverpolyhedron(G)$ owing to inequality (4). We now show that $(\overline{x}, \overline{y}) \in \orientationpolyhedron(G)$.
We have that $\overline{x}_u\ge 0$ for all $u\in V_G$ and $\overline{y}_{e, u}\ge 0$ for all $e\in \delta_G(u), u\in V_G$. We now show that $\overline{x}_u + \sum_{e\in \delta_G(u)}\overline{y}_{e,u}\le 1$ for every $u\in V_G$. Let $u\in V$. We have that 
\[
\overline{x}_u + \sum_{e\in \delta_G(u)}\overline{y}_{e,u}=\overline{x}_u + \sum_{e\in \ell_H(u)}\overline{z}_{u, e}-\overline{z}_{u, e_r}= 1-\overline{z}_{u, e_r}\le 1,
\]
where the last inequality holds since $\overline{z}_{u, e_r}\ge 0$. 

Finally, we show that $\overline{x}_u + \overline{x}_v + \overline{y}_{e, u} +\overline{y}_{e,v}\ge 1$ for every $e=uv\in E_G$. Let $e=uv\in E_G$. Let $u, a_e, s_e, b_e,v$ be the path in the graph $H$ corresponding to the subdivided edge $e$. Then, we have that 
\begin{align*}
\overline{x}_u + \overline{x}_v + \overline{y}_{e, u} + \overline{y}_{e, v}& =
    \overline{x}_u + \overline{x}_v + \overline{z}_{u, e} + \overline{z}_{v, e}&\\
    &= (\overline{x}_u + \overline{x}_v + \overline{z}_{u, e} + \overline{z}_{v, e}) + (\overline{z}_{a_e, e} + \overline{z}_{b_e, e}) - (\overline{z}_{a_e, e} + \overline{z}_{b_e, e}) &\\
    &= 1 + ( \overline{x}_u + \overline{z}_{u,e} - \overline{z}_{a_e, e})+ ( \overline{x}_v + \overline{z}_{v,e} - \overline{z}_{b_e, e})&\\
    & \geq 1.&
\end{align*}
Here, the third equality follows from constraint (2) and the final inequality follows from constraint (3) of (\textsc{FVS-LP: CM}) in \Cref{fig:CM-lp-FVS}.
\end{proof}

\Cref{lem:cm-implies-orientation}  implies that $\min\{\sum_{u\in V}c_u x_u: x\in \projectedorientationpolyhedron(G)\cap \cyclecoverpolyhedron(G)\}$ is a relaxation of (\textsc{FVS-LP: CM}) from \Cref{fig:CM-lp-FVS}.  \Cref{lemma:integrality-gap-orient-intersect-cycle-cover} tells us that the integrality gap of $\min\{\sum_{u\in V}c_u x_u: x\in \projectedorientationpolyhedron(G)\cap \cyclecoverpolyhedron(G)\}$ is at most $2$. Consequently, the integrality gap of (\textsc{FVS-LP: CM}) from \Cref{fig:CM-lp-FVS} is also at most $2$. We note that (\textsc{FVS-LP: CM}) from \Cref{fig:CM-lp-FVS} can be converted to a polynomial-sized LP by replacing inequality (4) using a polynomial-sized description of cycle cover constraints as given in \Cref{lemma:cycle-cover-poly-sized-lp}.

\subsection{Orientation formulation}\label{sec:orientation-formulation}
% \subsection{Orientation based Formulation}\label{sec:orientation-formulation}
In this section, we give another polynomial-sized ILP for FVS whose LP-relaxation has integrality gap at most $2$. This ILP is based only on orientation constraints and does not require cycle cover constraints (in contrast to the ILP presented in Section \ref{sec:orientation-and-cycle-cover-formulation}). We consider the following formulation and denote it as the (\textsc{FVS-IP: Complete-Orient}):
\begin{align}
\min & \sum_{u\in V} c_u x_u \nonumber\\
x_v + x_w + y^{f}_{e, v}+y^f_{e, w}&\ge 1\ \forall\ e=vw\in E, f\in E \label{constraint:orientation}\\
x_v+\sum_{e=vw\in E}y^f_{e,w}&\ge 1\ \forall\ v\in V, f\in E \label{constraint:load}\\
\sum_{v\in V-\{a,b\}}x_v + \sum_{e=vw\in E-\{f\}}(y^f_{e,w}+y^f_{e,v})&\le |V|-2\ \forall\ f=ab\in E \label{constraint:cumulative}\\
x_u&\ge 0\ \forall u\in V \label{constraint:x-nonnegative}\\
y^f_{e,v}&\ge 0\ \forall e\in\delta(v), v\in V, f\in E \label{constraint:y-nonnegative}\\
x_u&\in \Z\ \forall u\in V \nonumber\\
y^f_{e,v}&\in \Z\ \forall e\in\delta(v), v\in V, f\in E. \nonumber
\end{align}

\Cref{thm:poly-sized-LP-with-integrality-gap-atmost-2-for-FVS} follows from the following lemma. 

%\commentop{Chandra: Can we move the formulation to outside the lemma statement? Describe it, give it a name and then state the Lemma saying that the integrality gap of the named LP is at most 2?}
\begin{lemma}\label{lemma:strong-density-orientation-lp}
For an input graph $G=(V, E)$ with non-negative vertex costs $c:V\rightarrow \R_{\ge 0}$, 
(\textsc{FVS-IP: Complete-Orient}) is an integer linear programming formulation for FVS. Moreover, its LP-relaxation has integrality gap at most $2$.  
\end{lemma}
\begin{proof}
Let $x$ be the indicator vector of a feedback vertex set $S$ of $G$. 
We show that there exists a binary vector $y$ satisfying the constraints. Let $T$ be a spanning tree of $G$ containing all edges of $G$ that are not incident to vertices in $S$. 
Fix an edge $f=ab\in E$. 
For each vertex $v\in V\setminus (S\cup \{a, b\})$, set $y^f_{e,w}=1$ where $e=vw$ is the first edge on the unique path from $v$ to $a$ in $T$. Moreover, set $y^f_{f,a}=y^f_{f,b}=1$ and all other $y^f$ variables to $0$. We prove that this choice of $y$ satisfies all constraints. 

For an edge $e=vw\in E$, if $|S\cap \{v,w\}|\ge 1$, then $x_v+x_w\ge 1$ and if $v, w\not\in S$, then $e\in T$ and consequently, either $y_{e,v}^f=1$ or $y^f_{e,w}=1$ and hence, the constraint $x_v + x_w + y^{f}_{e, v}+y^f_{e, w}\ge 1$ holds. For a vertex $v\in V$, if $v\in S$, then $x_v\ge 1$ and if $v\in V-S$, then there exists an edge $e=vw\in T$ and hence, $y^f_{e,w}=1$ and consequently, the constraint $x_v+\sum_{e=vw\in E}y^f_{e,w}\ge 1$ holds. It remains to show that the constraint $\sum_{v\in V-\{a,b\}}x_v + \sum_{e=vw\in E-\{f\}}(y^f_{e,w}+y^f_{e,v})\le |V|-2$ holds. By the choice of $x$, it suffices to show that $|S\setminus \{a, b\}| + \sum_{e=vw\in T-\{f\}}(y^f_{e,w}+y^f_{e,v})\le |V|-2$ holds. By the choice of $y^f$, it suffices to show that $|S\setminus \{a, b\}| + \sum_{e=vw\in T}(y^f_{e,w}+y^f_{e,v})\le |V|$. Simplifying this further, it suffices to show that 
\[
\sum_{e=vw\in T}(y^f_{e,w}+y^f_{e,v})\le |(V-S)\cup\{a, b\}|.
\]
We note that $\sum_{e=vw\in T}(y^f_{e,w}+y^f_{e,v}) = \sum_{v\in V}(\sum_{e=vw\in T}y^f_{e,w})$. Hence, it suffices to show that 
\[
\sum_{v\in V}\left(\sum_{e=vw\in T}y^f_{e,w}\right) \le |(V-S)\cup\{a, b\}|.
\]
We have the following four inequalities:
\begin{align*}
\sum_{e=vw\in T}y^f_{e, w}&\le 1\ \forall v\in V-(S\cup \{a, b\}),\\
\sum_{e=vw\in T}y^f_{e, w}&= 0\ \forall v\in S-\{a, b\},\\
\sum_{e=aw\in T}y^f_{e, w}&= y^f_{e,b}=1, \text{ and}\\
\sum_{e=bw\in T}y^f_{e, w}&= y^f_{e,a}=1.
\end{align*}
Hence, $\sum_{v\in V}\left(\sum_{e=vw\in T}y^f_{e,w}\right) \le |(V-S)\cup\{a, b\}|$. 

Next, we show that an integral solution to the system of inequalities (\ref{constraint:orientation}), (\ref{constraint:load}), (\ref{constraint:cumulative}), (\ref{constraint:x-nonnegative}), and (\ref{constraint:y-nonnegative}) is the indicator vector of a feedback vertex set. 
%and bound the integrality gap of the LP-relaxation. 
We will use the following claim. 
%for both. 
\begin{claim}\label{claim:orientation-constraints-imply-strong-density}
Constraints (\ref{constraint:orientation}), (\ref{constraint:load}), (\ref{constraint:cumulative}), (\ref{constraint:x-nonnegative}), and (\ref{constraint:y-nonnegative}) imply the strong density constraints. 
\end{claim}
\begin{proof}
Let $(x, y)$ be a solution satisfying constraints (\ref{constraint:orientation}), (\ref{constraint:load}), (\ref{constraint:cumulative}), (\ref{constraint:x-nonnegative}), and (\ref{constraint:y-nonnegative}). Let $S\subseteq V$ such that $E[S]\neq \emptyset$. We will show that 
\[
\sum_{u\in S}(d_S(u)-1)x_u \ge |E[S]|-|S|+1.
\]
Equivalently, we will show that 
\[
|E[S]|-|S| - \sum_{u\in S}(d_S(u)-1)x_u \le -1.
\]
Let us rewrite the LHS as
\[
|E[S]|-|S| - \sum_{u\in S}(d_S(u)-1)x_u = \sum_{vw\in E[S]} (1-x_v-x_w) + \sum_{v\in S}(x_v-1).
\]
Let us inspect each sum separately. Fix an arbitrary $f=ab\in E[S]$. We have that 
\begin{align*}
\sum_{vw\in E[S]} (1-x_v-x_w) 
&= \sum_{vw\in E[S]} \left(1-x_v-x_w-y^f_{e,w}-y^f_{e,v}\right) + \sum_{vw\in E[S]} \left(y^f_{e,v} + y^f_{e,w}\right)\\
&\le 1-x_a-x_b-y^f_{f,a}-y^f_{f,b}+\sum_{vw\in E[S]} \left(y^f_{e,v} + y^f_{e,w}\right)
\end{align*}
where the inequality follows from (\ref{constraint:orientation}) and the fact that $f=ab\in E[S]$. We also have that 
\begin{align*}
\sum_{v\in S}(x_v -1) 
&= \sum_{v\in S} \left(x_v -1 +\sum_{e=vw\in E}y^f_{e, w}\right) - \sum_{v\in S}\sum_{e=vw\in E} y^f_{e,w}\\
&\le \sum_{v\in V} \left(x_v -1 +\sum_{e=vw\in E}y^f_{e, w}\right) - \sum_{v\in S}\sum_{e=vw\in E} y^f_{e,w}
\end{align*}
where the last inequality follows from (\ref{constraint:load}). Before adding the two upper bounds, we also note that 
\[
\sum_{vw\in E[S]} \left(y^f_{e,v} + y^f_{e,w}\right) - 
\sum_{v\in S}\sum_{e=vw\in E} y^f_{e,w}
\le 0
\]
by (\ref{constraint:y-nonnegative}). 
Hence, 
\begin{align*}
\sum_{vw\in E[S]} (1-x_v-x_w) + \sum_{v\in S}(x_v -1) 
&\le 1-x_a-x_b-y^f_{f,a}-y^f_{f,b}+\sum_{v\in V} \left(x_v -1 +\sum_{e=vw\in E}y^f_{e, w}\right) \\
&= 1 - |V| + \sum_{v\in V-\{a, b\}}x_v + \sum_{vw\in E-\{f\}}(y^f_{e,v}+y^f_{e,w})\\
&\le -1
\end{align*}
where the last inequality follows from (\ref{constraint:cumulative}). 
\end{proof}
By the results of Chudak, Goemans, Hochbaum, and Williamson \cite{CHUDAK1998111}, every integral solution satisfying strong density constraints corresponds to the indicator vector of a feedback vertex set. Thus, by Claim \ref{claim:orientation-constraints-imply-strong-density}, an integral solution to the system of inequalities (\ref{constraint:orientation}), (\ref{constraint:load}), (\ref{constraint:cumulative}), (\ref{constraint:x-nonnegative}), and (\ref{constraint:y-nonnegative}) is the indicator vector of a feedback vertex set. This completes the proof that the formulation given in the lemma is indeed an integer linear programming formulation for FVS. 

Finally, we bound the integrality gap of the LP-relaxation. By the results of Chudak, Goemans, Hochbaum, and Williamson \cite{CHUDAK1998111}, we know that (\ref{FVS-IP: SD})
%$\min\{\sum_{u\in V} c_u x_u: x\in \strongdensitypolyhedron(G)\cap \Z^V\}$ 
is an ILP formulation for FVS and the integrality gap of its LP-relaxation 
(\ref{FVS-LP: SD}) 
%$\min\{\sum_{u\in V} c_u x_u: x\in \strongdensitypolyhedron(G)\}$ 
is at most $2$. 
We have already shown that the ILP in the lemma statement is a valid formulation for FVS. 
By Claim \ref{claim:orientation-constraints-imply-strong-density}, the LP-relaxation of the ILP in the lemma statement has integrality gap at most $2$.  
\end{proof}

\end{document}